\documentclass[runningheads,a4paper,orivec,envcountsame,envcountsect,final]{llncs}

%
%
\newcommand{\takeout}[1]{\empty}

%
%
\usepackage{ifdraft}
\ifdraft{
\usepackage[marginparwidth=2cm,nohead,nofoot,
            headsep = 1cm,
            footskip = 1cm,
            text={12.2cm,19.3cm},
            paperwidth=16.2cm,
            paperheight=23.3cm,
            marginparsep=1mm,
            marginparwidth=1.8cm,
            footskip=1cm,
            centering
            ]{geometry}
}{
  \usepackage[paperheight=235mm,paperwidth=155mm,textwidth=12.2cm,textheight=19.3cm]{geometry}
}

\setcounter{tocdepth}{2}

%
%

\takeout{ 

\usepackage[inline]{showlabels}
\showlabels{bibitem}

\usepackage{seqsplit}
\usepackage{xstring}
\usepackage{xcolor}
\usepackage{lscape}

\makeatletter

\makeatother
}

%
%
\usepackage[notref,notcite]{showkeys}

\usepackage{seqsplit}
\usepackage{xstring}
\newcommand{\defaultshowkeysformat}[1]{%
\StrSubstitute{#1}{ }{\textvisiblespace}[\TEMP]%
\parbox[t]{\marginparwidth}{\raggedright\normalfont\small\ttfamily\(\{\){\color{red!50!black}\expandafter\seqsplit\expandafter{\TEMP}}\(\}\)}%
}

\renewcommand*\showkeyslabelformat[1]{%
\noexpandarg%
\defaultshowkeysformat{#1}%
}

%
%

\makeatletter

%
%
\renewcommand\section{\@startsection{section}{1}{\z@}%
  {-18\p@ \@plus -4\p@ \@minus -4\p@}%
  {12\p@ \@plus 4\p@ \@minus 4\p@}%
  {\normalfont\large\bfseries\boldmath
    \rightskip=\z@ \@plus 8em\pretolerance=10000 }}
%
%
\renewcommand\subsection{\@startsection{subsection}{2}{\z@}%
 {-6\p@ \@plus -2\p@ \@minus -2\p@}%
 {-0.5em \@plus -0.22em \@minus -0.1em}%
 {\normalfont\normalsize\bfseries\boldmath}}                      

\newcommand\mysubsec{\@startsection{paragraph}{4}{\z@}%
  {-6\p@ \@plus -4\p@ \@minus -4\p@}%
  {-0.5em \@plus -0.22em \@minus -0.1em}%
  {\normalfont\normalsize\bfseries}}
\makeatother

\usepackage[utf8]{inputenc}
\usepackage[T1]{fontenc}
\usepackage[british]{babel}
\usepackage{xspace}
\usepackage{dsfont}
\usepackage{tabularx}
\usepackage{dsfont}

\usepackage[inline]{enumitem}
\setlist[enumerate,1]{label=(\arabic*),font=\normalfont,align=left,leftmargin=0pt,labelindent=0pt,listparindent=\parindent,labelwidth=0pt,itemindent=!,topsep=3pt,parsep=0pt,itemsep=3pt,start=1}
\setlist[enumerate,2]{label=(\alph*),font=\normalfont,labelindent=*,leftmargin=*,start=1}
\setlist[itemize]{labelindent=*,leftmargin=*,topsep=5pt,itemsep=3pt}
\setlist[description]{labelindent=*,leftmargin=*,itemindent=-1 em}


\usepackage[noadjust]{cite} 
\usepackage{graphicx,csquotes}
\usepackage[final]{hyperref}
\hypersetup{
  colorlinks=true,
  linkcolor=black,
  citecolor=black,
  filecolor=black,
  urlcolor=black,
  pdftitle={Coalgebraic Language Semantics for Nominal Automata},
  pdfauthor={Florian Frank, Stefan Milius, Henning Urbat},
  pdfkeywords={regular nominal nondeterministic automata, coalgebraic
    language semantics},
  pdfduplex={DuplexFlipLongEdge},
}
  
%
%
\ifdraft{%
  \makeatletter
  \setcounter{tocdepth}{2}
  \makeatother
}{}

\usepackage{amssymb,amsxtra,mathrsfs,mathtools,bm,stmaryrd}
\numberwithin{equation}{section}

\usepackage{tikz-cd}

\tikzstyle{shiftarr}=[
        rounded corners,%
        to path={--([#1]\tikztostart.center)
                     -- ([#1]\tikztotarget.center) \tikztonodes
                     -- (\tikztotarget)},
]



\usepackage[footnote,nomargin]{fixme} 
\FXRegisterAuthor{hu}{ahu}{HU} 
\FXRegisterAuthor{sm}{asm}{SM} 
\FXRegisterAuthor{rm}{arm}{RM} 

\usepackage{microtype}
\allowdisplaybreaks

\addto\extrasbritish{
}

\spnewtheorem{assumption}[theorem]{Assumption}{\bfseries}{\rmfamily}
\spnewtheorem{notation}[theorem]{Notation}{\bfseries}{\rmfamily}
\spnewtheorem{observation}[theorem]{Observation}{\bfseries}{\rmfamily}
\spnewtheorem{defn}[theorem]{Definition}{\bfseries}{\rmfamily}
\spnewtheorem{expl}[theorem]{Example}{\bfseries}{\rmfamily}
\spnewtheorem{rem}[theorem]{Remark}{\bfseries}{\rmfamily}
\spnewtheorem{construction}[theorem]{Construction}{\bfseries}{\rmfamily}
\spnewtheorem{examples}[theorem]{Examples}{\bfseries}{\rmfamily}
\spnewtheorem{example_}[theorem]{Example}{\bfseries}{\rmfamily}

\spnewtheorem*{HSP}{General HSP Theorem}{\bf}{\itshape}
\spnewtheorem*{CompThm}{General Completeness Theorem}{\bf}{\itshape}

%
%
\usepackage{etoolbox}
\makeatletter
\newenvironment{notheorembrackets}{%
\csdef{@spopargbegintheorem}##1##2##3##4##5{\trivlist%
      \item[\hskip\labelsep{##4##1\ ##2}]{##4{##3}\@thmcounterend\ }##5}%
    }{%
\csdef{@spopargbegintheorem}##1##2##3##4##5{\trivlist%
      \item[\hskip\labelsep{##4##1\ ##2}]{##4(##3)\@thmcounterend\ }##5}%
    }
\makeatother


%
%

\renewcommand{\S}{\mathscr{S}}

\newcommand{\Nom}{\mathsf{Nom}}

%
%

%
%

%
%
\newcommand{\mybar}[3]{%
  \mathrlap{\hspace{#2}\overline{\scalebox{#1}[1]{\phantom{\ensuremath{#3}}}}}\ensuremath{#3}
}

\newcommand{\barF}{\mybar{0.6}{2.5pt}{F}}
\newcommand{\barG}{\mybar{0.6}{2pt}{G}}
\newcommand{\barGF}{\mybar{0.85}{2pt}{GF}}
\newcommand{\barFpG}{\mybar{0.9}{2pt}{F + G}}
\newcommand{\barFtG}{\mybar{0.9}{2pt}{F \times G}}
\newcommand{\barAbs}{\mybar{0.8}{1.5pt}{[\names]}}

\newcommand{\hatF}{\widehat{F}}
\newcommand{\hatG}{\widehat{G}}

\newcommand{\lan}{\ddagger}

\newcommand{\dsS}{\mathds{S}}

%
%

\newcommand{\midmid}{\hspace{0.2ex}{\rule[-0.1ex]{0.6pt}{1.65ex}}\hspace{0.2ex}}
\newcommand{\scriptmidmid}{\hspace{0.2ex}{\rule[-0.1ex]{0.6pt}{1.1ex}}\hspace{0.2ex}}

\newcommand{\newletter}[1]{{\midmid}#1}
\newcommand{\scriptnew}[1]{{\scriptmidmid}#1}

\newcommand{\DCPOb}{\mathsf{DCPO}_\bot}

\newcommand{\ter}{\tau}
\newcommand{\ini}{\iota}
\newcommand{\set}[1]{\{#1\}}

\newcommand{\cpowfs}{\mathcal Q_{\fs}}
\newcommand{\opp}{\mathsf{op}}

\newcommand{\tr}{\mathsf{tr}}
\newcommand{\inl}{\mathsf{inl}}
\newcommand{\inr}{\mathsf{inr}}
\renewcommand{\rho}{\varrho}
\DeclareMathOperator{\con}{@}

\newcommand{\Set}{\mathbf{Set}}

\newcommand{\Alg}[1]{\mathsf{Alg(}#1\mathsf{)}}
\newcommand{\Coalg}[1]{\mathsf{Coalg(}#1\mathsf{)}}

\newcommand{\Kl}[1]{\mathsf{Kl}(#1)}
\newcommand{\EM}[1]{\mathsf{EM}(#1)}
\newcommand{\C}{\mathscr{C}}

\newcommand{\Y}{\mathscr{Y}}

\DeclareMathOperator{\supp}{\mathsf{supp}}

\newcommand{\Perm}{\mathrm{Perm}}
\newcommand{\names}{\mathbb{A}}
\newcommand{\barA}{{\mybar{0.7}{.75pt}{\names}}} 
\newcommand{\barstr}{\barA^*/\mathord{=_\alpha}}
\newcommand{\barlang}{\powfs(\barstr)}
\newcommand{\datalang}{\powfs(\names^*)}
\newcommand{\compr}{:}
\newcommand{\ufs}{{\mathsf{ufs}}}
\newcommand{\braket}[1]{\langle #1 \rangle}
\newcommand{\fs}{\mathsf{fs}}
\newcommand{\pow}{\mathcal{P}}
\newcommand{\powufs}{\pow_{\ufs}}
\newcommand{\powfs}{\pow_{\fs}}
\newcommand{\fresh}{\mathbin{\#}}

\renewcommand{\epsilon}{\varepsilon}
\newcommand{\eps}{\varepsilon}
\newcommand{\id}{\mathit{id}}
\newcommand{\Id}{\mathsf{Id}}
\newcommand{\seq}{\subseteq}
\newcommand{\xra}[1]{\xrightarrow{~#1~}}
\newcommand{\xla}[1]{\xleftarrow{~#1~}}

\renewcommand{\phi}{\varphi}

\newcommand{\ev}{\mathsf{ev}}

\newcommand{\hookto}{\hookrightarrow}
\newcommand{\subto}{\hookto}
\newcommand{\epito}{\twoheadrightarrow}

\newcommand{\monoto}{\rightarrowtail}

\newcommand{\Pow}{\mathcal{P}}


\newcommand{\xto}[1]{\xra{#1}}








\makeatletter
\newcommand{\xTo}[2][]{\ext@arrow 0359\Rightarrowfill@{#1}{#2}}
\makeatother

\title{Coalgebraic Semantics for Nominal Automata}
\titlerunning{Coalgebraic Semantics for Nominal Automata}

\author{Florian Frank \and Stefan Milius\thanks{Funded by the Deutsche
    Forschungsgemeinschaft (DFG, German Research Foundation) --
    project number 419850228} \and Henning Urbat$^\star$}
\authorrunning{F.~Frank, S.~Milius and H.~Urbat}

\institute{Friedrich-Alexander-Universit\"at Erlangen-N\"urnberg}

\begin{document}
\maketitle

\begin{abstract}
  This paper provides a coalgebraic approach to the language semantics
  of two types of non-deterministic automata over nominal sets:
  non-deterministic orbit-finite automata (NOFAs) and regular nominal
  non-deterministic automata (RNNAs), which were introduced in
  previous work. While NOFAs are a straightforward nominal version of
  non-deterministic automata, RNNAs feature ordinary as well as name
  binding transitions. Correspondingly, words accepted by RNNAs are
  strings formed by ordinary letters and name binding letters. Bar
  languages are sets of such words modulo $\alpha$-equivalence, and to
  every state of an RNNA one associates its accepted bar language. We
  show that the semantics of NOFAs and RNNAs, respectively, arise both
  as an instance of the Kleisli-style coalgebraic trace semantics as
  well as an instance of the coalgebraic language semantics obtained
  via generalized determinization. On the way we revisit coalgebraic
  trace semantics in general and give a new compact proof for the main
  result in that theory stating that an initial algebra for a functor
  yields the terminal coalgebra for the Kleisli extension of the
  functor. Our proof requires fewer assumptions on the functor than
  all previous ones.
\end{abstract}

\section{Introduction}\label{S:intro}

Classical automata and their language semantics have long been
understood in the theory of coalgebras. For example, it is a well-known
exercise~\cite{Rutten98} that standard deterministic automata over a
fixed alphabet can be modelled as coalgebras, that the terminal
coalgebra is formed by all formal languages over that alphabet, and
the unique homomorphism into the terminal coalgebra assigns to each
state of an automaton the language it accepts. Non-deterministic
automata are also coalgebras for a functor extending the one for
deterministic automata in order to accomodate non-deterministic
branching. Their language semantics can be obtained coalgebraically in
two different ways. First, in the \emph{coalgebraic trace semantics}
by Hasuo et al.~\cite{HasuoEA07} one considers coalgebras for composed
functors $TF$ where $F$ is a set functor modelling the type of
transitions and $T$ is a set monad modelling the type of branching;
for example, for non-deterministic branching one takes the power-set
monad. Under certain conditions on $F$ and $T$, including that $F$ has
an extension $\barF$ to the Kleisli category of $T$, an initial
$F$-algebra is seen to lift to the terminal coalgebra for $\barF$. Its
universal property then yields the coalgebraic trace semantics. Among
the instances of this is the standard language semantics of
non-deterministic automata.

Second the \emph{coalgebraic language semantics}~\cite{bms13} is based on
generalized determinization by Silva et al.~\cite{SilvaEA13}. Here one
considers coalgebras for composed functors~$GT$ where $G$ models
transition types and $T$ again models the branching type. Assuming
that $G$ has a lifting to the Eilenberg-Moore category for $T$, generalized
determinization turns such a coalgebra into a $G$-coalgebra by taking
the unique extension of the coalgebra structure to the free
Eilenberg-Moore algebra on the set of states. Moreover, taking the
unique homomorphism from that coalgebra into the terminal
$G$-coalgebra yields the coalgebraic language semantics. In the
leading instance of non-deterministic automata, generalized
determinization is the well-known power-set construction and
coalgebraic language semantics the standard automata-theoretic
language semantics once again.

These two approaches were brought together by Jacobs et
al.~\cite{JacobsEA15} who study those species of systems which can be
modelled as coalgebras in both of the above ways. They show that
whenever there exists an \emph{extension} natural transformation
$TF \to GT$ satisfying two natural equational laws, then the two above
semantics are canonically related, and they agree in the
instances studied in op.~cit.

It is our aim in this paper to draw a similar picture for
non-deterministic automata for languages over infinite alphabets. Such
alphabets allow to model \emph{data}, such as nonces~\cite{KurtzEA07},
object identities~\cite{GrigoreEA13}, or abstract
resources~\cite{CianciaSammartino14}, and the ensuing languages are
therefore called \emph{data languages}. There are several species of
automata for data languages in the literature. We focus on two types
which are known to have a presentation as coalgebras over the category
of nominal sets: non-deterministic orbit-finite automata
(NOFA)~\cite{BojanczykEA14} and regular non-deterministic nominal
automata (RNNA)~\cite{SchroderEA17}. For both of these types of
automata one works with the category of nominal sets and takes the set
of \emph{names} as the alphabet. While NOFAs are a straightforward
nominal version of standard non-deterministic automata, RNNAs feature
\emph{binding transitions}, which can be thought as storing an input
name in a `register' for comparison with future input
names. Correspondingly, they accept words including name binding
letters and which are taken modulo $\alpha$-equivalence; such words
form \emph{bar languages} (the name stems from the bar in front of
name binding letters $\newletter a$). However, while these automata
are understood as coalgebras, their semantics has not been studied
from a coalgebraic perspective so far.

We fill this gap here and prove that the data language accepted by a NOFA and the bar
language accepted by an RNNA arise as instances of both coalgebraic trace semantics
\renewcommand{\theoremautorefname}{Theorems}%
(\autoref{T:NOFA-Kl} and~\ref{T:RNNA-Kl})
\renewcommand{\theoremautorefname}{Theorem}%
and coalgebraic language semantics
\newcommand{\corollaryautorefname}{Corollaries}%
(\autoref{C:NOFA-EM} and~\ref{C:RNNA-EM}).
\renewcommand{\corollaryautorefname}{Corollary}%
The latter result is obtained by using canonical extension natural
transformations obtained from the result by Jacobs et al.~\cite{JacobsEA15}. 

While these results will perhaps hardly surprise the cognoscenti, and
the treatment of NOFAs indeed appears as an(other) exercise in coalgebra, we
should like to point out that there are a number of technical
subtleties arising in the treatment of RNNAs. Essentially, what causes
some trouble is the presence of the abstraction functor in their type. We
solve all these difficulties by working with the uniformly finitely
supported power-set monad $\powufs$ on nominal sets in lieu of the
more common finitely supported power-set monad $\powfs$ (which provides
the power objects of the topos of nominal sets). Note also that for a
nominal set $X$, neither $\powfs X$ nor $\powufs X$ form cpos (so, in
particular, they do not form complete lattices). Hence, it may come as
a bit of a surprise that the Kleisli categories of both monads are
nevertheless enriched over complete lattices (\autoref{P:dcpo}), one of
the key requirements for coalgebraic trace semantics.

We present our results in a modular way so that they may be reusable
for the study of coalgebraic semantics for other types of nominal
systems, such as nominal tree automata. For example, we show that all
\emph{binding polynominal functors}, e.g.~those functors arising from
a binding signature in the sense of Fiore et al.~\cite{FioreEA99} have a
canonical extension to the Kleisli category of $\powufs$
(\autoref{cor:extension-pufs}). Analogously, we show a lifting result
for terminal coalgebras to the Eilenberg-Moore category for a subclass
of these functors (\autoref{C:lift-nu}).

Last but not least, on the way to the coalgebraic semantics of NOFAs
and RNNAs we take a fresh look at coalgebraic trace semantics in
general. We provide a new compact proof for the main theorem of that
theory. It states that for a functor $F$ and a monad $T$ satisfying
certain conditions, including that $F$ has an extension $\barF$ to the
Kleisli category of $T$, the initial $F$-algebra extends to a terminal
coalgebra for $\barF$ (\autoref{T:Kl}). We obtain this essentially as
a combination of Hermida and Jacobs' adjoint lifting
theorem~\cite[Thm.~2.14]{HermidaJ98} and an argument originally given
by Freyd~\cite{Freyd92} that for locally continuous endofunctors on
categories enriched in cpos an initial algebra yields a terminal
coalgebra. Here we adjust this argument to work for locally monotone
endofunctors on categories enriched in directed-complete partial
orders. As a consequence, our proof does not require the existence of
a zero object in the Kleisli category of $T$ and, notably, we only
need the mere existence of the initial algebra for $F$ and not that it is
obtained after $\omega$ steps of the initial-algebra chain given by
$F^n 0$ ($n <\omega$).


\section{Preliminaries}\label{S:prelim}

\subsection{Nominal Sets}
\hunote[inline]{Copy \& paste from our CONCUR paper.}

Nominal sets form a convenient formalism for dealing with names and
freshness; for our present purposes, names play the role of data. We
briefly recall basic notions and facts and refer to Pitts'
book~\cite{Pitts13} for a comprehensive introduction. Fix a countably
infinite set $\names$ of \emph{names}, and let $\Perm(\names)$ denote
the group of finite permutations on $\names$, which is generated by
the \emph{transpositions} $(a\, b)$ for $a\neq b\in\names$ (recall
that $(a\, b)$ just swaps~$a$ and~$b$). A \emph{nominal set} is a
set~$X$ equipped with a (left) group action
$\Perm(\names)\times X\to X$, denoted $(\pi,x)\mapsto \pi\cdot x$,
such that every element $x\in X$ has a finite
\emph{support}~$S\subseteq\names$, i.e.~$\pi\cdot x=x$ for every
$\pi\in \Perm(\names)$ such that $\pi(a)=a$ for all $a\in S$. Every
element~$x$ of a nominal set $X$ has a least finite support, denoted
$\supp(x)$. Intuitively, one should think of $X$ as a set of syntactic
objects (e.g.~strings, $\lambda$-terms, programs), and of $\supp(x)$
as the set of names needed to describe an element $x\in X$. A name
$a\in\names$ is \emph{fresh} for~$x$, denoted $a\fresh x$, if
$a\notin\supp(x)$. The \emph{orbit} of an element $x\in X$ is given by
$\{ \pi\cdot x: \pi\in\Perm(\names)\}$. The orbits form a partition of
$X$. The nominal set $X$ is \emph{orbit-finite} if it has only
finitely many orbits.

A map $f\colon X\to Y$ between nominal sets is \emph{equivariant} if
$f(\pi\cdot x)=\pi\cdot f(x)$ for all $x\in X$ and
$\pi\in \Perm(\names)$. Equivariance implies
$\supp(f(x))\seq \supp(x)$ for all $x\in X$. We denote by $\Nom$ the category of nominal sets and equivariant
maps.

Putting $\pi\cdot a = \pi(a)$ makes $\names$ into a nominal
set. Moreover,~$\Perm(\names)$ acts on subsets $A\subseteq X$ of a
nominal set~$X$ by $\pi\cdot A = \{\pi\cdot x \compr x \in A\}$. A
subset $A\seq X$ is \emph{equivariant} if $\pi\cdot A=A$ for all
$\pi\in \Perm(\names)$. More generally, it is \emph{finitely
  supported} if it has finite support w.r.t.\ this action, i.e.~there
exists a finite set $S\seq \names$ such that $\pi\cdot A = A$ for all
$\pi\in \Perm(\names)$ such that $\pi(a)=a$ for all $a\in S$. The set
$A$ is \emph{uniformly finitely supported} if
$\bigcup_{x\in A} \supp(x)$ is a finite set. This implies that $A$ is
finitely supported, with least support
$\supp(A)=\bigcup_{x\in A}
\supp(x)$~\cite[Theorem~2.29]{gabbay2011}. (The converse does not
hold, e.g.~the set $\names$ is finitely supported but not uniformly
finitely supported.) Uniformly finitely supported orbit-finite sets
are always finite (since an orbit-finite set contains only finitely
many elements with a given finite support).  We denote by $\Pow_\ufs\colon \Nom\to \Nom$ and
$\powfs\colon \Nom\to \Nom$ the endofunctors sending a nominal set $X$ the its set of (uniformly) finitely supported subsets and an equivariant map
$f\colon X\to Y$ to the map $A\mapsto f[A]$. 

The coproduct $X+Y$ of nominal sets $X$ and $Y$ is given by their
disjoint union with the group action inherited from the two
summands. Similarly, the product $X \times Y$ is given by the
cartesian product with the componentwise group action; we have
$\supp(x,y) = \supp(x) \cup \supp(y)$. Given a nominal set $X$
equipped with an equivariant equivalence relation, i.e.~an equivalence
relation $\sim$ that is equivariant as a subset
$\mathord{\sim} \subseteq X \times X$, the quotient $X/\mathord{\sim}$
is a nominal set under the expected group action defined by
$\pi \cdot [x]_\sim = [\pi \cdot x]_\sim$.

A key role in the theory of nominal sets is played by
\emph{abstraction sets}, which provide a semantics for binding
mechanisms~\cite{GabbayPitts99}. Given a nominal set $X$, an equivariant equivalence relation $\sim$ on
  $\names \times X$ is defined by $(a,x)\sim (b,y)$ iff
  $(a\, c)\cdot x=(b\, c)\cdot y$ for some (equivalently, all)
  fresh~$c$. The \emph{abstraction set} $[\names]X$ is the quotient
  set $(\names\times X)/\mathord{\sim}$. The $\sim$-equivalence class
  of $(a,x)\in\names\times X$ is denoted by
  $\braket{a} x\in [\names]X$. We may think of~$\sim$ as an abstract notion of $\alpha$-equivalence,
and of~$\braket{a}$ as binding the name~$a$. Indeed we have
$\supp(\braket{a} x)= \supp(x)\setminus\{a\}$ (while
$\supp(a,x)=\{a\}\cup\supp(x)$), as expected in binding constructs.

 The object map $X\mapsto [\names]X$ extends to an endofunctor
$[\names]\colon \Nom \to \Nom$
sending an equivariant map $f\colon X\to Y$ to the equivariant map $[\names]f\colon [\names]X\to [\names]Y$ given by $\braket{a}x\mapsto \braket{a}f(x)$ for $a\in \names$ and $x\in X$.

\subsection{Nominal Automata}\label{sec:nom-aut}
In this section, we recall two notions of nominal automata earlier
introduced in the literature: non-deterministic orbit-finite automata
(NOFAs)~\cite{BojanczykEA14} and regular non-deterministic nominal
automata (RNNAs)~\cite{SchroderEA17}. The former accept \emph{data
  languages} (consisting of finite words over an infinite alphabet)
while the latter accept \emph{bar languages} (consisting of finite
words formed by ordinary letters and name binding ones, taken modulo
$\alpha$-equivalence).

\begin{notheorembrackets}
  \begin{defn}[\cite{BojanczykEA14}]
    (1)~A \emph{NOFA} $A=(Q,R,F)$ is given by an orbit-finite nominal set $Q$
  of \emph{states}, an equivariant relation
  $R\seq Q\times \names \times Q$ specifying \emph{transitions}, and
  an equivariant set $F\seq Q$ of
  \emph{final states}. We write $q \xra{a} q'$ in lieu of $(q,a,q')
  \in R$.
  
  \begin{enumerate}\stepcounter{enumi}
\item  Given a string
    $w=a_1a_2\cdots a_n\in \names^*$ and a state $q\in Q$, a \emph{run} for
    $w$ from $q$ is a sequence of transitions 
    \(
      q\xto{a_1}q_1\xto{a_2}\cdots \xto{a_n}q_n.
    \)
    The run is \emph{accepting} if $q_n$ is final. The state $q$
    \emph{accepts} $w$ if there exists an accepting run for $w$ from
    $q$. The data language \emph{accepted} by $q$ is given by $\{ w\in \names^*: \text{$q$ accepts $w$}\}$.
  \end{enumerate}
\end{defn}
\end{notheorembrackets}
NOFAs are known to be expressively equivalent to \emph{finite memory
  automata}~\cite{KaminskiFrancez94}. We note that in contrast to \cite{BojanczykEA14} we do not require NOFAs to have an initial state $q_0\in Q$; this is more natural from a coalgebraic point of view. Moreover, the orbit-finiteness of the states is not relevant for our results and could be dropped.

\begin{rem}\label{rem:nofa-as-coalgebras}
\begin{enumerate}
\item%
  \smnote{In general, we should not start sentences with `recall' if
    we don't say \emph{from where} one should this recall; this is the
    arrogant mathematicians way of assuming s.th.~about the knowledge
    of the reader.}  Given an endofunctor $F$ on a category $\C$, an
  \emph{$F$-coalgebra} is a pair $(C,c)$ of an object $C$ and a
  morphism $c\colon C\to FC$ on $\C$. A \emph{homomorphism} of
  $F$-coalgebras from
  $(C,c)$ to $(D,d)$ is a morphism $h\colon C\to D$ with $d\cdot h = Fh\cdot c$.
  
\item A NOFA corresponds precisely to an
  orbit-finite coalgebra
  \(
    \langle f, \delta\rangle\colon Q\longrightarrow 2\times \powfs(\names\times Q)
  \)
  for the functor on $\Nom$ given by
  \[
    Q\mapsto \powfs(1+\names\times Q)\cong 2\times \powfs(\names\times
    Q).
  \]
  In fact, $f\colon Q\to 2$ defines the equivariant set $F\seq Q$ of
  final states and $\delta\colon Q\to \powfs(\names\times Q)$ defines
  the transitions via
  $q\xto{a}q'$ iff $(a,q')\in \delta(q)$.
\end{enumerate}
\end{rem}

In order to incorporate explicit name binding into the
automata-theoretic setting, we work with \emph{bar strings},
i.e. finite words over the infinite alphabet
\[
  \barA := \names \cup \{\newletter a: a\in \names\}.
\]
We denote the nominal set of all bar strings by $\barA^*$, and we
equip it with the group action defined pointwise. The letter
$\newletter a$ is interpreted as binding the name~$a$ to the
right. Accordingly, a name $a\in \names$ is said to be \emph{free} in
a bar string $w\in \barA^*$ if (1)~the letter $a$ occurs in $w$, and
(2)~the first occurrence of $a$ is not preceded by any occurrence of
$\newletter a$. For instance, the name $a$ is free in
$a\newletter aba$ but not free in $\newletter aaba$, while the name
$b$ is free in both bar strings. This yields a natural notion of
$\alpha$-equivalence:

\begin{defn}[$\alpha$-equivalence]\label{def:alpha-fin}
  Let $=_\alpha$ be the least equivalence relation on~$\barA^*$ such
  that
  $x\newletter av =_\alpha x\newletter bw$ 
  for all $a,b\in \names$ and $x,v,w\in \barA^*$ such that
      $\braket{a} v = \braket{b} w$.
  We denote by $\barstr$ the sets of $\alpha$-equivalence classes of
   bar strings, and we write $[w]_\alpha$ for the
  $\alpha$-equivalence class of $w\in \barA^*$.
\end{defn}
\begin{rem}\label{rem:alpheq-technical}
  \begin{enumerate}
  \item By Pitts~\cite[Lem.~4.3]{Pitts13}, for every pair
    $v,w\in \barA^*$ the condition $\braket{a} v = \braket{b} w$ holds
    if and only if
    \[
      \text{$a=b$ and $v=w$,}
      \qquad \text{or}\qquad
      \text{$b\fresh v$ and $(a\, b)\cdot v = w$.}
    \]
  \item The equivalence relation $=_\alpha$ is equivariant. Therefore,
    $\barA^*/{=_\alpha}$ forms a nominal set with the group
    action $\pi\cdot [w]_\alpha = [\pi\cdot w]_\alpha$ for
    $\pi\in \Perm(\names)$ and $w\in \barA^*$.  The least support of
    $[w]_\alpha$ is the set of free names of $w$.
  \end{enumerate}
\end{rem}
\makeatletter
\newcommand\gobblepars{%
    \@ifnextchar\par%
        {\expandafter\gobblepars\@gobble}%
        {}}
\makeatother
\begin{notheorembrackets}
  \begin{defn}[\cite{SchroderEA17}]%
    (1)~An \emph{RNNA} $A=(Q,R,F)$ is given by an orbit-finite
    nominal set $Q$ of \emph{states}, an equivariant relation
    $R\seq Q\times \barA \times Q$ specifying \emph{transitions}, and an equivariant set $F\seq Q$
    of \emph{final states}. We write $q\xto{\sigma}q'$ if
    $(q,\sigma,q')\in R$. The transitions are subject to two
    conditions:
    \begin{enumerate}[label=(\alph*)]
    \item \emph{$\alpha$-invariance}: if $q\xto{\scriptnew a}q'$ and
      $\braket{a} q'=\braket{b} q''$, then $q\xto{\scriptnew b}q''$.
        
    \item \emph{Finite branching up to $\alpha$-invariance:} For every
      $q\in Q$ the sets
      \[
        \set{(a,q') \compr q\xto{a}q'}
        \qquad\text{and}\qquad
        \set{\braket{a} q' \compr q\xto{\scriptnew a} q'}
      \]
      are finite (equivalently, uniformly finitely supported).
    \end{enumerate}
    
    \begin{enumerate}\stepcounter{enumi}
    \item Given a bar string
      $w=\sigma_1\sigma_2\cdots \sigma_n\in \barA^*$ and a state
      $q\in Q$, a \emph{run} for $w$ from $q$ is a sequence of
      transitions
      \(
        q\xto{\sigma_1}q_1\xto{\sigma_2}\cdots \xto{\sigma_n}q_n.
      \)
      The run is \emph{accepting} if $q_n$ is final. The state $q$
      \emph{accepts} $w$ if there exists an accepting run for $w$ from
      $q$. The bar language \emph{accepted} by $q$ is given
      by $\{ [w]_\alpha \compr w\in \barA^*, \, \text{$A$ accepts $w$} \}$.
    \end{enumerate}
  \end{defn}
\end{notheorembrackets}

\begin{rem}\label{rem:rnna-as-coalgebras}
  \begin{enumerate}
    As for NOFAs, we do not equip RNNAs with explicit initial states. Similar to \autoref{rem:nofa-as-coalgebras}, RNNAs are seen to correspond to
    coalgebras
    \(
      \langle f, \delta, \tau\rangle\colon Q\longrightarrow 2\times \Pow_\ufs(\names\times Q)\times
      \Pow_\ufs([\names]Q)
    \)
    for the functor on $\Nom$ given by
    \[
      Q\mapsto \powufs(1+\names\times Q+[\names]Q)
      \cong
      2\times \powufs (\names\times Q) \times \powufs([\names] Q).
    \]
    Here $f$ and $\delta$ correspond to final states and free
    transitions, and the equivariant map
    $\tau\colon Q\to \powufs([\names]Q)$ defines the $\alpha$-invariant
    bound transitions via
    $q\xto{\scriptnew{a}} q'$ iff $\braket{a}q'\in \tau(q)$.
    The use of $\powufs$ (in lieu of $\powfs$) ensures that if $Q$ is
    orbit-finite, then the finiteness conditions in the definition of
    an RNNA are met.

However, we note that while our results on coalgebraic semantics are stated for RNNAs they actually hold without orbit-finiteness assumptions.
  \end{enumerate}
\end{rem}

Our goal is to interpret the above ad-hoc definition of the data
languages of a NOFA and the bar languages of an RNNA within the
coalgebraic framework. 

%

\subsection{Initial algebras in $\DCPOb$-enriched categories}
\label{S:DCPOb}

For the Kleisli-style coalgebraic trace semantics we shall make use of
a result which shows that in categories where the hom-sets are
enriched over directed-complete partial orders, the initial algebra and
terminal coalgebra coincide.

Recall that a subset $D \subseteq P$ of a poset $P$ is \emph{directed}
if every finite subset of $D$ has an upper bound in $D$; equivalently,
$D$ is nonempty and for every $x,y \in D$, there exists a $z \in D$
with $x,y \leq z$. The poset $P$ is a \emph{dcpo with bottom} if it
has a least element and \emph{directed joins}, that is, every directed
subset has a join in $P$. We write $\DCPOb$ for the category of dcpos
with bottom and continuous maps between them; a map is
\emph{continuous} if it is monotone and preserves directed joins.

\begin{defn}
  \begin{enumerate}
  \item A category $\C$ is \emph{left strictly $\DCPOb$-enriched} provided that each
    hom-set is equipped with the structure of a dcpo with bottom, and
    composition preserves bottom on the left and is \emph{continuous}: for every morphism
    $f$ and appropriate directed sets of morphisms $g_i$ ($i \in D$)
    we have
    \[
      \textstyle
      \bot \cdot f = \bot, \qquad
      f \cdot \bigvee_{i\in D} g_i = \bigvee_{i\in D} f \cdot g_i,\qquad
      \big(\bigvee_{i \in D} g_i\big) \cdot f = \bigvee_{i\in D} g_i
      \cdot f.
    \]
  \item A functor on $\C$ is \emph{locally monotone} if its
    restrictions $\C(A,B) \to \C(FA,FB)$ to the hom-sets are monotone.
  \end{enumerate}
\end{defn}
\begin{notheorembrackets}
\begin{theorem}[{\cite[Prop.~5.6]{amm21}}]\label{T:DCPOb}\label{T:dcpo}
  %
  Let $F$ be a locally monotone functor on a left strictly $\DCPOb$-enriched category.
  If an initial algebra $(\mu F, \ini)$ exists, then $(\mu F, \ini^{-1})$ is a
  terminal coalgebra.
\end{theorem}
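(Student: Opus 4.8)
Lambek's lemma makes the initial algebra structure $\ini\colon F(\mu F)\to\mu F$ an isomorphism, so $(\mu F,\ini^{-1})$ is at least a well-defined $F$-coalgebra; it remains to show it is terminal. Fix an arbitrary $F$-coalgebra $(C,c)$. Unravelling the definitions, a morphism $h\colon C\to\mu F$ is an $F$-coalgebra homomorphism $(C,c)\to(\mu F,\ini^{-1})$ exactly when $\ini^{-1}\circ h=Fh\circ c$, equivalently $h=\ini\circ Fh\circ c$; that is, exactly when $h$ is a fixed point of
\[
  \Phi_c\colon\C(C,\mu F)\to\C(C,\mu F),\qquad \Phi_c(g)=\ini\circ Fg\circ c .
\]
Because $F$ is locally monotone and composition is monotone, $\Phi_c$ is a monotone endomap of the pointed dcpo $\C(C,\mu F)$. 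Thus the whole theorem reduces to showing that \emph{$\Phi_c$ has a unique fixed point}.

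\emph{Existence.} I would build the least fixed point $h_0$ of $\Phi_c$ by transfinite iteration from the bottom morphism: $\Phi_c^{(0)}=\bot$, $\Phi_c^{(\alpha+1)}=\Phi_c(\Phi_c^{(\alpha)})$, and $\Phi_c^{(\lambda)}=\bigvee_{\alpha<\lambda}\Phi_c^{(\alpha)}$ at limits. Since $\bot$ is least and $\Phi_c$ monotone, the chain is increasing, so each limit is a directed join (available as the hom-set is a dcpo), and the chain stabilizes for cardinality reasons at a fixed point, which is the least one, $h_0$. This $h_0$ is a coalgebra homomorphism $(C,c)\to(\mu F,\ini^{-1})$. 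The departure from Freyd's original cpo/locally continuous setting is visible here: with $F$ merely locally monotone the iteration must be transfinite rather than $\omega$-indexed, and $h_0$ is the least fixed point guaranteed by Pataraia's theorem.

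\emph{Uniqueness.} Run the same iteration on $\Psi\colon\C(\mu F,\mu F)\to\C(\mu F,\mu F)$ given by $\Psi(g)=\ini\circ Fg\circ\ini^{-1}$. A fixed point of $\Psi$ is precisely an $F$-algebra endomorphism of $(\mu F,\ini)$, so by initiality it can only be $\id_{\mu F}$; hence the transfinite chain $\Psi^{(\alpha)}$ stabilizes at $\id_{\mu F}$. Let $h$ now be \emph{any} fixed point of $\Phi_c$. The heart of the argument is the identity $\Psi^{(\alpha)}\circ h=\Phi_c^{(\alpha)}$ for all ordinals $\alpha$, by transfinite induction: the base case $\bot\circ h=\bot$ is exactly the left-strictness axiom; the successor case merely pushes the coalgebra-homomorphism equation $\ini^{-1}\circ h=Fh\circ c$ through the definitions, using functoriality of $F$, to get $\Psi^{(\alpha+1)}\circ h=\ini\circ F(\Psi^{(\alpha)}\circ h)\circ c=\ini\circ F(\Phi_c^{(\alpha)})\circ c=\Phi_c^{(\alpha+1)}$; the limit case $\Psi^{(\lambda)}\circ h=\bigvee_{\alpha<\lambda}(\Psi^{(\alpha)}\circ h)=\bigvee_{\alpha<\lambda}\Phi_c^{(\alpha)}=\Phi_c^{(\lambda)}$ uses that precomposition with $h$ preserves directed joins. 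Taking $\alpha$ beyond the stabilization point of both chains yields $h=\id_{\mu F}\circ h=\Psi^{(\alpha)}\circ h=\Phi_c^{(\alpha)}=h_0$. So $h_0$ is the unique homomorphism $(C,c)\to(\mu F,\ini^{-1})$, proving terminality.

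\emph{Main obstacle.} The subtle part is uniqueness: recognizing $\id_{\mu F}$ as the (least) fixed point of $\Psi$ and transporting it along an arbitrary coalgebra homomorphism via the ordinal-indexed identity $\Psi^{(\alpha)}\circ h=\Phi_c^{(\alpha)}$. Its base case is exactly where the asymmetric axiom $\bot\circ f=\bot$ (left strictness, not right strictness) is needed, and its limit case is exactly where continuity of composition is needed. A by-product of organising the proof this way is that one never needs $\mu F$ to be reached by the initial-algebra chain after $\omega$ steps, nor a zero object in $\C$.
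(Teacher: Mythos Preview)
Your proof is correct and follows essentially the same approach as the paper. The paper packages the argument slightly differently: it states Pataraia's theorem and a separate ``uniformity of least fixed points'' lemma (for strict continuous $h$ with $g\cdot h = h\cdot f$ one has $h(\mu f)=\mu g$), then applies them, whereas you inline both via explicit transfinite iteration---but the underlying ideas, including the crucial use of left-strictness in the base case and continuity of composition at limit stages, are identical.
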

\end{notheorembrackets}
\noindent
(This uses that the structure $\iota\colon F(\mu F) \to \mu F$ of the
initial algebra is an isomorphism by Lambek's Lemma~\cite{Lambek68}.)
This result is an adaptation of an earlier related result proved by
Freyd~\cite{Freyd92} for locally continuous functors on
$\omega$-cpo-enriched categories. Note that preservation of bottom on
the right ($f \cdot \bot = \bot$) is not needed for this result.

\section{Coalgebraic Trace Semantics}
\label{S:trace}

In this section we shall see that the (bar) language semantics of NOFAs
and RNNAs is an instance of coalgebraic trace semantics. To this end we
first adapt and generalize the coalgebraic trace semantics for set
functors by Hasuo et al.~\cite{HasuoEA07} to arbitrary
categories. Here one considers coalgebras for composed functors~$TF$,
where $T$ is a monad modelling a branching type like non-determinism
or probabilistic branching, and $F$ models the type of transitions of
systems. We then instantiate this to coalgebras in $\Nom$ for functors
$TF$, where $T$ is $\powfs$ and $F$ a polynominal functor or
$T=\powufs$ and $F$ a binding polynomial functor. Specifically, we
obtain the two desired types of nominal automata as instances. 

\subsection{General Coalgebraic Trace Semantics Revisited}
We begin by recalling a few facts about extensions of functors to
Kleisli categories.
\begin{rem}\label{R:Kleisli}
  Let $F$ be a functor and $(T, \eta, \mu)$ a monad, both on
  the category~$\C$.
  \begin{enumerate}
  \item\label{R:Kleisli:1} The \emph{Kleisli category} $\Kl T$ has the
    same objects as $\C$ and a morphisms $f$ from $X$ to $Y$ is a
    morphism $f\colon X \to TY$ of $\C$. The composition of $f$ with
    $g\colon Y \to TZ$ is defined by $\mu_Z \cdot Tg \cdot f$ and the
    identity on $X$ is $\eta_X \colon X \to TX$. We have the
    identity-on-objects functor $J\colon \C \to \Kl T$ defined by 
    $J(f\colon X \to Y)  = \eta_Y \cdot f$.

  \item An endofunctor $\barF\colon \Kl T \to \Kl T$ \emph{extends}
    the functor $F$ if $\bar F J = J F$. It is well known and easy to
    prove (see Mulry~\cite{Mulry94}) that extensions of $F$ to $\Kl T$ are in
    bijective correspondence with \emph{distributive laws} of $F$ over
    $T$; these are natural transformations $\lambda\colon FT \to
    TF$ compatible with the monad structure of~$T$:
    \[
      \begin{tikzcd}
        F
        \arrow{r}{F\eta}
        \arrow{rd}[swap]{\eta F}
        &
        FT
        \arrow{d}{\lambda}
        \\
        &
        TF
      \end{tikzcd}
      \qquad\qquad
      \begin{tikzcd}
        FTT
        \arrow{r}{\lambda T}
        \arrow{d}[swap]{F\mu}
        &
        TFT
        \arrow{r}{T\lambda}
        &
        TTF
        \arrow{d}{\mu F}
        \\
        FT
        \arrow{rr}{\lambda}
        &&
        TF
      \end{tikzcd}
    \]

  \item\label{R:Kleisli:3} Let $G$ be a quotient functor of $F$, which means that we have
    a natural transformation with epimorphic components
    $q\colon F \epito G$. Suppose that $F$ extends to $\Kl T$
    via a distributive law $\lambda\colon FT \to TF$. Then an 
    object-indexed family of morphisms $\rho_X\colon GTX \to TGX$ is a
    distributive law of $G$ over $T$ provided that the following
    squares commute
    \[
      \begin{tikzcd}
        FTX
        \ar{r}{\lambda_X}
        \ar[->>]{d}[swap]{q_{TX}}
        &
        TFX
        \ar[->>]{d}{Tq_X}
        \\
        GTX
        \ar{r}{\rho_X}
        &
        TGX
      \end{tikzcd}
      \qquad
      \text{for every object $X$ of $\C$.}
    \]
  \end{enumerate}
\end{rem}
\begin{expl}\label{E:ext}
  \begin{enumerate}
  \item\label{E:ext:1} Constant functors and the identity functor on $\C$ obviously
    extend to $\Kl T$.

  \item For a pair $F, G$ of endofunctors which extend to $\Kl T$,
    their composition extends, too, and we have $\barGF = \barG \barF$.
    
  \item\label{E:ext:2} Suppose that $\C$ has coproducts. Then
    $\barFpG = \bar F + \barG$, for a pair $F,G$ of endofunctors which
    extend to $\Kl \C$. Indeed, for a coproduct $F+G$ one uses that
    $J\colon \C \to \Kl T$, being a left adjoint, preserves
    coproducts. Given extensions $\barF$ and~$\barG$, it is then clear
    that $\barF + \barG$ extends $F+G$:%
    \smnote{We will need the definition on morphisms in the proof of
      \autoref{T:RNNA-Kl}; so we need to provide more details.}  for
    every morphism $f\colon X \to TY$ in $\Kl T$ one has
    \[
      \barFpG (f) = \big(FX + FY \xra{\bar Ff + \bar Gf} TFY+TGY
      \xra{[T\inl, T\inr]} T(FY+GY)\big),
    \]
    where $FY \xra{\inl} FY + GY \xla{\inr} GY$ are the coproduct
    injections. This works similarly for arbitrary coproducts.

  \item\label{E:ext:3} Suppose that $\C$ has finite products. Then
    finite products of functors with an extension can be extended when
    the monad $T$ is \emph{commutative}; this notion was introduced by
    Kock~\cite[Def.~3.1]{Kock70}. It is based on the notion of a
    \emph{strong} monad, that is a monad $T$ equipped with a natural
    transformation $s_{X,Y}\colon X \times TY \to T(X\times Y)$
    (called \emph{strength}) satisfying four natural equational laws
    (two w.r.t.~$1$ and~$\times$ on $\C$ and two w.r.t.~the monad
    structure). We do not recall these laws explicitly since they are
    not needed for our exposition. A strength gives rise to a
    \emph{costrength} $t_{X,Y}\colon TX \times Y \to T(X\times Y)$
    defined by
    \[
      t_{X,Y} = \big( TX\times Y\cong Y \times TX \xra{s_{Y,X}} T(Y
      \times X) \xra{T(\cong)} T(X\times Y) \big).
    \]
    The monad $T$ is \emph{commutative} if the following diagram 
    commutes:
    \[
      \begin{tikzcd}[row sep = 0, column sep = 30]
        &
        T(TX \times Y)
        \ar{r}{Tt_{X,Y}}
        &
        TT(X\times Y)
        \ar{rd}[near start]{\mu_{X\times Y}}
        \\
        TX \times TY
        \ar{ru}[near end]{s_{TX,Y}}
        \ar{rd}[near end, swap]{t_{X,TY}}
        \ar[dashed]{rrr}{d_{X,Y}}
        &&&
        T(X\times Y)
        \\
        &
        T(X \times TY)
        \ar{r}{Ts_{X,Y}}
        &
        TT(X \times Y)
        \ar{ru}[near start,swap]{\mu_{X\times Y}}
      \end{tikzcd}
    \]
    The ensuing natural transformation $d$ in the middle is used to
    extend the product $F \times G$ of endofunctors on $\C$
    having extensions $\barF$ and $\barG$ on $\Kl T$: for every
    morphism $f\colon X \to TY$ in $\Kl T$ one puts%
    \smnote{Note that the barF and barG macros don't work on the label
      arrows; TODO: perhaps fix this later.}
    \[
      \barFtG (f) = \big(FX \times GX \xra{\bar F f \times \bar G f} TFY
      \times TGY \xra{d_{FY,GY}} T(FY \times GY)\big).
    \]

  \end{enumerate}
\end{expl}
\begin{rem}
  \begin{enumerate}
  \item Every set monad is strong via a canonical strength; this
    follows, for example, from Moggi's
    result~\cite[Thm.~3.4]{Moggi91}.
    For example, the power-set functor $\pow\colon \Set \to \Set$ is
    commutative via its canonical strength 
    \begin{equation}\label{eq:pstr}
      s_{X,Y}\colon X \times \pow Y \to \pow(X \times Y)
      \quad
      \text{defined by}
      \quad
      (x,S) \mapsto \set{(x,s) : s \in S}. 
    \end{equation}
  \item As a consequence of what we saw in \autoref{E:ext} every
    polynomial set functor has a canonical extension to the Kleisli
    category of any commutative set monad
    (cf.~\cite[Lem.~2.4]{HasuoEA07}).
    
  \item More generally, this results holds for analytic set
    functors~\cite[Thm.~2.9]{mps09}. That notion was introduced by
    Joyal~\cite{Joyal81,Joyal86}, and he proved that analytic set
    functors are precisely those set functors which weakly preserve wide
    pullbacks.

  \end{enumerate}
\end{rem}

With the help of Hermida and Jacobs' result~\cite[Thm.~2.14]{HermidaJ98} on
extending adjunctions to categories of algebras one easily obtains the
following extension result for initial algebras:
\begin{proposition}\label{P:ini}
  Let $T$ be a monad on the category $\C$ and let $F\colon \C \to \C$
  have an extension $\bar F$ on $\Kl T$. If $(\mu F, \ini)$ is an
  initial $F$-algebra, then $\mu F$ is an initial $\barF$-algebra with
  the structure $J\ini = \eta_{\mu F} \cdot \ini\colon F(\mu F)
  \to T(\mu F)$. 
\end{proposition}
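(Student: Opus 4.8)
The plan is to apply Hermida and Jacobs' adjoint lifting theorem~\cite[Thm.~2.14]{HermidaJ98} to the Kleisli adjunction. Recall that the Kleisli construction comes with an adjunction $J \dashv U_T\colon \Kl T \to \C$, whose left adjoint is the functor $J$ of \autoref{R:Kleisli}\eqref{R:Kleisli:1} and whose right adjoint $U_T$ sends an object $X$ to $TX$ and a Kleisli morphism $f\colon X \to TY$ to $\mu_Y \cdot Tf\colon TX \to TY$; the unit of this adjunction is the monad unit $\eta$.

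Since $\barF$ extends $F$, we have the strict equality $\barF J = JF$, so in particular there is a natural isomorphism $\barF J \cong JF$ (the identity). This is exactly the data needed to apply the adjoint lifting theorem to the adjunction $J \dashv U_T$ together with the endofunctors $F$ on $\C$ and $\barF$ on $\Kl T$: it produces a lifted adjunction $\widehat J \dashv \widehat{U_T}$ between the category $\Alg F$ of $F$-algebras in $\C$ and the category $\Alg{\barF}$ of $\barF$-algebras in $\Kl T$, sitting over the original adjunction, and the lifted left adjoint acts on objects by $\widehat J(A,a) = (JA, Ja)$. This is well-typed precisely because $\barF JA = JFA$, so $Ja\colon \barF JA \to JA$ is indeed a morphism in $\Kl T$.

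Now left adjoints preserve colimits and hence initial objects. The initial object of $\Alg F$ is the initial $F$-algebra $(\mu F, \ini)$, so $\widehat J(\mu F, \ini)$ is an initial object of $\Alg{\barF}$, i.e.\ an initial $\barF$-algebra. It remains to unravel this object: since $J$ is the identity on objects, the underlying object is again $\mu F$, and the structure morphism is $J\ini = \eta_{\mu F} \cdot \ini$ by the definition of $J$ on morphisms. This is exactly the claimed algebra structure, which finishes the proof.

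The only genuine content is the correct invocation of the adjoint lifting theorem; the point to double-check is that its hypotheses impose nothing beyond the adjunction $J \dashv U_T$ and the (iso)commuting square $\barF J \cong JF$ — in particular no (co)completeness assumptions on $\C$ and no conditions on $F$ or $T$ other than $F$ admitting the extension $\barF$. Should one prefer to avoid citing that theorem, there is a direct alternative: an $\barF$-algebra in $\Kl T$ is a $\C$-morphism $b\colon FX \to TX$ (using $\barF X = FX$ on objects); turning it into an $F$-algebra $\mu_X \cdot Tb \cdot \lambda_X\colon FTX \to TX$ on $TX$ via the distributive law $\lambda$ corresponding to $\barF$, initiality of $(\mu F, \ini)$ in $\C$ yields a unique $F$-algebra homomorphism $h\colon \mu F \to TX$, and a short diagram chase with the distributive-law axioms shows that $h$, read as a Kleisli morphism $\mu F \to X$, is the unique $\barF$-algebra homomorphism $(\mu F, J\ini) \to (X,b)$. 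That diagram chase is the only mildly fiddly step of this alternative route.
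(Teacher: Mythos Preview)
Your proposal is correct and is exactly the argument the paper intends: the paper does not spell out a proof but simply states that the proposition follows from Hermida and Jacobs' adjoint lifting theorem~\cite[Thm.~2.14]{HermidaJ98}, which you have applied to the Kleisli adjunction $J \dashv U_T$ and the square $\barF J = JF$. Your unravelling of the lifted left adjoint to identify the structure $J\ini$ is the only detail the paper leaves implicit, and the alternative direct argument via the distributive law that you sketch is a nice bonus.
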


Coalgebraic trace semantics can be defined when the extended initial
algebra above is also a terminal coalgebra for $\barF$.
%
\begin{theorem}\label{T:Kl}
  Let $F$ be a functor and $T$ a monad on the category
  $\C$. Assume that $\Kl T$ is left strictly $\DCPOb$-enriched and that $F$ has a locally monotone extension
  $\barF$ on $\Kl T$ and an
  initial algebra $(\mu F, \ini)$. Then $(\mu F, J\ini^{-1})$ is a terminal
  coalgebra for $\barF$.
\end{theorem}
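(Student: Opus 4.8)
The plan is to derive the statement as an essentially immediate combination of the two preceding results, \autoref{P:ini} and \autoref{T:dcpo}, once their hypotheses have been matched up. First I would invoke \autoref{P:ini}: since $F$ has the extension $\barF$ on $\Kl T$ and the initial $F$-algebra $(\mu F,\ini)$ exists, $(\mu F, J\ini)$ is an initial $\barF$-algebra in $\Kl T$, where $J\ini = \eta_{\mu F}\cdot \ini$ is the corresponding Kleisli morphism $F(\mu F)\to \mu F$.

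Next I would record that this algebra structure is invertible in $\Kl T$, with inverse $J(\ini^{-1})$. By Lambek's lemma $\ini$ is an isomorphism in $\C$, and since $J\colon\C\to\Kl T$ is a functor it preserves isomorphisms, so $J\ini$ is an isomorphism in $\Kl T$ with $(J\ini)^{-1} = J(\ini^{-1})$; equivalently one may appeal directly to Lambek's lemma applied to the initial $\barF$-algebra inside $\Kl T$. Finally I would apply \autoref{T:dcpo} to the functor $\barF$ on the category $\Kl T$: by hypothesis $\Kl T$ is left strictly $\DCPOb$-enriched and $\barF$ is locally monotone, and we have just exhibited an initial $\barF$-algebra, so the theorem yields that $(\mu F,(J\ini)^{-1}) = (\mu F, J\ini^{-1})$ is a terminal coalgebra for $\barF$, which is exactly the claim.

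I do not expect any genuine obstacle here: the substantive work is already encapsulated in \autoref{P:ini} (via the Hermida--Jacobs adjoint lifting theorem) and in \autoref{T:dcpo} (the Freyd-style argument adapted to locally monotone functors on $\DCPOb$-enriched categories). The only steps that need a moment's care are the bookkeeping identifications above — that the $\Kl T$-inverse of $J\ini$ is $J(\ini^{-1})$ rather than some expression involving $\mu$ and $T$, and that the enrichment and local-monotonicity assumptions in the statement are precisely those required by \autoref{T:dcpo} for $\Kl T$ and $\barF$. So the proof amounts to assembling these two black boxes in the right order.
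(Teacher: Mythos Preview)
Your proposal is correct and matches the paper's proof exactly: the paper simply writes ``Immediate from \autoref{P:ini} and \autoref{T:dcpo}.'' Your additional bookkeeping that $(J\ini)^{-1} = J(\ini^{-1})$ via functoriality of $J$ (or Lambek's lemma in $\Kl T$) just makes explicit a step the paper leaves to the reader.
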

\begin{proof}
  Immediate from \autoref{P:ini} and \autoref{T:dcpo}.
  \qed
\end{proof}
\noindent
Compared to the previous result for $\Set$~\cite[Thm.~3.3]{HasuoEA07}
our assumption on the enrichment of the Kleisli category is slightly
stronger; in op.~cit.~only enrichment in $\omega$-cpos is required. A
related result~\cite[Thm.~5.3.4]{Jacobs16} for general
base categories uses enrichment in directed-complete partial
orders. However, in contrast to both of these results, we do not
require that $\Kl T$ has a zero object and, most notably, we only need
the mere existence of $\mu F$ and not that the initial algebra for $F$
is obtained by the first $\omega$ steps of the initial-algebra chain,
that is, as the colimit of the $\omega$-chain given by $F^n 0$
($n < \omega$). The technical reason for this is that the proof of
\autoref{T:dcpo} does not make use of the classical limit-colimit
coincidence technique used e.g.~by Smyth and Plotkin in their seminal
work~\cite{SmythPlotkin82}. Consequently, our proof is easier and
shorter than the previous ones.
\begin{definition}[Coalgebraic Trace Semantics]\label{D:tr}
  Given $F$ and $T$ on $\C$ satisfying the assumptions in
  \autoref{T:Kl} and a coalgebra $c\colon X \to TFX$. The
  \emph{coalgebraic trace map} is the unique coalgebra homomorphism
  $\tr_c$ from $(X, c)$ to $(\mu F, J\ini^{-1})$; that is, the
  following diagram commutes in $\Kl T$:
  \begin{equation}\label{eq:trc}
    \begin{tikzcd}[column sep = 30, row sep = 15]
      X \ar{d}[swap]{c} \ar{r}{\tr_c}
      &
      \mu F
      \ar{d}{J\ini^{-1}}
      \\
      \barF X
      \ar{r}{\bar F \tr_c}
      &
      \barF(\mu F)
    \end{tikzcd}
  \end{equation}
\end{definition}
Among the instances of coalgebraic trace semantics are the trace
semantics of labelled transition systems with explicit
termination~\cite{HasuoEA07}, which are the coalgebras for the set
functor $\pow(1 +\Sigma \times X)$ and that of probabilistic
labelled transitions systems~\cite[Ch.~4]{Hasuo08}, which are the
coalgebras for the set functor $\mathcal D_\leq(1 + \Sigma \times X)$,
where $\mathcal D_\leq$ denotes the subdistribution monad.

\subsection{Coalgebraic Trace Semantics of Non-deterministic Nominal
  Systems}

We will now work towards showing that the semantics of nominal
automata is an instance of the coalgebraic trace semantics. To this
end we will instantiate \autoref{T:Kl} to $\C=\Nom$,
$FX=1+\names\times X$ and $T=\powfs$ (for NOFAs), or to
$FX = 1 + \names \times X + [\names] X$ and $T = \powufs$ (for RNNAs),
\newcommand{\remautorefname}{Remarks}%
cf.~\autoref{rem:nofa-as-coalgebras} and~\ref{rem:rnna-as-coalgebras}.
\renewcommand{\remautorefname}{Remark}%
More generally, we show that every endofunctor arising from a nominal
algebraic signature in the sense of Pitts~\cite[Def.~8.2]{Pitts13} has
a locally monotone extension to $\Kl\powufs$. For $T = \powfs$ most
of the development works out, as we shall see. However, the
distributive law for the abstraction functor in the proof of
\autoref{P:abs-dist} is not well-defined for $\powfs$.

But the first obstacle is that the nominal sets $\powfs X$ and
$\powufs X$ are in general no complete lattices (and not even
$\omega$-cpos) since the union of a chain of (uniformly) finitely
supported sets may fail to be (uniformly) finitely supported.%
\smnote{TODO: We need to mention a concrete counterexampel (in the
  appendix)!}  In this light, the following result is slightly
surprising.

\begin{proposition}\label{P:dcpo}
  For every pair $X,Y$ of nominal sets, the sets $\Kl\powfs(X,Y)$ and $\Kl\powufs(X,Y)$
  form complete lattices (whence dcpos with bottom).
\end{proposition}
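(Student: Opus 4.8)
The plan is to exhibit an explicit description of the hom-sets $\Kl\powfs(X,Y)$ and $\Kl\powufs(X,Y)$ that makes the complete-lattice structure transparent. A Kleisli morphism $f\colon X\to Y$ in $\Kl\powfs$ is an equivariant map $f\colon X\to\powfs Y$, and similarly for $\powufs$. First I would observe that such an $f$ is precisely an equivariant relation $R_f\subseteq X\times Y$ (put $(x,y)\in R_f$ iff $y\in f(x)$) that is \emph{pointwise (uniformly) finitely supported}, i.e.\ for each $x\in X$ the set $f(x)=\{y : (x,y)\in R_f\}$ is finitely supported (resp.\ uniformly finitely supported) --- the equivariance of $f$ automatically forces $R_f$ itself to be equivariant, and, since $\supp(f(x))\subseteq\supp(x)$ by equivariance, in the $\powufs$ case one does not even need to impose the uniform bound separately as it follows orbit-by-orbit. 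This reduces the claim to: the poset of such relations, ordered by inclusion, is a complete lattice.

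The key step is then to check closure of both classes under arbitrary unions and intersections. Given a family $(f_i)_{i\in I}$ of equivariant maps $X\to\powfs Y$, define $f$ by $f(x)=\bigcup_{i\in I} f_i(x)$. This is clearly equivariant. For finite supportedness of the value $f(x)$: since $\supp(f_i(x))\subseteq\supp(x)$ for every $i$ by equivariance, we get $\bigcup_{i} f_i(x)\subseteq\{y\in Y : \supp(y)\subseteq\supp(x)\}$, and the latter set has support $\supp(x)$, hence so does any subset of it --- in particular $f(x)$ is finitely supported with $\supp(f(x))\subseteq\supp(x)$. The \emph{same} argument works verbatim for $\powufs$: $\bigcup_{y\in f(x)}\supp(y)\subseteq\supp(x)$ is finite, so $f(x)$ is uniformly finitely supported. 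Thus $f$ is a well-defined Kleisli morphism and is the join of the $f_i$. The bottom element is $x\mapsto\emptyset$. Since every subset of a poset with all joins also has all meets (take the join of all lower bounds), both hom-posets are complete lattices; equivalently one can directly check that $x\mapsto\bigcap_i f_i(x)$ is again (uniformly) finitely supported by the same support-inclusion argument, with the convention that the empty intersection is $\{y\in Y:\supp(y)\subseteq\supp(x)\}$ rather than all of $Y$. Finally, a complete lattice is in particular a dcpo with bottom, giving the parenthetical remark.

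The point worth flagging --- and the reason the proposition is, as the authors say, ``slightly surprising'' --- is that this works \emph{because} one restricts attention to hom-sets $\Kl T(X,Y)$ rather than the objects $\powfs Y$ or $\powufs Y$ themselves. On the objects there is no ambient bound: an arbitrary union of finitely supported subsets of $Y$ need not be finitely supported (e.g.\ $\names\subseteq\names$), so $\powfs Y$ and $\powufs Y$ fail to be complete lattices or even $\omega$-cpos, exactly as noted before the statement. What rescues the hom-sets is that each value $f(x)$ is confined, by equivariance of $f$, to the sub-nominal-set of elements supported by $\supp(x)$, which \emph{is} a complete lattice. So the main (and only real) obstacle is simply to notice and exploit this uniform bound coming from equivariance; once that is in hand, the rest is the routine verification sketched above, and it is uniform in the choice of $\powfs$ versus $\powufs$.
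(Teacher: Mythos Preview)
Your argument for $\powufs$ is correct and essentially identical to the paper's: for $y\in f_i(x)$ one has $\supp(y)\subseteq\supp(f_i(x))\subseteq\supp(x)$, the first inclusion using uniform finite support and the second equivariance of $f_i$, so the union is uniformly finitely supported by $\supp(x)$.

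Your argument for $\powfs$, however, contains a genuine gap. From $\supp(f_i(x))\subseteq\supp(x)$ you conclude $\bigcup_i f_i(x)\subseteq\{y\in Y:\supp(y)\subseteq\supp(x)\}$, i.e.\ that every \emph{element} of $f_i(x)$ has support contained in $\supp(x)$. This does not follow: the support of a finitely supported subset is in general unrelated to the supports of its members. For instance, take $X=1$, $Y=\names$, and $f(*)=\names$; then $\supp(f(*))=\emptyset=\supp(*)$, yet $\supp(a)=\{a\}\not\subseteq\emptyset$ for each $a\in\names$. So the inclusion fails, and with it your support bound for the union. (The same mistake invalidates your description of the top element in the $\powfs$ case: the top is simply $x\mapsto Y$, which is equivariant and lands in $\powfs Y$ since $\supp(Y)=\emptyset$.)

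The paper's fix is short but uses a different observation: the set of \emph{sets} $\{f_i(x):i\in I\}\subseteq\powfs Y$ has every element supported by $\supp(x)$, hence is itself (uniformly, whence finitely) supported by $\supp(x)$; since taking unions is equivariant, $\bigcup_i f_i(x)$ is then supported by $\supp(x)$ as well. In other words, one stays at the level of sets-of-sets rather than descending to individual elements of $Y$, which is exactly where your argument goes wrong in the non-uniform case.
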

\begin{corollary}\label{C:ini-ter}
  If a locally monotone endofunctor $H$ on $\Kl\powfs$ or $\Kl\powufs$ has an initial
  algebra $(\mu H, \ini)$, then $(\mu H, \ini^{-1})$ is its terminal
  coalgebra.
\end{corollary}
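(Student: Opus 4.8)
The plan is to deduce this directly from \autoref{T:dcpo}. That theorem applies to a locally monotone endofunctor on a left strictly $\DCPOb$-enriched category, so it suffices to verify that $\Kl\powfs$ and $\Kl\powufs$ carry such an enrichment; then, instantiating the functor to $H$, the initiality of $(\mu H,\ini)$ immediately yields that $(\mu H,\ini^{-1})$ is terminal.

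First I would pin down the poset structure on the hom-sets and, crucially, how joins are formed. A morphism $X\to Y$ in $\Kl\powfs$ is an equivariant map $f\colon X\to\powfs Y$, and equivariance forces $\supp(f(x))\seq\supp(x)$ for every $x$. Hence, for any family $(f_i)_{i\in D}$ of such morphisms, every value $f_i(x)$ is supported by the \emph{single} finite set $\supp(x)$, so the pointwise union $x\mapsto\bigcup_{i\in D}f_i(x)$ is again supported by $\supp(x)$ and is equivariant; thus it lies in $\Kl\powfs(X,Y)$ and is plainly the least upper bound of the $f_i$. So $\Kl\powfs(X,Y)$ is a complete lattice --- this is the relevant half of \autoref{P:dcpo} --- and, more precisely, all its joins, in particular directed ones, are computed pointwise by union, with bottom element the constant map $x\mapsto\emptyset$. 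The same reasoning applies to $\Kl\powufs$, using additionally that $\supp(A)=\bigcup_{a\in A}\supp(a)$ for uniformly finitely supported $A$, so that the uniformly bounded supports of the $f_i(x)$ propagate to uniform finite support of their union.

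Next I would check that Kleisli composition respects this structure. Recall $(g\cdot f)(x)=\bigcup_{y\in f(x)}g(y)$. Then $(\bot\cdot f)(x)=\bigcup_{y\in f(x)}\emptyset=\emptyset$, so composition preserves bottom on the left. Continuity in each argument is just the distributivity of union over union: since joins are pointwise unions,
\[
  \big(f\cdot\textstyle\bigvee_{i\in D} g_i\big)(x)
  \;=\; \bigcup_{z\in\,\bigcup_{i} g_i(x)} f(z)
  \;=\; \bigcup_{i\in D}\;\bigcup_{z\in g_i(x)} f(z)
  \;=\; \big(\textstyle\bigvee_{i\in D} f\cdot g_i\big)(x),
\]
and symmetrically $\big(\bigvee_{i\in D} g_i\big)\cdot f=\bigvee_{i\in D} g_i\cdot f$; these equalities hold for arbitrary joins, a fortiori for directed ones. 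The identical computations go through verbatim in $\Kl\powufs$. Hence both categories are left strictly $\DCPOb$-enriched, and \autoref{T:dcpo} applies to the locally monotone endofunctor $H$, giving the claim.

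I do not anticipate a genuine obstacle: the one point that needs a moment's thought is that equivariance of a Kleisli morphism bounds the supports of all its values by $\supp(x)$, uniformly in the index of the family, which is exactly what makes the pointwise union a well-defined Kleisli morphism and identifies the directed joins concretely; granting that, the enrichment axioms are the evident distributive laws for unions.
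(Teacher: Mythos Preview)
Your proposal is correct and follows essentially the same approach as the paper: deduce the corollary from \autoref{T:dcpo} by verifying that $\Kl\powfs$ and $\Kl\powufs$ are left strictly $\DCPOb$-enriched, using \autoref{P:dcpo} for the hom-sets and checking that Kleisli composition preserves bottom on the left and all (pointwise-union) joins. You spell out the continuity and left-strictness verifications that the paper merely records as ``easily seen'', and your support arguments for the well-definedness of pointwise unions match those in the appendix proof of \autoref{P:dcpo}.
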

\noindent
This is a consequence of~\autoref{T:DCPOb} since the composition in
$\Kl\powfs$ and $\Kl\powufs$ is easily seen to preserve the bottom
(empty set)
on the left and all joins (unions).

\mysubsec{Extending functors to $\Kl\powfs$ and $\Kl\powufs$.} We now show that
endofunctors arising from a nominal algebraic signature (with one name
and one data sort)~\cite[Def.~8.2]{Pitts13} have a canonical locally
monotone extension to $\Kl\powufs$. For instance, the functor~$F$ used
for RNNAs has a locally monotone extension $\barF$ on $\Kl\powufs$.
\begin{definition}
  The class of \emph{binding polynomial functors} on $\Nom$ is the smallest
  class of functors containing the constant and identity and abstraction
  functors and being closed under coproducts, finite products and composition. 
\end{definition}
\noindent In other words, binding polynomial functors are formed
according to the grammar:
\begin{equation}\label{eq:grammar}
  \textstyle
  F ::= C \mid \Id \mid [\names](-) \mid F \times F \mid \coprod_{i\in
    I} F_i \mid FF,
\end{equation}
where $C$ ranges over all constant functors on $\Nom$ and $I$ is an
arbitrary index set. Functors arising from a binding signature in the
sense of Fiore et al.~\cite{FioreEA99} and those associated to a nominal
algebraic signature with one name sort and one data sort (see
Pitts~\cite[Def.~8.12]{Pitts13}) are instances of binding
polynomial functors.
\begin{proposition}\label{P:comm}\smnote{Aren't there any results known in the literature that the
    power object monad on a topos (satisfying s.th.) is commutative?}%
  The monads $\powfs$ and $\powufs$ are commutative w.r.t.~to the
  strengths obtained by restricting the one in~\eqref{eq:pstr}.
\end{proposition}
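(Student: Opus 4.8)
The plan is to equip each of the two monads with an explicit \emph{commutative} strength, given by the same formula that makes the ordinary power-set monad $\pow$ on $\Set$ commutative (cf.~\eqref{eq:pstr}). Concretely, for nominal sets $X$ and $Y$ I would put
\[
  s_{X,Y}\colon X\times \powufs Y \to \powufs(X\times Y),\qquad
  (x,S)\mapsto \{\,(x,s) : s\in S\,\},
\]
and likewise with $\powfs$ in place of $\powufs$. The one genuinely $\Nom$-specific point is \emph{well-definedness}: if $S$ is a finitely supported subset of $Y$ with $\supp(S)=A$, then any permutation fixing $A\cup\supp(x)$ pointwise fixes the subset $\{(x,s):s\in S\}$ of $X\times Y$, which is therefore finitely supported; if $S$ is moreover uniformly finitely supported, then $\bigcup_{s\in S}\supp(x,s)=\supp(x)\cup\bigcup_{s\in S}\supp(s)$ is finite, so $\{(x,s):s\in S\}$ is uniformly finitely supported. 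Equivariance of $s_{X,Y}$ and its naturality in $X$ and $Y$ are immediate from the formula.

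It then remains to verify the four equational laws of a strong monad and the commutativity square of \autoref{E:ext}\,(\ref{E:ext:3}). Rather than chasing these diagrams in $\Nom$, I would transfer them from $\Set$: the forgetful functor $U\colon\Nom\to\Set$ is faithful and preserves finite products and the terminal object, and the componentwise-injective natural transformations $U\powfs(-)\hookrightarrow\pow U(-)$ and $U\powufs(-)\hookrightarrow\pow U(-)$ are compatible with units, multiplications and the strengths above (each a one-line check). Consequently every equation between composites of this structure that holds for $\pow$ on $\Set$, in particular the strong-monad laws and the commutativity law, which are classical for $\pow$, pulls back to the corresponding equation for $\powfs$ and $\powufs$, since $U$ is faithful and the inclusions are monic. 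Alternatively one may verify commutativity by a direct computation: the induced costrength is $t_{X,Y}(S,y)=\{(s,y):s\in S\}$, and chasing an element $(S,R)\in\powufs X\times\powufs Y$ around the two paths of the square yields $\mu$ applied to $\{\,\{(s,y):s\in S\}:y\in R\,\}$ along the top and to $\{\,\{(s,y):y\in R\}:s\in S\,\}$ along the bottom, both of which evaluate to the rectangle $S\times R=\{(s,y):s\in S,\ y\in R\}$ in $\powufs(X\times Y)$; the same computation applies to $\powfs$.

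The only real obstacle is the well-definedness step above, i.e.\ keeping the subset constructions inside $\powfs$ respectively $\powufs$, together with the bookkeeping in the transfer argument that the inclusions into $\pow U(-)$ really are morphisms of strong monads over $U$. Past these points, the monad laws and commutativity are exactly as in $\Set$. This proposition then supplies, via \autoref{E:ext}, the canonical extensions of all binding polynomial functors --- built from constants, the identity, the abstraction functor $[\names](-)$, finite products and arbitrary coproducts --- to the Kleisli categories of $\powfs$ and $\powufs$.
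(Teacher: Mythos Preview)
Your proof is correct and essentially identical to the paper's: both define the strength by the formula~\eqref{eq:pstr}, verify that the image set is (uniformly) finitely supported by $\supp(x)\cup\supp(S)$, and then inherit all the strong-monad and commutativity laws from the ordinary power-set monad on $\Set$; your transfer argument via the faithful product-preserving forgetful functor just makes explicit what the paper leaves implicit.

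One correction to your closing remark: commutativity together with \autoref{E:ext} only handles constants, the identity, coproducts and finite products. The extension of the abstraction functor $[\names](-)$ is \emph{not} a consequence of this proposition; it requires a separate argument (\autoref{P:abs-dist}), and that argument in fact fails for $\powfs$. So binding polynomial functors extend canonically only to $\Kl\powufs$, not to $\Kl\powfs$.
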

\begin{proposition}\label{P:abs-dist}
  The abstraction functor $[\names](-)$ has a locally monotone
  extension on $\Kl \powufs$.
\end{proposition}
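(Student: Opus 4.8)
The plan is to exhibit a distributive law $\rho\colon [\names]\,\powufs \to \powufs\,[\names]$ of the abstraction functor over the monad $\powufs$. By \autoref{R:Kleisli} such a $\rho$ is the same thing as an extension $\barAbs$ of $[\names]$ to $\Kl\powufs$, and this extension sends a Kleisli morphism $f\colon X\to\powufs Y$ to $\barAbs f = \rho_Y\cdot[\names]f$, i.e.~to $\braket{a}x\mapsto\{\braket{a}y\compr y\in f(x)\}$. Granting $\rho$, local monotonicity of $\barAbs$ will be routine, so the substance of the proof is the construction of $\rho$.

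To construct $\rho$ I would descend along the canonical quotient of functors $q\colon \names\times(-)\epito[\names]$ with components $q_X(a,x)=\braket{a}x$. The functor $\names\times(-)$ is (binding) polynomial, so, using that $\powufs$ is commutative (\autoref{P:comm}), it extends to $\Kl\powufs$ by \autoref{E:ext} via the distributive law $s_X\colon \names\times\powufs X\to\powufs(\names\times X)$, $(a,S)\mapsto\{(a,x)\compr x\in S\}$. By \autoref{R:Kleisli:3} it therefore suffices to produce equivariant maps $\rho_X$ making the squares
\[
  \begin{tikzcd}
    \names\times\powufs X
    \ar{r}{s_X}
    \ar[->>]{d}[swap]{q_{\powufs X}}
    &
    \powufs(\names\times X)
    \ar[->>]{d}{\powufs q_X}
    \\
    {[\names]\,\powufs X}
    \ar{r}{\rho_X}
    &
    {\powufs\,[\names]X}
  \end{tikzcd}
\]
commute; commutativity forces $\rho_X(\braket{a}S)=\{\braket{a}x\compr x\in S\}$.

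The heart of the matter --- and the one genuinely delicate point --- is that $\rho_X$ is then well defined. Since $q_{\powufs X}$ is a regular epimorphism in $\Nom$ (being the quotient of $\names\times\powufs X$ by the equivariant equivalence relation defining the abstraction set), an equivariant $\rho_X$ with the property above exists precisely when $\powufs q_X\cdot s_X$ identifies any two pairs $(a,S),(b,T)$ with $\braket{a}S=\braket{b}T$ --- and here uniform finite support is essential. Indeed, if $\braket{a}S=\braket{b}T$ with $S,T\in\powufs X$ and $a\neq b$, then by \autoref{rem:alpheq-technical}(1) (in the nominal set $\powufs X$) we have $b\fresh S$ and $T=(a\,b)\cdot S$; since $S$ is uniformly finitely supported, $\supp(S)=\bigcup_{x\in S}\supp(x)$, so $b\fresh x$ for every $x\in S$, whence \autoref{rem:alpheq-technical}(1) gives $\braket{a}x=\braket{b}((a\,b)\cdot x)$ for all $x\in S$, and thus $\{\braket{a}x\compr x\in S\}=\{\braket{b}y\compr y\in(a\,b)\cdot S\}=\{\braket{b}y\compr y\in T\}$, as required. (For $\powfs$ this argument fails, because $\supp(S)$ can be a proper subset of $\bigcup_{x\in S}\supp(x)$: e.g.~$\braket{a}\names=\braket{b}\names$ in $[\names]\powfs\names$ for all $a\neq b$, whereas $\braket{a}b$ lies in $\{\braket{a}x\compr x\in\names\}$ but --- by inspection of least supports --- in no set $\{\braket{b}y\compr y\in\names\}$, so no equivariant $\rho_\names$ can exist.) One also notes that $\rho_X$ does land in $\powufs\,[\names]X$, since $\bigcup_{x\in S}\supp(\braket{a}x)\seq\bigcup_{x\in S}\supp(x)$ is finite for $S\in\powufs X$.

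With $\rho$ constructed, \autoref{R:Kleisli:3} certifies that it is a distributive law of $[\names]$ over $\powufs$, hence yields the extension $\barAbs$ described above. Finally, for local monotonicity I would invoke \autoref{P:dcpo}: the partial order on each hom-set $\Kl\powufs(U,V)$ is pointwise inclusion of subsets, so if $f\leq f'$ in $\Kl\powufs(X,Y)$ then for every $\braket{a}x\in[\names]X$
\[
  (\barAbs f)(\braket{a}x)=\{\braket{a}y\compr y\in f(x)\}\ \seq\ \{\braket{a}y\compr y\in f'(x)\}=(\barAbs f')(\braket{a}x),
\]
i.e.~$\barAbs f\leq\barAbs f'$ in $\Kl\powufs([\names]X,[\names]Y)$. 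Hence $\barAbs$ is locally monotone, completing the argument.
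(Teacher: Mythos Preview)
Your proof is correct and follows essentially the same route as the paper: descend along the quotient $q\colon \names\times(-)\epito[\names]$ using \autoref{R:Kleisli}\ref{R:Kleisli:3}, with $\rho_X(\braket{a}S)=\{\braket{a}s\compr s\in S\}$, and then check local monotonicity pointwise. The only minor difference is in how well-definedness of $\rho_X$ is verified: you check directly that $\powufs q_X\cdot s_X$ respects the abstraction equivalence (using that $b\fresh S$ implies $b\fresh x$ for all $x\in S$ by uniform finite support), whereas the paper's appendix instead shows that $\rho_X$ is the inverse of the known isomorphism $\psi_X\colon \powufs([\names]X)\to[\names]\powufs X$ from Pitts, which gives well-definedness for free and, as a bonus, that $\rho_X$ is an isomorphism.
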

\begin{proof}[Sketch]
  One uses \autoref{R:Kleisli}\ref{R:Kleisli:3}: the abstraction
  functor is a quotient of the functor $FX = \names \times X$ which is
  equipped with the canonical distributive law
  $\lambda_X\colon \names \times \powufs X \to \powufs(\names \times
  X)$ obtained using the strength of $\powufs$
  (\autoref{E:ext}\ref{E:ext:3} and cf.~\eqref{eq:pstr}). The maps
  $\rho_X\colon [\names](\powufs X) \to \powufs([\names] X)$ are
  defined by $\rho_X (\braket a S) = \set{\braket a s : s \in S}$.
  \qed
\end{proof}
\begin{rem}
  \smnote{Das ist Ex.~4.5 aus Üsames Projektarbeit.}%
  For the monad $\powfs$ our proof does not work. The problem is
  that~$\rho_X$ above is not well-defined in general if $S$ is
  not uniformly finitely supported. For example, for
  $\names \in \powfs \names$ we have
  $\braket a \names = \braket b \names$ for every pair $a, b$ of
  names. However, if $a \neq b$, then the sets
  $\set{\braket a c : c \in \names}$ and
  $\set{\braket b c : c \in \names}$ differ: $\braket a b$ is
  contained in the former but not in the latter set. In fact, since $a \neq b$,
  $\braket a b = \braket b c$ can hold only if $a\fresh\set{b,c}$ and
  $b = (a\, b)\cdot c$ (see Pitts~\cite[Lem.~4.3]{Pitts13}). The
  latter means that $c = a$ contradicting freshness of $a$. 
\end{rem}
\begin{corollary}\label{cor:extension-pufs}
  Every binding polynomial functor has a canonical locally monotone extension to
  $\Kl\powufs$. 
\end{corollary}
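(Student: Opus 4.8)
The plan is to proceed by structural induction on the grammar~\eqref{eq:grammar} defining binding polynomial functors, assembling the extension from results already available. For the base cases I would invoke \autoref{E:ext}\ref{E:ext:1} for the constant functors and the identity functor, and \autoref{P:abs-dist} for the abstraction functor $[\names](-)$. For the inductive step, suppose binding polynomial functors $F$, $G$ and $(F_i)_{i\in I}$ carry locally monotone extensions $\barF$, $\barG$ and $(\barF_i)_{i\in I}$ on $\Kl\powufs$. Then $\coprod_{i\in I}\barF_i$ extends $\coprod_{i\in I}F_i$ by \autoref{E:ext}\ref{E:ext:2}, and $\barF\times\barG$ extends $F\times G$ by \autoref{E:ext}\ref{E:ext:3}, which is applicable because $\powufs$ is commutative by \autoref{P:comm}. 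In the product case one should also observe that the natural transformation $d$ indeed lands in $\powufs$: for uniformly finitely supported sets $S,T$ the set $S\times T$ is again uniformly finitely supported, since $\bigcup_{(s,t)\in S\times T}\supp(s,t)=\bigcup_{s\in S}\supp(s)\cup\bigcup_{t\in T}\supp(t)$ is finite.

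It then remains to check that the canonical extensions produced by the constructions of \autoref{E:ext} inherit local monotonicity from their ingredients. Recalling that the order on $\Kl\powufs(X,Y)=\Nom(X,\powufs Y)$ is pointwise inclusion, this follows from the explicit pointwise descriptions of the extensions: a constant functor extends to a constant family of morphisms and the identity extends to the identity on hom-sets, so both are trivially monotone; the extension of $[\names](-)$ sends $f\colon X\to\powufs Y$ to the map $\braket{a}x\mapsto\{\braket{a}y: y\in f(x)\}$ (via the maps $\rho$ from the proof of \autoref{P:abs-dist}), which is monotone in $f$; the extension of a coproduct acts summand-wise by post-composing $\barF_j f$ with the direct image along the $j$-th coproduct injection, hence is monotone in $f$ by the inductive hypothesis for $\barF_j$; and the extension of a product sends $(z,w)$ to $\barF f(z)\times\barG f(w)$, monotone in $f$ by the inductive hypotheses for $\barF$ and $\barG$. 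This closes the induction, and at every step the chosen extension is the canonical one.

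I do not expect a genuine obstacle here. The one nominal-specific subtlety --- that the naive distributive law for abstraction is well-defined for $\powufs$ but not for $\powfs$ --- is precisely what \autoref{P:abs-dist} and the remark following it resolve, and commutativity of $\powufs$ is \autoref{P:comm}; the corollary then amounts to bookkeeping along the inductive structure of~\eqref{eq:grammar}. The only point that needs a little care is propagating local monotonicity through the constructions, which the pointwise formulas above render routine.
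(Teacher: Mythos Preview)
Your proposal is correct and follows essentially the same approach as the paper: structural induction over the grammar~\eqref{eq:grammar}, with the base cases handled by \autoref{E:ext}\ref{E:ext:1} and \autoref{P:abs-dist}, and the inductive step by \autoref{E:ext}\ref{E:ext:2},~\ref{E:ext:3} together with \autoref{P:comm}. You are more explicit than the paper in verifying that local monotonicity propagates through the constructions (the paper merely calls this ``easily established''), and your remark that $S\times T$ remains uniformly finitely supported is already subsumed by \autoref{P:comm}, but neither point is a deviation from the paper's argument.
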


Unsurprisingly, an analogous result holds for polynomial functors and
$\powfs$ by the same reasoning applied to a grammar as in~\eqref{eq:grammar}
that does not include the abstraction functor:
\begin{corollary}\label{cor:extension-pfs}
  Every polynomial functor has a canonical locally monotone extension to
  $\Kl\powfs$. 
\end{corollary}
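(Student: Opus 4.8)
The plan is to follow the proof of \autoref{cor:extension-pufs} essentially verbatim, only replacing $\powufs$ by $\powfs$ and working along the grammar \eqref{eq:grammar} with the clause $[\names](-)$ deleted. Concretely, I would argue by structural induction on that grammar, the inductive invariant being: the functor in question carries an extension to $\Kl\powfs$ that is locally monotone for the pointwise inclusion order on each hom-set $\Kl\powfs(X,Y)=\Nom(X,\powfs Y)$. Under this order joins are unions, bottom is the constant empty-set map, and composition is continuous and preserves bottom on the left (this is \autoref{P:dcpo} together with the remarks after \autoref{C:ini-ter}), so all the order-theoretic statements below make sense. Existence of the extensions is just \autoref{E:ext} combined with commutativity of $\powfs$ (\autoref{P:comm}); the point of the corollary is that the canonical extensions so obtained are moreover locally monotone, so that is the only thing requiring a genuine argument.

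For the base cases, constant functors and the identity functor extend by \autoref{E:ext}\ref{E:ext:1}: a constant functor $C$ is extended by the assignment sending every Kleisli morphism to the unit $\eta_C$, and the identity by the identity assignment; a constant map resp.\ the identity is monotone, so these extensions are locally monotone. For the inductive step, suppose $F,G$ (and a family $F_i$, $i\in I$) have locally monotone extensions $\barF,\barG$ (and $\bar F_i$). For a coproduct $\coprod_{i\in I}F_i$, I would use that $J\colon\Nom\to\Kl\powfs$ preserves coproducts (\autoref{E:ext}\ref{E:ext:2}), so that its extension sends $f\colon X\to\powfs Y$ to the Kleisli cotuple $[\,J(\mathrm{in}_i)\cdot\bar F_i f\,]_{i\in I}$; this is monotone in $f$ because each $\bar F_i(-)$ is monotone by hypothesis, postcomposition with the fixed Kleisli morphism $J(\mathrm{in}_i)$ is monotone (continuity of composition), and a cotuple of Kleisli morphisms is computed summand-wise, hence monotone in each component. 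For a finite product $F\times G$, commutativity of $\powfs$ yields the double strength $d_{-,-}$, and $\barFtG$ from \autoref{E:ext}\ref{E:ext:3} sends $f\colon X\to\powfs Y$ to $d_{FY,GY}\cdot(\barF f\times\barG f)$; here $\barF f$ and $\barG f$ are monotone in $f$ by hypothesis, postcomposition with the fixed morphism $d_{FY,GY}$ is monotone, and $(\phi,\psi)\mapsto d_{FY,GY}\cdot(\phi\times\psi)$ is monotone in each argument by the explicit formula $d_{X,Y}(S,T)=\{(s,t):s\in S,\ t\in T\}$ for the double strength of $\powfs$, which is visibly monotone in $S$ and in $T$. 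This closes the induction.

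The one step that uses more than purely formal properties of Kleisli extensions and continuity of composition is the product case, where the description of $d_{-,-}$ for $\powfs$ has to be unwound and its monotonicity in both arguments checked; I expect this to be the main (and only mildly nontrivial) obstacle, though it is a short calculation. Crucially, since the abstraction functor is absent from the grammar, there is no analogue here of the well-definedness problem that blocks the $\powfs$-extension of $[\names](-)$ (the remark following \autoref{P:abs-dist}), which is exactly why the argument for \autoref{cor:extension-pufs} survives the passage from $\powufs$ to $\powfs$.
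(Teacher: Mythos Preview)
Your proposal is correct and follows essentially the same approach as the paper: the paper simply remarks that the result holds ``by the same reasoning'' as \autoref{cor:extension-pufs} applied to the grammar without the abstraction clause, and its appendix proof of \autoref{cor:extension-pufs} is precisely the structural induction you spell out, invoking \autoref{E:ext}\ref{E:ext:2} and~\ref{E:ext:3} together with the (there unproved) observation that the canonical extensions of coproducts and finite products of locally monotone functors remain locally monotone. You have supplied the details the paper omits, in particular the explicit monotonicity check for the double strength of $\powfs$ in the product case.
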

 
\mysubsec{Nominal Coalgebraic Trace Semantics.} Every binding
polynomial functor~$F$ is finitary and therefore has an initial
algebra. In particular, if $F$ arises from a nominal algebraic
signature, we know from Pitts~\cite[Thm.~8.15]{Pitts13} its initial
algebra $\mu F$ is carried by the nominal set of terms modulo
$\alpha$-equivalence (defined in Def.~8.6 of op.~cit.) of the nominal
algebraic signature. If $F$ is polynomial, then $\alpha$-equivalence
is trivial and $\mu F$ the usual set of terms. By \autoref{C:ini-ter}
we have
\begin{corollary}
\begin{enumerate}
\item For every polynomial functor $F$ the terminal coalgebra of its
  canonical extension $\barF$ on $\Kl\powfs$ is carried by the nominal
  set $\mu F$.
\item  For every binding polynomial functor $F$ the terminal coalgebra of
  its canonical extension $\barF$ on $\Kl\powufs$ is carried by the
  nominal set $\mu F$.
\end{enumerate}
\end{corollary}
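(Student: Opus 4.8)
The plan is simply to combine three facts already in place: Pitts' identification of the initial algebra $\mu F$ with the term nominal set $T_F$, \autoref{P:ini} which lifts an initial $F$-algebra to an initial $\barF$-algebra along $J\colon\Nom\to\Kl\powufs$ (resp.\ $J\colon\Nom\to\Kl\powfs$), and \autoref{C:ini-ter} which turns that initial $\barF$-algebra into the terminal $\barF$-coalgebra. No new construction is needed; the work is purely in checking that the hypotheses of the earlier results are met and in tracking the carrier object through the chain.

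Concretely, for item~(2): let $F$ be a binding polynomial functor. By Pitts~\cite[Thm.~8.15]{Pitts13}, $F$ has an initial algebra $(\mu F,\ini)$ with $\mu F = T_F$. By \autoref{cor:extension-pufs}, $F$ has a canonical locally monotone extension $\barF$ on $\Kl\powufs$. Hence \autoref{P:ini} applies and shows that $T_F = \mu F$ is (the carrier of) an initial $\barF$-algebra, with structure $J\ini$. Since $\barF$ is a locally monotone endofunctor on $\Kl\powufs$ possessing an initial algebra, \autoref{C:ini-ter} (itself resting on \autoref{P:dcpo} and \autoref{T:dcpo}) yields that its terminal coalgebra is $(\mu F, J\ini^{-1})$, whose carrier is again $T_F$.

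Item~(1) is the special case where the abstraction functor does not occur in the grammar~\eqref{eq:grammar}: then $\alpha$-equivalence on terms is trivial, so $T_F$ is the ordinary term algebra of the associated signature, which is the initial $F$-algebra. The very same three-step argument applies verbatim, now invoking \autoref{cor:extension-pfs} and the $\powfs$-half of \autoref{C:ini-ter}.

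I do not anticipate a genuine obstacle; the statement is a bookkeeping corollary of the machinery developed above. The one point worth stating explicitly is that the carrier of the terminal coalgebra furnished by \autoref{C:ini-ter} is, as a nominal set, exactly the carrier $\mu F$ of the initial $F$-algebra — this is precisely the content of \autoref{P:ini} — so the identity $\mu F = T_F$ transports directly to the terminal coalgebra. One should likewise note that $\barF$ really is an endofunctor on $\Kl\powufs$ (resp.\ $\Kl\powfs$), which is immediate since it extends $F$, so that \autoref{C:ini-ter} is applicable.
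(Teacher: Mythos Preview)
Your proposal is correct and follows exactly the paper's approach: the corollary is stated immediately after noting Pitts' theorem gives $\mu F = T_F$, with the paper simply writing ``By \autoref{C:ini-ter} we have'' before stating it. You have merely made explicit the ingredients (\autoref{cor:extension-pufs}/\ref{cor:extension-pfs} for local monotonicity of $\barF$, and \autoref{P:ini} for the initial $\barF$-algebra) that the paper leaves implicit in that one-line justification.
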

\noindent
According to \autoref{D:tr} we can thus define a coalgebraic trace
semantics for every coalgebra $X \to \powfs FX$ with $F$ a polynomial functor, as well as for every coalgebra $X\to \powufs FX$ with $F$ a binding
polynomial functor. We now instantiate this to the two types of nominal automata introduced in 
\autoref{sec:nom-aut}.

\mysubsec{Coalgebraic Trace Semantics of NOFAs.}  Recall from
\autoref{rem:nofa-as-coalgebras} that NOFAs are coalgebras
$X\to \powfs F X$ where $FX=1+\names\times X$ on $\Nom$.

\begin{proposition}\label{prop:ini-nofa}
  The initial algebra for $F$ is the nominal set $\names^*$ with
  structure $\iota\colon 1+\names\times \names^* \to \names^*$ defined
  by $\iota(\ast)=\epsilon$ and $\iota(a,w)=aw$.
\end{proposition}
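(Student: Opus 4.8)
The plan is to verify the universal property of $(\names^*,\iota)$ directly, rather than going through the initial-algebra chain. First I would note that $\names^*$ with the pointwise $\Perm(\names)$-action is a nominal set: a word $w=a_1\cdots a_n$ is fixed by every permutation that restricts to the identity on $\{a_1,\dots,a_n\}$, so $\supp(w)=\{a_1,\dots,a_n\}$ is finite; and the map $\iota$ sending $\ast\mapsto\epsilon$ and $(a,w)\mapsto aw$ is clearly equivariant, so $(\names^*,\iota)$ is an $F$-algebra.

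Next, let $(A,a)$ be an arbitrary $F$-algebra, and write $a=[a_0,a_1]$ with $a_0\colon 1\to A$ and $a_1\colon \names\times A\to A$ equivariant. I would define $h\colon \names^*\to A$ by recursion on word length via $h(\epsilon)=a_0(\ast)$ and $h(aw)=a_1(a,h(w))$; this is well-defined since every word is uniquely either empty or of the form $aw'$. Then three routine checks remain. (i)~\emph{$h$ is equivariant}, by induction on length: for the base case, equivariance of $a_0$ forces $a_0(\ast)$ to be fixed by all permutations (as $\pi\cdot\ast=\ast$), so $h(\pi\cdot\epsilon)=h(\epsilon)=\pi\cdot h(\epsilon)$; for the step, $h(\pi\cdot(aw))=h((\pi a)(\pi\cdot w))=a_1(\pi a,h(\pi\cdot w))=a_1(\pi a,\pi\cdot h(w))=\pi\cdot a_1(a,h(w))=\pi\cdot h(aw)$, using equivariance of $a_1$ and the induction hypothesis. (ii)~\emph{$h$ is an $F$-algebra homomorphism}, i.e.\ $h\o\iota=a\o Fh$, checked on the two coproduct summands: on $1$ one has $h(\iota(\ast))=h(\epsilon)=a_0(\ast)=a(\inl(\ast))$, and on $\names\times\names^*$ one has $h(\iota(a,w))=h(aw)=a_1(a,h(w))=a(\inr(a,h(w)))=(a\o Fh)(\inr(a,w))$ since $Fh(\inr(a,w))=\inr(a,h(w))$. (iii)~\emph{$h$ is unique}: any $F$-algebra homomorphism $g\colon\names^*\to A$ satisfies $g(\epsilon)=a_0(\ast)$ and $g(aw)=a_1(a,g(w))$, hence $g=h$ by induction on length.

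Alternatively, one can invoke Pitts' result cited just above the proposition: the initial algebra of the polynomial functor $F$ is the nominal set $T_F$ of terms over its associated signature, which here has a single constant (from the summand $1$) and a $\names$-indexed family of unary operations (from the summand $\names\times(-)$); such terms are stacks of unary operations applied to the constant, hence in evident bijection with words in $\names^*$, compatibly with the algebra structures. I do not anticipate any genuine obstacle: the only point needing a moment's attention is the equivariance of $h$, which works because the permutation action on $\names^*$ merely renames letters without disturbing the word structure and therefore commutes with the defining recursion.
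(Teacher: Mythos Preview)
Your proposal is correct. The paper does not give a detailed proof but simply observes that $F$ arises from the algebraic signature with a constant $\epsilon$ and unary operations $a(-)$ for each $a\in\names$, whose term algebra is evidently $\names^*$; this is exactly your alternative paragraph invoking Pitts' result. Your primary approach---a direct verification of the universal property by recursion on word length---is a more elementary and self-contained route that avoids appealing to the general theory of nominal algebraic signatures; it buys independence from external results at the modest cost of a routine induction. Both are entirely adequate here.
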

Indeed, the functor $F$ arises from from the algebraic signature with
a constant $\epsilon$ and unary operations $a(-)$ for every
$a \in \names$, and clearly the corresponding term algebra is
isomorphic to the algebra $\names^*$.
\begin{corollary}\label{C:ini}
  The terminal coalgebra for the extension
  $\barF\colon \Kl\powfs \to \Kl\powfs$ is $(\names^*, J\ini^{-1})$
  for $\ini$ from~\autoref{prop:ini-nofa}.
\end{corollary}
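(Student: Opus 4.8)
The plan is to obtain this corollary directly from \autoref{T:Kl}, instantiated with $\C = \Nom$, $F = 1 + \names \times (-)$ and $T = \powfs$. All three hypotheses of that theorem have already been established earlier in the paper, so the proof amounts to checking them off.

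First, $\Kl\powfs$ is left strictly $\DCPOb$-enriched: by \autoref{P:dcpo} every hom-set $\Kl\powfs(X,Y)$ is a complete lattice, hence a dcpo with bottom (the bottom being the equivariant map that sends every element of $X$ to $\emptyset$), and, as noted right after \autoref{C:ini-ter}, composition in $\Kl\powfs$ preserves this bottom on the left and preserves all joins (computed as pointwise unions) on both sides. Second, $F$ is a polynomial functor on $\Nom$, so \autoref{cor:extension-pfs} supplies a canonical locally monotone extension $\barF$ on $\Kl\powfs$. Third, by \autoref{prop:ini-nofa} the functor $F$ has an initial algebra, carried by $\names^*$ with structure map $\ini$.

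With these three facts in place, \autoref{T:Kl} immediately yields that $(\names^*, J\ini^{-1})$ is a terminal coalgebra for $\barF$. I do not anticipate a genuine obstacle for this corollary; the only subtlety worth spelling out is that $J\ini^{-1}$ denotes the $\Kl\powfs$-morphism $J(\ini^{-1}) = \eta_{1+\names\times\names^*}\cdot\ini^{-1}$, and that this is indeed the inverse in $\Kl\powfs$ of $J\ini = \eta_{\names^*}\cdot\ini$, which holds because $\ini$ is an isomorphism in $\Nom$ and $J$ is a functor. As an alternative route one could combine \autoref{P:ini}, which lifts $(\names^*,\ini)$ to the initial $\barF$-algebra $(\names^*, J\ini)$, with \autoref{C:ini-ter} applied to the locally monotone endofunctor $\barF$ on $\Kl\powfs$.
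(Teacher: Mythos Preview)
Your proposal is correct and matches the paper's approach: the corollary is stated without proof, as it follows immediately from the preceding results (either via \autoref{T:Kl} or, equivalently, via \autoref{P:ini} combined with \autoref{C:ini-ter}), exactly as you outline.
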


\begin{theorem}\label{T:NOFA-Kl}
  For every NOFA $c\colon X \to \powfs FX$ its coalgebraic trace map
  $\tr_c\colon X \to \powfs(\names^*)$ assigns to every state of $X$
  its accepted data language.
\end{theorem}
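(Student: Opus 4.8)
The plan is to write down the ``accepted data language'' map explicitly and then identify it with $\tr_c$ via the universal property. Concretely, I would define $\ell\colon X\to\powfs(\names^*)$ by letting $\ell(x)$ be the data language accepted by the state $x$, and first check that $\ell$ is a morphism of $\Nom$, i.e.\ a morphism $X\to\names^*$ in $\Kl\powfs$. The key point is that acceptance is equivariant: since the transition relation and the set of final states of the NOFA $c$ are equivariant, any permutation $\pi$ carries a run $x\xto{a_1}\cdots\xto{a_n}x_n$ to a run $\pi\cdot x\xto{\pi a_1}\cdots\xto{\pi a_n}\pi\cdot x_n$ and preserves accepting runs, so $x$ accepts $w$ iff $\pi\cdot x$ accepts $\pi\cdot w$. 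This makes $\ell$ equivariant, and --- specializing to permutations fixing $\supp(x)$ pointwise --- shows that $\supp(x)$ supports $\ell(x)$, so that indeed $\ell(x)\in\powfs(\names^*)$.

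Next I would verify that $\ell$ is a homomorphism of $\barF$-coalgebras from $(X,c)$ to $(\names^*,J\ini^{-1})$, i.e.\ that it makes the square~\eqref{eq:trc} commute. Writing $c=\langle f,\delta\rangle$ as in \autoref{rem:nofa-as-coalgebras} (so that $\inl\ast\in c(x)$ iff $x\in F$, and $\inr(a,x')\in c(x)$ iff $x\xto a x'$), and unwinding the coproduct/product extension formulas from Example~\ref{E:ext} for the commutative monad $\powfs$ (\autoref{P:comm}), one computes for the morphism $\ell$:
\[
  \barF\ell(\inl\ast)=\set{\inl\ast},\qquad
  \barF\ell(\inr(a,x'))=\set{\inr(a,w): w\in\ell(x')}.
\]
Since $J\ini^{-1}(w)=\set{\ini^{-1}(w)}$ is a singleton and Kleisli composition $g\bullet h$ sends $x$ to $\bigcup_{b\in h(x)}g(b)$, the two legs of~\eqref{eq:trc} evaluated at $x$ become
\[
  \set{\inl\ast:\eps\in\ell(x)}\,\cup\,\set{\inr(a,w): aw\in\ell(x)}
\]
and
\[
  \set{\inl\ast: x\in F}\,\cup\,\set{\inr(a,w): \exists x'.\ x\xto a x'\ \wedge\ w\in\ell(x')},
\]
respectively. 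These coincide for every $x$ precisely because $x$ accepts $\eps$ iff $x$ is final, and $x$ accepts $aw$ iff some transition $x\xto a x'$ leads to a state $x'$ accepting $w$ --- that is, the defining recursion of the accepted data language. Hence $\ell$ is a coalgebra homomorphism.

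Finally, by \autoref{C:ini} (an instance of \autoref{T:Kl}) the coalgebra $(\names^*,J\ini^{-1})$ is terminal for $\barF$, so the homomorphism out of $(X,c)$ is unique; by the previous step it is $\ell$, whence $\tr_c=\ell$ assigns to every state its accepted data language. I expect the only real work to be the middle step: carefully assembling $\barF$ on morphisms out of the coproduct and product extension formulas, and then recognizing the resulting equation for $\ell$ as the inductive definition of acceptance. Everything else is routine bookkeeping with supports and with unions in the Kleisli composition.
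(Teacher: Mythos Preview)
Your proposal is correct and follows essentially the same approach as the paper's proof: define the accepted-language map, check it is equivariant and lands in $\powfs(\names^*)$, verify it is a $\barF$-coalgebra homomorphism into $(\names^*,J\ini^{-1})$ by unwinding the Kleisli extension and matching the result against the recursive characterisation of acceptance, and conclude by terminality. The paper presents the homomorphism check elementwise (via a chain of iffs for $\ast$ and for $(a,w)$) rather than by first computing each leg as a set, but the content is identical.
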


\noindent
Indeed, one readily works out that assigning to every state of $X$ its
data language is a coalgebra homomorphism from $(X, c)$ to $(\mu F,
J\iota^{-1})$ in $\Kl\powfs$. 

\mysubsec{Coalgebraic Trace Semantics of RNNAs.}  Recall from
\autoref{rem:rnna-as-coalgebras} that RNNAs are coalgebras
$X \to \powufs FX$ where $FX = 1 + \names \times X + [\names] X$ on
$\Nom$.
\begin{proposition}\label{prop:ini}
  The initial algebra for $F\colon \Nom \to \Nom$ is the nominal set
  $\barstr$ of all bar strings modulo $\alpha$-equivalence with the
  algebra structure $\ini\colon 1 +\names \times (\barstr) +
  [\names](\barstr) \to \barstr$ defined by
  \begin{equation}\label{eq:iniF}
    \ini(*)= [\eps]_\alpha,
    \qquad
    \ini(a,[w]_\alpha) = [aw]_\alpha,
    \qquad
    \ini(\braket a [w]_\alpha) = [|aw]_\alpha.
  \end{equation}
\end{proposition}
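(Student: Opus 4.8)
The plan is to recognize $F$ as the binding polynomial functor of a concrete nominal algebraic signature and then read the initial algebra off Pitts' general description. Over one name sort $\nu$ and one data sort $\delta$, take the signature with a constant $\mathsf{e}\colon\delta$ (for the summand $1$), a prefixing operation $\mathsf{p}\colon\nu\times\delta\to\delta$ (for $\names\times(-)$), and a binding prefixing operation $\mathsf{b}\colon\langle\nu\rangle\delta\to\delta$ whose single argument is bound (for $[\names](-)$). This is precisely the signature whose associated binding polynomial functor is $F$, so by Pitts~\cite[Thm.~8.15]{Pitts13} the initial $F$-algebra is carried by the nominal set $T_F$ of terms of this signature modulo $\alpha$-equivalence, with algebra structure induced by the three operation symbols.

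It then remains to produce an isomorphism of $F$-algebras between $T_F$ and $(\barstr,\ini)$. On raw syntax this is immediate: a closed term is uniquely of one of the forms $\mathsf{e}$, $\mathsf{p}(a,t)$, or $\mathsf{b}(\langle a\rangle t)$, and mapping these recursively to $[\epsilon]_\alpha$, $[aw]_\alpha$, and $[\newletter a\,w]_\alpha$ — where $w$ is the bar string assigned to $t$ — gives an equivariant bijection between raw terms and raw bar strings that intertwines the two algebra structures with $\ini$ of~\eqref{eq:iniF}. The content is to check that this bijection descends to the $\alpha$-quotients, i.e.\ that the structural $\alpha$-equivalence on terms is transported exactly onto the relation $=_\alpha$ of \autoref{def:alpha-fin} on $\barA^*$. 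I would prove this by induction on structure, using that $\langle a\rangle t=\langle b\rangle t'$ iff $a=b$ and $t=t'$, or else $b\fresh t$ and $(a\,b)\cdot t=t'$ (the general form of \autoref{rem:alpheq-technical}(1), valid in any nominal set), together with the fact that $=_\alpha$ is a congruence for the constructors $w\mapsto aw$ and $w\mapsto\newletter a\,w$.

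Alternatively, one can bypass Pitts' theorem and prove initiality of $(\barstr,\ini)$ from scratch: first check that $\ini$ in~\eqref{eq:iniF} is a well-defined equivariant $F$-algebra structure — the only delicate point is the abstraction summand, where $\langle a\rangle[w]_\alpha=\langle b\rangle[v]_\alpha$ implies $[\newletter a\,w]_\alpha=[\newletter b\,v]_\alpha$ by \autoref{rem:alpheq-technical} — and then, given an arbitrary $F$-algebra $(A,[o_1,o_2,o_3])$, define $h\colon\barstr\to A$ by recursion on bar-string length via $h[\epsilon]_\alpha=o_1(\ast)$, $h[aw]_\alpha=o_2(a,h[w]_\alpha)$ and $h[\newletter a\,w]_\alpha=o_3(\langle a\rangle h[w]_\alpha)$, verify that $h$ is well defined on $\alpha$-classes and equivariant, and observe that it is the unique $F$-algebra homomorphism into $(A,[o_1,o_2,o_3])$. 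Either way, the only genuine obstacle is reconciling the bar-string $\alpha$-equivalence of \autoref{def:alpha-fin}, which is generated by the single-context rule $x\newletter av=_\alpha x\newletter bw$ for $\langle a\rangle v=\langle b\rangle w$, with the compositional notion of $\alpha$-equivalence (on terms, resp.\ the one needed for well-definedness of $h$): one has to show that $=_\alpha$ is closed under the two constructors and that a bound renaming occurring deep inside a bar string is realized by finitely many applications of the generating rule. This is a routine but slightly fiddly induction, in which \autoref{rem:alpheq-technical} does the heavy lifting.
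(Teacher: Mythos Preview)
Your primary approach is essentially the paper's: recognize $F$ as the binding polynomial functor of a nominal algebraic signature (a constant, a name-indexed prefix operation, and a binding prefix), invoke Pitts~\cite[Thm.~8.15]{Pitts13} to obtain the initial algebra as terms modulo $\alpha$-equivalence, and then identify terms with bar strings while reconciling Pitts' structural $\alpha$-equivalence with the one in \autoref{def:alpha-fin}. The paper compresses all of this into one paragraph and leaves the matching of the two $\alpha$-equivalences as ``not difficult to show'', whereas you spell out where the work lies; your alternative direct verification of initiality is sound but not pursued in the paper.
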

\noindent
Indeed, the functor $F$ arises from a nominal algebraic signature with
a constant~$\eps$, unary operations $a(-)$ for every $a \in \names$
and one unary name binding operation $\newletter{}$. Terms over this
signature are obviously the same as bar strings. Moreover, it is not
difficult to show that Pitts' notion of $\alpha$-equivalence for
terms~\cite[Def.~8.6]{Pitts13} is equivalent to $\alpha$-equivalence
for bar strings in~\autoref{def:alpha-fin}. Finally, the algebra
structure in~\eqref{eq:iniF} above corresponds to the one given by term formation by
Pitts~\cite[Thm.~8.15]{Pitts13}. Using \autoref{T:Kl} we thus
obtain the following result.
\begin{corollary}
  The terminal coalgebra for the extension
  $\barF\colon \Kl\powufs \to \Kl\powufs$ is $(\barstr, J\ini^{-1})$
  for $\ini$ from~\eqref{eq:iniF}.\smnote{The next overfull goes away
    if we switch the document from `draft' to `final'!}%
\end{corollary}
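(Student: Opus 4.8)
The plan is to read the statement off \autoref{T:Kl} with $\C = \Nom$, the monad $T = \powufs$, and $F$ the functor $FX = 1 + \names \times X + [\names] X$ of \autoref{rem:rnna-as-coalgebras}. That theorem requires three inputs: that $\Kl\powufs$ is left strictly $\DCPOb$-enriched, that $F$ carries a locally monotone extension $\barF$ to $\Kl\powufs$, and that $F$ has an initial algebra. All three are already available from the preceding development, so the task is essentially to assemble them.

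For the enrichment I would invoke \autoref{P:dcpo}, which gives that every hom-set $\Kl\powufs(X,Y)$ is a complete lattice, hence a dcpo whose bottom is the constant empty-set map. Since Kleisli composition of $f$ with $g$ is computed as $\mu_Z \cdot \powufs g \cdot f$, i.e.\ by taking a direct image followed by a union, it sends the bottom map on the left to the bottom map and, being itself a union, preserves arbitrary joins — in particular directed ones — in either argument. Thus $\Kl\powufs$ is left strictly $\DCPOb$-enriched, exactly as already noted in the justification of \autoref{C:ini-ter}.

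For the other two inputs, note first that $F$ is a binding polynomial functor in the sense of grammar~\eqref{eq:grammar}: it is the coproduct of the constant functor $1$, the functor $\names \times (-)$ (a product of a constant functor with $\Id$), and the abstraction functor $[\names](-)$. Hence \autoref{cor:extension-pufs} provides a canonical locally monotone extension $\barF$ of $F$ to $\Kl\powufs$ — concretely, the one assembled from the extensions of the constituents via the coproduct and product constructions of \autoref{E:ext} together with the distributive law for $[\names](-)$ of \autoref{P:abs-dist}. And \autoref{prop:ini} identifies the initial $F$-algebra as $(\barstr, \ini)$ with $\ini$ as in~\eqref{eq:iniF}; by Lambek's lemma $\ini$ is invertible in $\Nom$, so $J\ini$ is invertible in $\Kl\powufs$ with inverse $J\ini^{-1}$. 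Feeding all of this into \autoref{T:Kl} yields that $(\barstr, J\ini^{-1})$ is a terminal coalgebra for $\barF$, as claimed.

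I do not foresee a genuine obstacle here: the three hard parts — the (mildly surprising) complete-lattice enrichment of $\Kl\powufs$, the construction of the extension through the troublesome abstraction functor, and the identification of the initial algebra with bar strings modulo $\alpha$-equivalence — have all been handled by the earlier results. The only point needing a moment's care is pure bookkeeping, namely confirming that the extension $\barF$ named in the corollary is precisely the canonical one produced by \autoref{cor:extension-pufs}, so that \autoref{T:Kl} applies verbatim.
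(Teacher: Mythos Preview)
Your proposal is correct and follows exactly the approach the paper takes: the corollary is obtained by applying \autoref{T:Kl} once the three hypotheses (enrichment of $\Kl\powufs$ via \autoref{P:dcpo}, the locally monotone extension via \autoref{cor:extension-pufs}, and the initial algebra via \autoref{prop:ini}) have been secured. The paper states this in a single sentence (``Using \autoref{T:Kl} we thus obtain the following result''), while you have spelled out the verification of each hypothesis, but the argument is the same.
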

\begin{theorem}\label{T:RNNA-Kl}
  For every RNNA $c\colon X \to \powufs FX$ its coalgebraic trace map
  $\tr_c\colon X \to \powufs(\barstr)$ assigns to every state of $X$
  its accepted bar language.
\end{theorem}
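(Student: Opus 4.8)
The plan is to exhibit the function $L\colon X\to\powufs(\barstr)$ given by $L(q)=\{\,[w]_\alpha : q\text{ accepts }w\,\}$, to show that it is a morphism of $\Kl\powufs$ and a $\barF$-coalgebra homomorphism from $(X,c)$ into $(\barstr, J\ini^{-1})$, and then to invoke uniqueness. Terminality of $(\barstr, J\ini^{-1})$ is \autoref{T:Kl} applied to $F$ and $T=\powufs$ (using the locally monotone extension $\barF$ from \autoref{cor:extension-pufs} and the initial algebra from \autoref{prop:ini}); hence once $L$ is shown to be a coalgebra homomorphism we obtain $\tr_c=L$, which is exactly the claim.

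The first step is to check that $L$ really lands in $\powufs(\barstr)$ rather than merely $\powfs(\barstr)$; this is the one genuinely nominal point. Since $c$ has codomain $\powufs FX$, every value $c(q)$ — and with it the free-transition set $\delta(q)$ and the bound-transition set $\tau(q)$ — is uniformly finitely supported, so by equivariance $\bigcup_{y\in\delta(q)}\supp(y)=\supp(\delta(q))\seq\supp(q)$ and similarly for $\tau(q)$. An induction along any run $q\xto{\sigma_1}q_1\xto{\sigma_2}\cdots$ then gives $\supp(q_i)\seq\supp(q)\cup\{\,b : \newletter b\text{ occurs among }\sigma_1,\dots,\sigma_i\,\}$: a free transition $q_{i-1}\xto{\sigma_i}q_i$ forces $\{\sigma_i\}\cup\supp(q_i)\seq\supp(q_{i-1})$, while a bound transition $q_{i-1}\xto{\scriptnew a}q_i$ only forces $\supp(q_i)\setminus\{a\}\seq\supp(q_{i-1})$. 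Consequently, if $a$ is a free name of an accepted bar string $w$, then at the position $i$ of its first plain occurrence we have $a\in\supp(q_{i-1})$ but $a$ does not occur barred among $\sigma_1,\dots,\sigma_{i-1}$, so $a\in\supp(q)$. Thus $\supp([w]_\alpha)\seq\supp(q)$ for each $[w]_\alpha\in L(q)$ by \autoref{rem:alpheq-technical}, hence $\bigcup_{[w]_\alpha\in L(q)}\supp([w]_\alpha)\seq\supp(q)$ is finite and $L(q)\in\powufs(\barstr)$. Equivariance of $L$ is immediate, since applying $\pi$ to an accepting run for $w$ from $q$ yields an accepting run for $\pi\cdot w$ from $\pi\cdot q$.

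The heart of the argument is to verify that $L$, as a morphism of $\Kl\powufs$, makes the square \eqref{eq:trc} commute (with $\tr_c$ replaced by $L$), i.e.\ that $J\ini^{-1}$ and $\barF L$ compose with $L$ and $c$ to equal $\Nom$-morphisms $X\to\powufs\bigl(1+\names\times(\barstr)+[\names](\barstr)\bigr)$. Unravelling Kleisli composition, the left composite sends $q$ to $\{\,\ini^{-1}([w]_\alpha) : [w]_\alpha\in L(q)\,\}$, and — using the description of the coproduct extension in \autoref{E:ext} together with the canonical distributive law $\lambda$ for $\names\times(-)$ and the law $\rho$ for $[\names](-)$ from \autoref{P:abs-dist} — the right composite sends $q$ to the union over $y\in c(q)$ of $(\barF L)(y)$, which case by case is $\{\,\ast : q\text{ final}\,\}\cup\{\,(a,[v]_\alpha) : q\xto{a}q',\,[v]_\alpha\in L(q')\,\}\cup\{\,\braket a[v]_\alpha : q\xto{\scriptnew a}q',\,[v]_\alpha\in L(q')\,\}$ (with $q'$ existentially quantified). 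One compares the two descriptions on each of the three summands. Constant summand: $\ast$ occurs on the left iff $[\eps]_\alpha\in L(q)$ iff $q$ is final. Summand $\names\times(\barstr)$: since every bar string $\alpha$-equivalent to $av$ has the form $av'$ with $v'=_\alpha v$ and any run for such a word opens with a transition reading $a$, the element $(a,[v]_\alpha)$ occurs on the left iff $[av]_\alpha\in L(q)$ iff $q\xto{a}q'$ and $[v]_\alpha\in L(q')$ for some $q'$. Binding summand $[\names](\barstr)$: since a bar string $\alpha$-equivalent to $\newletter a w$ has the form $\newletter c w'$ with $\newletter c w'=_\alpha\newletter a w$ (so $\alpha$-equivalence respects the barred/plain pattern of letters), and $\ini$ is injective with $\ini(\braket a[w]_\alpha)=[\newletter a w]_\alpha$, the element $\braket a[w]_\alpha$ occurs on the left iff $[\newletter a w]_\alpha\in L(q)$ iff there are $c,q',v$ with $q\xto{\scriptnew c}q'$, $[v]_\alpha\in L(q')$ and $\braket c[v]_\alpha=\braket a[w]_\alpha$ in $[\names](\barstr)$. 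In all three cases the two sides agree; in the binding case $\alpha$-invariance of the RNNA enters via the coalgebraic presentation of \autoref{rem:rnna-as-coalgebras}, which identifies a bound transition $q\xto{\scriptnew a}q'$ with the membership $\braket a q'\in\tau(q)$.

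The step I expect to be the main obstacle is exactly this binding summand: keeping straight the interplay between $\alpha$-equivalence of bar strings, the abstraction set $[\names](\barstr)$, and the action of $\barF$ on the $[\names](-)$-component. One needs that $\alpha$-equivalence preserves the pattern of barred versus plain letters (so that $\alpha$-equivalents of $\newletter a w$ again start with a barred letter), together with the identification of bar-string $\alpha$-equivalence and Pitts' term $\alpha$-equivalence already invoked after \autoref{prop:ini}. The uniform-finite-support check of the second paragraph is the other place requiring care, though it is short. Once the square commutes, terminality of $(\barstr, J\ini^{-1})$ forces $\tr_c=L$, so $\tr_c$ assigns to every state its accepted bar language.
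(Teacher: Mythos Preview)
Your proof is correct and follows essentially the same route as the paper: show that the accepted-language map is a well-defined equivariant morphism into $\powufs(\barstr)$, verify that it makes the coalgebra square~\eqref{eq:trc} commute by a case analysis over the three summands, and conclude by terminality. The only real difference is that for the uniform-finite-support check the paper simply cites \cite[Cor.~5.5]{SchroderEA17}, whereas you supply a direct inductive argument along runs; your argument is correct and makes the proof self-contained. One small notational quibble: in the binding-summand case you reuse $c$ for a name, clashing with the coalgebra structure $c\colon X\to\powufs FX$; pick a different letter (the paper uses $b$).
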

Indeed, one readily works out that assigning to every state of $X$ its
bar language is a coalgebra homomorphism from $(X, c)$ to $(\mu F,
J\iota^{-1})$ in $\Kl\powufs$. 

\section{Coalgebraic Language Semantics}
\label{S:lang}

In this section we shall see that the language semantics of NOFAs
and RNNAs is an instance of coalgebraic language
semantics~\cite{bms13}. The latter is based on the generalized
determinization construction by Silva et al.~\cite{SilvaEA13}. Here
one considers coalgebras for a functor $GT$, where $T$ models a
branching type and $G$ models the type of transition of a system
(similarly as before in the coalgebraic trace semantics, but this time
the order of composition is reversed). Again, we will apply this to
coalgebras in $\Nom$ for functors $GT$, where $T
= \powfs$ and $G$ is functor composed of products and exponentials, or to $T= \powufs$ and $G$ composed of products, exponentials and binding functors. Specifically, we obtain the two desired types
of nominal automata as instances. 

\subsection{A Recap of General Coalgebraic Language Semantics}
\label{S:lan}

We begin by recalling a few fact about liftings of functors to
Eilenberg-Moore categories.
\begin{rem}\label{R:lift}
  Let $G$ be a functor and $(T,\eta,\mu)$ be a monad
  on the category $\C$.
  \begin{enumerate}
  \item The \emph{Eilenberg-Moore} category $\EM T$ consists of
    algebras $(A,a)$ for $T$, that is, pairs formed by an object $A$ and a
    morphism $a\colon TA \to A$ such that $a\cdot \eta_A=\id_A$ and $a \cdot \mu_A = a
    \cdot Ta$. A morphism in $\EM T$ from $(A,a)$ to $(B,b)$ is a
    morphism $h\colon A \to B$ of $\C$ such that $h \cdot a = b \cdot
    Th$. We write $U\colon \EM T \to \C$ for the forgetful functor
    mapping an algebra $(A,a)$ to its underlying object $A$.

  \item A \emph{lifting} of $G$ is an endofunctor $\hatG \colon \EM T \to \EM
    T$ such that $GU = U\hatG$. As shown by Applegate~\cite{Applegate65} (see
    also Johnstone~\cite{Johnstone75}), liftings of $G$ to $\EM T$ are in
    bijective correspondence with distributive laws of $T$ over
    $G$. The latter are natural transformations $\lambda\colon TG \to
    GT$ compatible with the monad structure:
    \[
      G\eta = \lambda \cdot \eta G,
      \qquad
      \lambda \cdot \mu G = G\mu \cdot \lambda T \cdot T\lambda.
    \]

  \item\label{R:lift:3} Suppose that $G$ has a terminal coalgebra
    $(\nu G, \tau)$ and the lifting $\hatG$ on $\EM T$ via the
    distributive law $\lambda$. It follows from the work of Turi and
    Plotkin~\cite{TuriP97} (see also
    Bartels~\cite[Thm.~3.2.3]{Bartels04}) that the terminal coalgebra
    for $G$ lifts to a terminal coalgebra for $\hatG$. In fact, one
    obtains a canonical structure of a $T$-algebra on $\nu G$ by
    taking the unique coalgebra homomorphism $\alpha$ in the diagram
    below:
    \[
      \begin{tikzcd}
        T(\nu G)
        \ar{r}{T\ter}
        \ar{d}[swap]{\alpha}
        &
        TG(\nu G)
        \ar{r}{\lambda_{\nu G}}
        &
        GT(\nu G)
        \ar{d}{G\alpha}
        \\
        \nu G
        \ar{rr}{\ter}
        &&
        G(\nu G)
      \end{tikzcd}      
    \]
    It is then easy to prove that $\alpha$ is indeed the structure of an
    algebra for $T$ and that $\ter\colon \nu G \to G(\nu G)$ is a
    homomorphism of Eilenberg-Moore algebras (in fact, this is
    expressed by the commutativity of the above diagram). Moreover,
    $(\nu G, \ter)$ is the terminal $\hatG$-coalgebra.
  \end{enumerate}
\end{rem}
%

%
\begin{proposition}\label{P:adj-trans}
  Let $T\colon \C \to \C$ be a monad and $L \dashv R\colon \C \to \C$
  an adjunction with the counit $\eps\colon LR \to \Id$. Given a distributive law 
  $\lambda\colon LT \to TL$, we obtain a
  distributive law $\rho\colon TR \to RT$ as the adjoint
  transpose of
  \mbox{\(
    LTR \xra{\lambda R} TLR \xra{T\eps} T.
  \)}
\end{proposition}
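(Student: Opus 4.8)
The plan is to verify directly that the adjoint transpose $\rho$ is a distributive law of the monad $T$ over the functor $R$ in the sense of \autoref{R:lift}, by transposing the two required equations across the adjunction. Write $u\colon\Id\to RL$ for the unit of $L\dashv R$, so that the triangle laws read $R\eps\cdot uR=\id_R$ and $\eps L\cdot Lu=\id_L$; write $(-)^{\flat}$ for adjoint transposition $\C(LP,Q)\to\C(P,RQ)$ and $(-)^{\sharp}$ for its inverse (so $g^{\sharp}=\eps Q\cdot Lg$); and set $\psi\defeq T\eps\cdot\lambda R\colon LTR\to T$, so that $\rho=\psi^{\flat}=R\psi\cdot uTR$. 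Naturality of $\rho$ is immediate. What has to be shown is $\rho\cdot\eta R=R\eta$ and $\rho\cdot\mu R=R\mu\cdot\rho T\cdot T\rho$. The guiding observation is that adjoint transposition is compatible with pre-composition (which is just naturality of $u$) and with post-composition (which is functoriality of $R$); hence transposing will convert the two \emph{functor-over-monad} laws of $\lambda$ — the unit law $\lambda\cdot L\eta=\eta L$ and the multiplication law $\lambda\cdot L\mu=\mu L\cdot T\lambda\cdot\lambda T$ (\autoref{R:Kleisli}) — together with naturality of $\eta$, $\mu$, $\eps$ and the triangle laws, into exactly the two \emph{monad-over-functor} laws of $\rho$.

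For the unit law I would use pre-composition compatibility to rewrite $\rho\cdot\eta R=(\psi\cdot L\eta R)^{\flat}$, then compute $\psi\cdot L\eta R=T\eps\cdot(\lambda\cdot L\eta)R=T\eps\cdot\eta LR=\eta\cdot\eps$ (the last step by naturality of $\eta$), and transpose back: $\rho\cdot\eta R=(\eta\cdot\eps)^{\flat}=R\eta\cdot\eps^{\flat}$. Since $\eps^{\flat}=R\eps\cdot uR=\id_R$ by the first triangle law, this gives $\rho\cdot\eta R=R\eta$.

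For the multiplication law I would argue in the same spirit, at the cost of a couple more lines. Pre-composition compatibility gives $\rho\cdot\mu R=(\psi\cdot L\mu R)^{\flat}$; the multiplication law of $\lambda$ followed by naturality of $\mu$ rewrites $\psi\cdot L\mu R=T\eps\cdot(\mu L\cdot T\lambda\cdot\lambda T)R$ as $\mu\cdot T\psi\cdot\lambda TR$, and post-composition compatibility then yields $\rho\cdot\mu R=R\mu\cdot(T\psi\cdot\lambda TR)^{\flat}$. It remains to identify $(T\psi\cdot\lambda TR)^{\flat}$ with $\rho T\cdot T\rho$; I would do this by transposing the other way and computing $(\rho T\cdot T\rho)^{\sharp}=\eps TT\cdot(L\rho)T\cdot LT\rho$ (functoriality of $L$). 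Since $\psi$ is by construction the inverse transpose $\rho^{\sharp}=\eps T\cdot L\rho$ of $\rho$, the first two factors collapse to $\psi T$, and a single application of naturality of $\lambda$ at the morphism $\rho$ (together with functoriality of $T$) then identifies $\psi T\cdot LT\rho$ with $T\psi\cdot\lambda TR$, as required. I do not expect a genuine obstacle: the only delicate point is keeping the whiskered natural transformations straight in this last identification, and the key simplification — recognising $\psi=\rho^{\sharp}$ — makes it come down to one naturality square. Conceptually the proposition merely says that the mate of a distributive law under an adjunction is again a distributive law, which is transparent in string-diagram notation; the calculation above is its routine elementary verification.
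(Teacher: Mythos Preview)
Your proof is correct and is essentially the same argument as the paper's: both verify the two Eilenberg--Moore laws for $\rho$ by adjoint-transposing them and reducing to the unit and multiplication laws of $\lambda$ together with naturality of $\eta$, $\mu$, $\eps$, and $\lambda$. The only cosmetic difference is that the paper phrases ``transpose both sides'' as ``apply $L$ and postcompose with $\eps T$'' and records the verifications as commutative diagrams, whereas you work equationally and in the multiplication case transpose the right-hand side back via $(-)^{\sharp}$; the underlying steps (including the use of naturality of $\lambda$ at $\rho$) match.
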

\noindent Recall that the \emph{adjoint transpose} of a morphism $LX
\to Y$ is the corresponding morphism $X \to RY$ under the natural
isomorphism $\C(LX, Y) \cong \C(X,RY)$.
\begin{expl}\label{E:lift}
  \begin{enumerate}
  \item The identity functor on $\C$ obviously
    lifts to $\EM T$, and so does a constant functor on the carrier object
    of an Eilenberg-Moore algebra for~$T$. 

    
  \item Suppose that $\C$ has products. Then for a product $F \times
    G$ of functors one uses that $U$ preserves products. Given
    liftings $\hatF$ and $\hatG$, it is clear that $\hatF \times
    \hatG$ is a lifting of $F\times G$. This works similarly for
    arbitrary products.

  \item Suppose that $\C$ is cartesian closed and that the monad $T$
    is strong (cf.~\autoref{E:ext}\ref{E:ext:3}). Then the
    exponentiation functor $(-)^A$ lifts to $\EM T$ for every object
    $A$ of $\C$. In fact, we apply \autoref{P:adj-trans} to the
    adjunction $A \times (-) \dashv (-)^A$ and use that two of the
    axioms of the strength $A \times TX \to T(A \times X)$ state that
    it is a distributive law of $A\times (-)$ over $T$.
    \takeout{
    it is an easy exercise
    (see~\cite[Exercise~5.2.16]{Jacobs16}) to prove that one obtains a
    distributive law $\lambda$ with the component at an object $X$
    given by currying the following morphisms
    \[
      Y \times T(X^Y) \xra{s_{Y,X^Y}} T(Y \times X^Y) \xra{T\ev} TX,
    \]
    where $s$ is the strength of $T$ and $\ev\colon Y \times X^\Y \to
    X$ is the evaluation morphism (a component of the counit of the
    adjunction $Y \times (-) \dashv (-)^Y$.}
  \end{enumerate}
\end{expl}
\begin{rem}\label{R:hom-ext}
  Recall that, for every monad $(T,\eta,\mu)$ on $\C$, the pair
  $(TX,\mu_X)$ is the free algebra for $T$ on $X$ with the
  universal morphism $\eta_X\colon X \to TX$. Given an Eilenberg-Moore
  algebra $(A,a)$ for $T$ and a morphism $f\colon X \to A$ in
  $\C$, we have a unique morphism  $f^\sharp\colon (TX,\mu_X) \to
  (A,a)$ in $\EM T$ such that $f^\sharp \cdot \eta_X
  = f$. We call $f^\sharp$ the \emph{homomorphic extension} of $f$.
\end{rem}
\begin{construction}[Generalized Determinization~\cite{SilvaEA13}]\label{C:gen-det}
  Let $T$ be a monad on the category $\C$ and $G$ an endofunctor on
  $\C$ having a lifting $\hatG$ on $\EM T$. Given a coalgebra
  $c\colon X \to GTX$ its \emph{(generalized) determinization} is the
  $G$-coalgebra obtained by taking the homomorphic extension
  $c^\sharp\colon TX \to GTX$ using that $\hatG(TX, \mu_X)$ is an
  algebra for $T$ carried by $GTX$.
\end{construction}
\takeout{
\begin{rem}\label{R:gen-det}
  It is straightforward to establish that for every homomorphism
  $h\colon (X,c) \to (Y,d)$ we obtain that $Th\colon (TX,c^\sharp) \to
  (TY,d^\sharp)$ is a homomorphism of coalgebras for $\hatG$. It
  follows that generalized determinization
  is the object assigment of a functor $D\colon \Coalg TF \to \Coalg
  \hatG$.  
\end{rem}}
\noindent
Among the instances of this construction are the well-known power-set construction of
deterministic automata~\cite{SilvaEA13} as well as the non-determinization of alternating
automata and that of Simple Segala systems~\cite{JacobsEA15}.

\begin{defn}[Coalgebraic Language Semantics~\cite{bms13}]\label{D:coalg-lan-sem}
  Given $T$, $G$ and a coalgebra $c\colon X \to GTX$ as in \autoref{C:gen-det}, the
  \emph{coalgebraic language morphism} $\lan c\colon X \to \nu G$ is the
  composite of the unique coalgebra homomorphism $h$ from the
  determinization of $(X,c)$ to $\nu G$ with the unit $\eta_X$ of the
  monad $T$, which is summarized in the diagram on the left below:
\end{defn}

\vspace*{-5pt}
\noindent
\begin{minipage}[t]{4.75cm}
  \hspace*{8pt}
  \(
    \begin{tikzcd}[column sep = 10]
      X
      \ar{d}[swap]{c}
      \ar{r}{\eta_X}
      \ar[shiftarr = {yshift=15}]{rr}{\lan c}
      &
      TX
      \ar{r}{h}
      \ar{dl}[swap]{c^\sharp}
      &
      \nu G
      \ar{d}{\ter}
      \\
      GTX
      \ar{rr}{Gh}
      &
      &
      G(\nu G)
    \end{tikzcd}
  \)
\end{minipage}
\hfill
\begin{minipage}{7.3cm}
\noindent
Among the instances of coalgebraic language semantics are, of course,
the language semantics of non-deterministic~\cite{SilvaEA13,JacobsEA15}, weighted and
probabilistic automata, but also the languages generated by
context-free grammars~\cite{WinterEA13,mpw20}, constructively
$\dsS$-algebraic formal power series for  a semiring $\dsS$ 
\end{minipage}

\noindent
(the `context-free' weighted languages)~\cite{WinterEA15,mpw20}. Less
direct instances are the languages accepted by machines with extra
memory such as (deterministic) push-down automata and Turing
machines~\cite{gms20}.%

\mysubsec{Relation of Coalgebraic Trace and Language Semantics.}
Jacobs et al.~\cite{JacobsEA15} show how the coalgebraic trace
semantics and coalgebraic language semantics are connected in cases
where both are applicable. We give a terse review of this including a
proof (see appendix) of the result of op.~cit.~that we use here.
\begin{assumption}
  We assume that $T$ is a monad and $F, G$ are endofunctors, all on
  the category $\C$, such that $F$ has the extension $\barF$ on
  $\Kl T$ via the distributive law $\lambda\colon FT \to TF$ and $G$
  has the lifting $\hatG$ on $\EM T$ via the distributive law
  $\rho\colon TG \to GT$. Moreover, we assume that we have an
  \emph{extension} natural transformation $\eps\colon TF \to GT$
  compatible with the two distributive laws:
  \begin{equation}\label{diag:eps}
    \begin{tikzcd}
      TFT
      \ar{r}{T\lambda}
      \ar{d}[swap]{\eps T}
      &
      TTF
      \ar{r}{\mu F}
      &
      TF
      \ar{d}{\eps}
      \\
      GTT
      \ar{rr}{G\mu}
      &&
      GT
    \end{tikzcd}
    \qquad\quad
    \begin{tikzcd}
      TTF
      \ar{rr}{\mu F}
      \ar{d}[swap]{T\eps}
      &&
      TF
      \ar{d}{\eps}
      \\
      TGT
      \ar{r}{\rho T}
      &
      GTT
      \ar{r}{G \mu}
      &
      GT
    \end{tikzcd}
  \end{equation}
\end{assumption}
\begin{rem}\label{R:eps}
  \begin{enumerate}
  \item\label{R:eps:1} For every object $X$ of $\C$ the morphism $\eps_X$ is
    a homomorphism of Eilenberg-Moore algebras for $T$ from
    $(TFX, \mu_{FX})$ to $\hatG(TX, \mu_X)$. Indeed, this is precisely
    what the commutativity of the diagram on the right
    in~\eqref{diag:eps} expresses.
    
  \item\label{R:eps:2} For every coalgebra $c\colon X \to TFX$ the
    extension natural transformation yields a coalgebra
    $\eps_X \cdot c\colon X \to GTX$, and we take its determinization
    $(TX, (\eps_X \cdot c)^\sharp)$. This is the object assignment of
    the functor $E\colon \Coalg \barF \to \Coalg \hatG$ which maps an
    $\barF$-coalgebra homomorphism $h\colon (X,c) \to (Y,d)$ to
    $Eh = h^\sharp\colon TX \to TY$, the homomorphic extension of
    $h\colon X \to TY$ (in $\C$). One readily proves that $h^\sharp$
    is a $\hatG$-coalgebra homomorphism using the naturality of $\eps$
    as well as the laws in~\eqref{diag:eps}. Functoriality follows
    since $E$ is clearly a lifting of the canonical comparison functor
    $\Kl T \to \EM T$; see Jacobs et al.~\cite[Thm.~2]{JacobsEA15} for
    the proof, and we include a proof in the appendix for the
    convenience of the reader.
    
  \item\label{R:eps:3} We obtain a canonical morphism
    $e\colon T(\mu F) \to \nu G$ by applying the functor $E$ to the
    coalgebra $J\ini^{-1}\colon \mu F \to TF(\mu F)$
    (cf.~\autoref{P:ini}) and taking the unique coalgebra homomorphism
    from it to the terminal $\hatG$-coalgebra
    (\autoref{R:lift}\ref{R:lift:3}).
  \end{enumerate}
\end{rem}

Now recall the coalgebraic trace semantics from \autoref{D:tr}. The
following result follows from Jacobs et al.'s
result~\cite[Prop.~5]{JacobsEA15}.
\begin{proposition}\label{P:relation}
  For every coalgebra $c\colon X \to TFX$ we have
  \[
    \lan(\eps_X\cdot c) = \big(X \xra{\tr_c} T(\mu F) \xra{e} \nu G \big).
  \]
\end{proposition}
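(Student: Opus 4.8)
\noindent
The plan is to exploit the functoriality of $E\colon \Coalg\barF \to \Coalg\hatG$ (\autoref{R:eps}\ref{R:eps:2}) together with the universal properties defining $\tr_c$, $e$, and $\lan(\eps_X\cdot c)$. No genuinely new computation is needed: the whole statement is a chase of terminality.

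First I would recall that by \autoref{D:tr} the trace map $\tr_c$ is a morphism in $\Coalg\barF$, namely the unique $\barF$-coalgebra homomorphism from $(X,c)$ to $(\mu F, J\ini^{-1})$. Applying $E$ yields the $\hatG$-coalgebra homomorphism $E\tr_c = (\tr_c)^\sharp\colon TX \to T(\mu F)$ --- the homomorphic extension (\autoref{R:hom-ext}) of the underlying $\C$-morphism $\tr_c\colon X \to T(\mu F)$ of the Kleisli morphism $\tr_c$ --- from $E(X,c) = (TX,(\eps_X\cdot c)^\sharp)$ to $E(\mu F, J\ini^{-1}) = (T(\mu F),(\eps_{\mu F}\cdot J\ini^{-1})^\sharp)$.

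Next I would recall that $e\colon T(\mu F)\to\nu G$ is, by \autoref{R:eps}\ref{R:eps:3}, precisely the unique $\hatG$-coalgebra homomorphism from $E(\mu F, J\ini^{-1})$ to the terminal $\hatG$-coalgebra $(\nu G,\ter)$. Hence the composite $e\cdot(\tr_c)^\sharp\colon (TX,(\eps_X\cdot c)^\sharp)\to(\nu G,\ter)$ is a $\hatG$-coalgebra homomorphism. On the other hand, by \autoref{D:coalg-lan-sem} we have $\lan(\eps_X\cdot c) = h\cdot\eta_X$, where $h\colon TX\to\nu G$ is the unique $\hatG$-coalgebra homomorphism from the determinization $(TX,(\eps_X\cdot c)^\sharp) = E(X,c)$ to $(\nu G,\ter)$. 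By terminality of $(\nu G,\ter)$ in $\Coalg\hatG$ we obtain $h = e\cdot(\tr_c)^\sharp$, and then precomposing with $\eta_X$ and using $(\tr_c)^\sharp\cdot\eta_X = \tr_c$ (\autoref{R:hom-ext}) gives
\[
  \lan(\eps_X\cdot c) \;=\; h\cdot\eta_X \;=\; e\cdot(\tr_c)^\sharp\cdot\eta_X \;=\; e\cdot\tr_c,
\]
as claimed.

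The only point requiring care --- and the place I would expect a careless reader to stumble --- is the bookkeeping around the functor $E$: one must keep the Kleisli morphism $\tr_c\colon X\to\mu F$ distinct from the $\C$-morphism $X\to T(\mu F)$ underlying it, and verify that $E$ sends the former to the homomorphic extension of the latter. But this is exactly the definition of $E$ on morphisms in \autoref{R:eps}\ref{R:eps:2}, so the compatibility laws \eqref{diag:eps} enter only through the already-established well-definedness and functoriality of $E$; the argument proper is just the two applications of a universal property sketched above.
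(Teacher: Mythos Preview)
Your proof is correct and follows essentially the same route as the paper's: apply the functor $E$ to the $\barF$-coalgebra homomorphism $\tr_c$ to obtain the $\hatG$-coalgebra homomorphism $\tr_c^\sharp$, compose with $e$, invoke terminality of $(\nu G,\ter)$ to identify the composite with $h$, and precompose with $\eta_X$. Your additional commentary on the Kleisli-versus-$\C$ bookkeeping is accurate but not needed beyond what is already packaged into \autoref{R:eps}\ref{R:eps:2}.
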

\subsection{Coalgebraic Language Semantics of Nominal Systems}

We will now work towards that the language semantics of nominal
automata is an instance of coalgebraic language semantics. To this end
we will instantiate the results of \autoref{S:lan} to $\C= \Nom$,
$GX = 2 \times X^\names$ and $T = \powfs$ (for NOFAs), or to
$GX = 2 \times X^\names \times [\names]X$ and $T= \powufs$ (for
RNNAs). More generally, in the former case we show that certain
polynomial functors $G$ with exponentiation lift to $\EM\powfs$, and in the latter
case, certain binding polynomial functors with exponentation lift to
$\EM\powufs$. For our specific instances of interest we show that
the terminal coalgebra $\nu G$ is given by (data or bar)
languages. The desired end result then follows by an application of
\autoref{P:relation}.

The class of functors $G$ we consider are formed according
to the grammar
\begin{equation}\label{eq:grammar-2}
  \textstyle
  G ::= A \mid \Id \mid [\names](-) \mid \prod_{i \in I} G_i \mid G^N,
\end{equation}
where $A$ ranges over all nominal sets equipped with the structure
$a\colon \powufs A \to A$ of an algebra for the monad $\powufs$, $I$
is an arbitrary index set, and $N$ ranges over all nominal sets.
Every such functor $G$ has a canonical lifting to $\EM \powufs$.
This can be proved by induction over the grammar using
\autoref{E:lift} and 
\begin{proposition}\label{P:lift-abst}
  The abstraction functor has a canonical lifting to $\EM \powufs$.
\end{proposition}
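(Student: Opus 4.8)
The plan is to apply \autoref{P:adj-trans}, exploiting that on $\Nom$ the abstraction functor is a right adjoint. Recall (see e.g.~Pitts~\cite{Pitts13}) that $[\names](-)$ is right adjoint to the \emph{separated product with names} $\names\ast(-)\colon\Nom\to\Nom$, where $\names\ast X=\{(a,x)\in\names\times X: a\fresh x\}$ carries the group action inherited from $\names\times X$; the counit $\eps_X\colon\names\ast[\names]X\to X$ is \emph{concretion}, $\eps_X(a,\braket{b}y)=(a\,b)\cdot y$, which is well defined precisely because $(a,\braket{b}y)\in\names\ast[\names]X$ forces $a\fresh\braket{b}y$. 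Thus, taking $L=\names\ast(-)$, $R=[\names](-)$ and $T=\powufs$, it remains to produce a distributive law $\lambda\colon L\powufs\to\powufs L$ of the functor $L$ over the monad $\powufs$.

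For $\lambda$ I would take the evident (co)restriction of the canonical strength of $\powufs$ (which exists by \autoref{P:comm}):
\[
  \lambda_X\colon \names\ast\powufs X \to \powufs(\names\ast X),
  \qquad
  \lambda_X(a,S)=\{\,(a,s): s\in S\,\}.
\]
This is well defined: if $(a,S)\in\names\ast\powufs X$, then $a\fresh S$, hence $a\notin\supp(S)=\bigcup_{s\in S}\supp(s)$ (using that $S$ is \emph{uniformly} finitely supported), so $a\fresh s$ and $(a,s)\in\names\ast X$ for every $s\in S$; moreover $\bigcup_{s\in S}\supp(a,s)=\{a\}\cup\supp(S)$ is finite, so the value indeed lies in $\powufs(\names\ast X)$. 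Naturality and the two distributive-law axioms (compatibility with the unit and multiplication of $\powufs$) are inherited from those of the strength $\names\times\powufs(-)\to\powufs(\names\times(-))$, so this step is routine.

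Given $\lambda$, \autoref{P:adj-trans} yields a distributive law $\rho\colon\powufs[\names]\to[\names]\powufs$ of the monad $\powufs$ over the functor $[\names](-)$, namely the adjoint transpose of $\names\ast\powufs[\names]X\xra{\lambda_{[\names]X}}\powufs(\names\ast[\names]X)\xra{\powufs\eps_X}\powufs X$. Unwinding the transpose, $\rho_X$ sends $S\in\powufs([\names]X)$ to $\braket{a}\{\,y_s: s\in S\,\}$, where $a$ is any name fresh for $S$ and $y_s$ is the unique element with $s=\braket{a}y_s$ (the concretion of $s$ at $a$); the set $\{y_s: s\in S\}$ is uniformly finitely supported since $\supp(y_s)\subseteq\supp(s)\cup\{a\}$. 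By the Applegate--Johnstone correspondence between distributive laws of $\powufs$ over $[\names](-)$ and liftings of $[\names](-)$ to $\EM\powufs$ (\autoref{R:lift}), $\rho$ induces a lifting $\widehat{[\names]}$, with $\widehat{[\names]}(A,a)=([\names]A,\,[\names]a\cdot\rho_A)$, and it is canonical because $\rho$ is built from the canonical strength.

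The argument has no deep content; the only point requiring care is exactly the one that forces the use of $\powufs$ rather than $\powfs$, just as in \autoref{P:abs-dist}. The map $\lambda_X$ above is \emph{not} well defined over $\powfs$: for instance $\names\in\powfs\names$ has $\supp(\names)=\emptyset$, so $a\fresh\names$ for every name $a$, yet $\{(a,b):b\in\names\}$ is not contained in $\names\ast\names$ since it contains the non-separated pair $(a,a)$. For uniformly finitely supported $S$ this cannot happen, which is precisely why $a\fresh S$ propagates to $a\fresh s$ for all $s\in S$, and hence why $\rho_X(S)$ names a legitimate element of $[\names](\powufs X)$.
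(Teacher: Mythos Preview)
Your proof is correct and follows essentially the same route as the paper: both use the adjunction $\names\ast(-)\dashv[\names](-)$, observe that the strength of $\powufs$ restricts to the fresh/separated product precisely because uniform finite support makes $a\fresh S$ propagate to $a\fresh s$ for every $s\in S$, and then invoke \autoref{P:adj-trans}. Your additional unwinding of the resulting $\rho$ and your counterexample for $\powfs$ match the paper's subsequent remarks as well.
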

\begin{proof}
  The abstraction functor $[\names](-)$ has a left-adjoint $\names *
  (-)$, where $*$ denotes the \emph{fresh product} defined for two
  nominal sets $X$ and $Y$ by
  \[
    X * Y = \set{(x,y) : x \in X,\, y\in Y,\, \supp(x) \cap\supp(y) = \emptyset},
  \]
  see \cite[Thm.~4.12]{Pitts13}. The strength of $\powufs$ restricts to the fresh product; we
  have
  \[
    s_{X,Y}\colon X * \powufs Y \to \powufs(X * Y)
    \qquad
    (x,S) \mapsto \set{(x,s) : s \in S}.
  \]
  Indeed, if $\supp(x) \cap \supp(S) = \emptyset$, then
  $\supp (x) \cap \supp (s) = \emptyset$ for every $s\in S$ because
  $S$ is uniformly finitely supported and thus
  $\supp(s) \subseteq \supp(S)$. It follows that
  $s_{\names, X}\colon \names * \powufs X \to \powufs (\names * X)$
  yields a distributive law of $\names *(-)$ over $\powufs$. By
  \autoref{P:adj-trans} we thus obtain a distributive law of $\powufs$
  over $[\names] (-)$.  \qed
\end{proof}
\begin{corollary}\label{C:lift-nu}
  For every functor $G$ according to the grammar in
  \eqref{eq:grammar-2} the terminal coalgebra $\nu G$ lifts to a
  terminal coalgebra of $\hatG$ on $\EM\powufs$.
\end{corollary}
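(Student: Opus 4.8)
The plan is to proceed in two stages. First I would establish, by structural induction over the grammar~\eqref{eq:grammar-2}, that every functor $G$ of the prescribed form carries a canonical distributive law $\rho\colon \powufs G \to G\powufs$ and hence a lifting $\hatG\colon \EM\powufs \to \EM\powufs$ with $U\hatG = GU$. Then I would invoke \autoref{R:lift}\ref{R:lift:3}, instantiated at $T = \powufs$ and this distributive law $\rho$, to transport the terminal $G$-coalgebra $(\nu G, \ter)$ to a terminal $\hatG$-coalgebra; the $\powufs$-algebra structure on $\nu G$ is then the unique $G$-coalgebra homomorphism $\alpha$ occurring in the diagram of \autoref{R:lift}\ref{R:lift:3}.

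For the induction I would argue as follows. The base cases $G = A$ (with its fixed $\powufs$-algebra structure $a\colon \powufs A \to A$) and $G = \Id$ are handled by \autoref{E:lift}(1), while $G = [\names](-)$ is exactly \autoref{P:lift-abst}. For a product $G = \prod_{i\in I} G_i$, given liftings $\widehat{G}_i$ one forms $\prod_{i\in I}\widehat{G}_i$ and uses that the forgetful functor $U\colon \EM\powufs \to \Nom$ preserves all products (being monadic, it preserves limits), so that \autoref{E:lift}(2) applies. For an exponential $G = H^N$, one uses that $\Nom$ is a topos, hence cartesian closed, and that $\powufs$ is strong --- indeed commutative by \autoref{P:comm} --- so that \autoref{P:adj-trans}, applied to the adjunction $N \times (-) \dashv (-)^N$, turns the distributive law witnessing the lifting of $H$ into one witnessing a lifting of $H^N$, as in \autoref{E:lift}(3). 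Composing these constructions along the syntax tree of $G$ produces $\rho$ and $\hatG$.

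The point that needs care --- and the main obstacle --- is that the corollary only makes sense once $\nu G$ is known to exist, and the grammar admits unbounded products $\prod_{i\in I}(-)$ and exponentials $(-)^N$. Here I would record that $\Nom$ is locally presentable and that every constructor in~\eqref{eq:grammar-2} preserves accessibility: the constant, identity and abstraction functors are finitary, and both $\prod_{i\in I}(-)$ and $(-)^N$ --- whose value $X^N$ in the topos $\Nom$ is the nominal set of finitely supported functions $N \to X$ --- are $\lambda$-accessible for any regular cardinal $\lambda$ above $|I|$ and $|N|$. Hence $G$ is accessible, so $(\nu G, \ter)$ exists by the usual terminal-coalgebra-chain argument. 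With $\nu G$ at hand, \autoref{R:lift}\ref{R:lift:3} at the distributive law $\rho$ constructed above yields the terminal $\hatG$-coalgebra, which is exactly the assertion of the corollary; the remaining checks (that $\alpha$ really is a $\powufs$-algebra structure and $\ter$ a homomorphism of $\powufs$-algebras) are precisely those already contained in \autoref{R:lift}\ref{R:lift:3}, so nothing further is needed.
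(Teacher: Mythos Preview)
Your proposal is correct and follows essentially the same route as the paper: establish the lifting $\hatG$ by structural induction over the grammar using \autoref{E:lift} and \autoref{P:lift-abst}, argue that $\nu G$ exists because $G$ is accessible, and then invoke \autoref{R:lift}\ref{R:lift:3}. Two small remarks: in the exponential step, \autoref{P:adj-trans} transforms the \emph{strength} (a distributive law of $N\times(-)$ over $\powufs$) into a distributive law of $\powufs$ over $(-)^N$, not the distributive law of $H$; the lifting of $H^N$ then arises by composing the lifting of $(-)^N$ with that of $H$. And for accessibility of $(-)^N$ in $\Nom$, the paper records the sharper bound that $(-)^N$ is $\lambda$-accessible iff $N$ has fewer than $\lambda$ \emph{orbits} (citing Wi{\ss}mann), whereas your cardinality bound $\lambda>|N|$ is coarser but of course still sufficient.
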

\noindent
The terminal coalgebra $\nu G$ exists since every such $G$ is an
accessible functor on $\Nom$. This can be shown by induction on the
structure of $G$; for exponentiation in the induction step one argues
similarly as Wißmann~\cite[Cor.~3.7.4]{Wissmann20} has done for
orbit-finite sets: an exponentiation functor $(-)^N$ is
$\lambda$-accessible iff the set of orbits of $N$ has cardinality less than $\lambda$.
\smnote{Or is there a better reference for this?}
Now use \autoref{R:lift}\ref{R:lift:3}.%

Consequently, one can define a coalgebraic language semantics for
every functor $G$ according to the grammar~\eqref{eq:grammar-2}.

\begin{rem}
  \begin{enumerate}
  \item For $T = \powfs$ one has the same results for functors $G$ on
    $\Nom$ according to the reduced grammar obtained from the one
    in~\eqref{eq:grammar-2} by dropping the abstraction functor
    $[\names](-)$. In fact, a functor according to the reduced grammar
    has a canonical lifting to $\EM T$ whenever $T$ is a strong monad
    on a cartesian closed category (by \autoref{E:lift}).

  \item We have dropped the abstraction functor in the previous item
    because our proof of \autoref{P:lift-abst} does not work for
    $\powfs$. The problem is that the strength in~\eqref{eq:pstr} does
    not restrict to the fresh product for all finitely supported
    subsets. Indeed, even if $\supp(x)$ and $\supp(S)$ are disjoint,
    the support of $x$ may not be disjoint from that of every element
    $s\in S$, whence $(x,s)$ does not lie in $X * Y$. For example,
    take $X = Y = \names$ and $S = \names \setminus \set {a}$ for some
    $a \in \names$. Clearly, $\supp(S) = \set{a}$. Thus, for every
    $b \neq a$, we see that $(b,S)$ lies in
    $\names * \powfs\names$. However, while $b \in S$ we do not
    have that
    $(b,b) \in \names * \names = \set{(a,a') : a,a' \in \names,
      a\neq a'}$, which means that $s_{\names,\names}(b, S)$ does not
    lie in $\powfs(\names * \names)$.
    \takeout{ 
    %
    For the monad $\powfs$ the above proof does not work. The problem
	  is that the isomorphism $\psi_X$ is not a distributive law because
	  the diagram
	  \[
			\begin{tikzcd}
			\powfs\powfs({[\names]}X) \ar{rr}{\mu_{[\names]X}}
			\ar{d}[swap]{\powfs\psi_X}
			& &
			\powfs({[\names]}X) \ar{d}{\psi_X}
			\\
			\powfs({[\names]}(\powfs X)) \ar{r}{\psi_{\powfs X}}
			&
			{[\names]}(\powfs\powfs X) \ar{r}{[\names]\mu_X}
			&
			{[\names]}\powfs X
			\end{tikzcd}
	 \]
	does not commute. Indeed, fix $a \in \names$, then $\mathscr{S} = \set{\set{\braket{a}b}: 
	b \in \names} \in \powfs\powfs({[\names]}\names)$ with support $\set{a}$,
	while every element of $\mathscr{S}$ has support $\set{b} \setminus \set{a}$.
	If the diagram above commuted, the equality
	\[ 
		\braket{c}\set{x: \braket{c}x \in S \text{ for some } S \in \mathscr{S}}
		=
		\braket{d}\set{x: \braket{e}x \in S \text{ for some } S \in \mathscr{S},\ e \fresh S}
	\]
	would need to hold, for any $c$ that is fresh for $\set{\braket{a}b: b \in \names}$
	and $d$ that is fresh for $\powfs\psi_X\mathscr{S}$. Because of Pitts'
	Choose-a-Fresh-Name-Principle~\cite[p.~49]{Pitts13}, we can assume $d = c$,
	and $d \neq a$, because $a$ supports $\set{\braket{a}b: b \in \names}$.
	But then the equality cannot hold, because the set on left side does not
	contain $a \in \names$: Since $d \neq a$, $\braket{d}a = \braket{a}b$ can
	hold only if $d \fresh \set{a,b}$ and $a = (d\,a)\cdot b$ (see
	Pitts~\cite[Lem.~4.3]{Pitts13}). The latter means that $b = d$ contradicting
	freshness of $d$ for it. However, the set of the right side does contain
	$a \in \names$, since $a \fresh \set{\braket{a}a}$.}
  \end{enumerate}
\end{rem}

\mysubsec{Coalgebraic Language Semantics of NOFAs} We now apply the
previous results to $T = \powfs$ and $GX = 2 \times X^\names$.
\begin{rem}\label{R:exp}
  We have a canonical isomorphism $\powfs(\names
  \times X) \cong (\powfs X)^\names$ given by $S \mapsto (a \mapsto
  \set{x: (a,x) \in S})$. This follows from the fact that
  $\powfs$ is the power object functor on the topos $\Nom$ and so we
  have $\powfs X \cong 2^X$.
\end{rem}
Consequently, a NOFA may be regarded as a coalgebra for $G\powfs$:
\[
  X\to \powfs(1+\names\times X)\cong 2\times (\powfs X)^\names =
  G\powfs X.
\]

\begin{proposition}\label{prop:ter-nofa}
  The terminal coalgebra for $G$ is the nominal set $\datalang$ of all
  data languages with the structure
  \[
    \datalang \xra{\ter} 2 \times \datalang^\names,\quad
    L \mapsto (b, a \mapsto a^{-1}L),
  \]
  where $b = 1$ if
  $\epsilon\in L$ and $0$ else, and $a^{-1}L=\{w\in \names^* : aw\in L \}$.
\end{proposition}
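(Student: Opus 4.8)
The plan is to exhibit $(\datalang,\ter)$ as a terminal $G$-coalgebra by a direct verification of the universal property; this simultaneously reproves that $\nu G$ exists and has the stated form. First I would check that $\ter$ is a well-defined morphism in $\Nom$: if $L\in\datalang$ then each left quotient $a^{-1}L$ is finitely supported with $\supp(a^{-1}L)\subseteq\supp(L)\cup\{a\}$, the map $a\mapsto a^{-1}L$ is finitely supported (by $\supp(L)$) and hence an element of $\datalang^\names$, and the assignment $L\mapsto(b_L,\,a\mapsto a^{-1}L)$ is equivariant --- here $2$ carries the trivial action and $b_L\in 2$ depends only on whether $\epsilon\in L$, which is permutation-invariant, while $a^{-1}(\pi\cdot L)=\pi\cdot\big((\pi^{-1}a)^{-1}L\big)$ matches the pointwise action on $\datalang^\names$.

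Next, let $\gamma=\langle o,\delta\rangle\colon C\to 2\times C^\names$ be an arbitrary $G$-coalgebra, so that $o\colon C\to 2$ is the characteristic map of an equivariant subset of $C$ and $\delta$ transposes to an equivariant map $\names\times C\to C$; write $\delta_a(c)$ for the $a$-successor of $c$. I would define the candidate homomorphism $h\colon C\to\pow(\names^*)$ by reading words:
\[
  h(c) = \{\, a_1\cdots a_n\in\names^* \;:\; o\big(\delta_{a_n}(\cdots\delta_{a_1}(c)\cdots)\big)=1 \,\},
  \qquad \epsilon\in h(c)\iff o(c)=1.
\]
Equivariance of $o$ and $\delta$ makes $h$ an equivariant map $C\to\pow(\names^*)$, so $\supp(h(c))\subseteq\supp(c)$ is finite and $h$ corestricts to a morphism $C\to\datalang$. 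That $h$ is a $G$-coalgebra homomorphism, $\ter\circ h=Gh\circ\gamma$, amounts exactly to the two identities $\epsilon\in h(c)\iff o(c)=1$ and $a^{-1}h(c)=h(\delta_a(c))$ for all $a,c$, both of which are immediate from the definition of $h$.

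For uniqueness, suppose $k\colon C\to\datalang$ is any $G$-coalgebra homomorphism; the homomorphism equation is precisely $\epsilon\in k(c)\iff o(c)=1$ together with $a^{-1}k(c)=k(\delta_a(c))$ for every $a\in\names$. I would then show $w\in k(c)\iff w\in h(c)$ by induction on $|w|$: the case $w=\epsilon$ is the first identity, and for $w=aw'$ we have $w\in k(c)\iff w'\in a^{-1}k(c)=k(\delta_a(c))$, which by the induction hypothesis equals $w'\in h(\delta_a(c))=a^{-1}h(c)$, i.e.\ $aw'\in h(c)$. Hence $k=h$, which completes the argument.

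The whole proof is the classical identification of $\nu\big(2\times(-)^A\big)$ with the language coalgebra on $\pow(A^*)$ given by output and derivatives, transported to $\Nom$; there is no serious obstacle, and the only point requiring care is the support bookkeeping that keeps everything finitely supported, so that the construction genuinely lands in $\datalang=\powfs(\names^*)$. This is most cleanly dealt with, as above, by first verifying equivariance of the set-valued assignments and then invoking $\supp(h(c))\subseteq\supp(c)$. Note that no prior knowledge of the existence of $\nu G$ is needed, since $h$ is constructed explicitly.
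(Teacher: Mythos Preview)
Your proposal is correct and follows exactly the approach the paper indicates: the paper does not spell out a proof for this proposition but simply states that it is analogous to Rutten's classical identification of $\nu(2\times(-)^A)$ with $\pow(A^*)$, and your argument is precisely that direct verification transported to $\Nom$, with the necessary support bookkeeping to ensure everything lands in $\powfs(\names^*)$. The only additional remark worth making is that the paper, in the analogous (harder) case of the RNNA functor, uses a more conceptual route via Hermida--Jacobs adjoint lifting applied to the contravariant power-set functor $\cpowfs\cong 2^{(-)}$ to carry the initial $F$-algebra $\names^*$ to the terminal $G$-coalgebra; that argument would also work here and yields the result without explicitly constructing $h$, but your elementary approach is self-contained and matches what the paper points to.
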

The proof is analogous to the one that for every alphabet $A$ the set
functor $X\to 2\times X^A$ has the terminal coalgebra $\pow (A^*)$,
see e.g.~Rutten~\cite{rutten00}.

We may thus define the coalgebraic language semantics for NOFAs as in
\autoref{D:coalg-lan-sem}.
\begin{rem}
  We take $FX = 1 + \names \times X $ as in
  \autoref{T:NOFA-Kl} and obtain $\mu F = \names^*$
  (\autoref{prop:ini-nofa}) and $\nu G = \datalang$
  (\autoref{prop:ter-nofa}).
  Moreover, analogous to ordinary non-deterministic
  automata~\cite[Sec.~7.1]{JacobsEA15}, we have an extension natural
  transformation
  $\eps_X\colon \powfs(1 + \names \times X) \to 2\times (\powfs
  X)^\names$ given by
  \[
    \eps_X(S) = (b, a \mapsto S_a),
  \]
  where $b = 1$ iff the element $*$ of $1$ lies $S$ and
  $S_a = \set{x : (a,x) \in S}$. The ensuing canonical morphism
  $e\colon \powfs(\mu F) \to \nu G$ from \autoref{R:eps}\ref{R:eps:3}
  is then easily seen to be just the identity map on $\datalang$.
\end{rem}
\begin{corollary}\label{C:NOFA-EM}
  The coalgebraic language semantics assigns to each state of a NOFA
  the data language it accepts.
\end{corollary}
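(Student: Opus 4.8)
The plan is to derive the statement directly from \autoref{P:relation} and \autoref{T:NOFA-Kl}: the former relates the coalgebraic language morphism of a NOFA to its coalgebraic trace map, and the latter already identifies that trace map with the accepted data language. In this sense essentially all of the substance has been done, and the corollary is a short bookkeeping argument.

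First I would spell out precisely how a NOFA becomes a coalgebra for $G\powfs$. Starting from its presentation $c\colon X \to \powfs FX$ with $FX = 1 + \names \times X$ (\autoref{rem:nofa-as-coalgebras}), the isomorphism $\powfs(1 + \names \times X) \cong 2 \times (\powfs X)^\names$ used to regard it as a coalgebra $X \to G\powfs X$ is nothing but the component $\eps_X$ of the extension natural transformation recalled just above: both map $S$ to $(b, a\mapsto S_a)$ where $b = 1$ iff $\ast \in S$ and $S_a = \set{x : (a,x)\in S}$. Hence, by \autoref{D:coalg-lan-sem}, the coalgebraic language semantics of the NOFA is the morphism $\lan(\eps_X \cdot c)\colon X \to \nu G = \datalang$.

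Second I would invoke \autoref{P:relation} for the coalgebra $c\colon X \to \powfs FX$, which yields
\[
  \lan(\eps_X \cdot c) = \big(X \xra{\tr_c} \powfs(\mu F) \xra{e} \nu G\big),
\]
where $e\colon \powfs(\mu F) \to \nu G$ is the connecting morphism of \autoref{R:eps}. Since $\mu F = \names^*$ (\autoref{prop:ini-nofa}) we have $\powfs(\mu F) = \datalang = \nu G$ (\autoref{prop:ter-nofa}), and by the preceding remark $e$ is the identity on $\datalang$; therefore $\lan(\eps_X \cdot c) = \tr_c$. Finally, \autoref{T:NOFA-Kl} tells us that $\tr_c$ sends each state of $X$ to the data language it accepts, and the claim follows.

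The only points that need a (short) argument rather than a bare citation are the identification of the NOFA's $G\powfs$-coalgebra structure with $\eps_X \cdot c$ and the fact that $e = \id_{\datalang}$; both are immediate unwindings of the definitions of $\eps$ and of the functor $E$ from \autoref{R:eps}, so I do not expect a genuine obstacle here.
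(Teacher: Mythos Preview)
Your proposal is correct and follows essentially the same approach as the paper: the paper's proof simply cites \autoref{T:NOFA-Kl} and \autoref{P:relation} together with the observation (made in the preceding remark) that $e$ is the identity on $\datalang$. Your write-up just unfolds these ingredients a bit more explicitly, including the identification of the $G\powfs$-coalgebra structure with $\eps_X\cdot c$.
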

\noindent
Indeed, this follows from \autoref{T:NOFA-Kl} and \autoref{P:relation}
using that in the latter result
$e$ is the identity map on $\datalang$.

\mysubsec{Coalgebraic Language Semantics of RNNAs} We now apply the
previous results to $T = \powufs$ and $GX = 2 \times X^\names \times
[\names]X$.
\begin{rem}
  \begin{enumerate}
  \item The canonical isomorphism from \autoref{R:exp}
    restricts to an injection
    $i\colon \powufs(\names \times X) \monoto (\powufs
    X)^\names$. Indeed, take a uniformly finitely supported subset
    $S \subseteq \names \times X$. Then for every $a \in \names$, every
    element $x$ of the set $i(S)(a) = \set{x : (a,x) \in S}$ satisfies
    $\supp(x) \subseteq \set{a} \cup \supp (x) = \supp (a,x) \subseteq \supp(S)$
    and therefore that set lies in $\powufs X$. However, note that the
    inverse of the isomorphism from \autoref{R:exp} does not restrict to
    uniformly finitely supported subsets.

  \item The components
    $\rho_X\colon [\names]\powufs X \to \powufs([\names] X)$ of the distributive law from the
    proof of \autoref{P:abs-dist} are in fact isomorphisms with
    inverses $\psi_X\colon \powufs([\names] X) \to [\names]\powufs X$
    defined by $\psi_X(S) = \braket a \set{x : \braket a x \in S}$,
    where $a$ is fresh for $S$. These inverses can also be gleaned
    from Pitts' result~\cite[Prop.~4.14]{Pitts13} which shows that the
    abstraction functor preserves exponentials specializing to
    $\powfs([\names] X) \cong[\names]\powfs X$. However, note that
    $\rho_X$ has a more involved description in the case
    of $\powfs$.
  \end{enumerate}
\end{rem}
It follows that for every nominal set $X$ we have an injection
\begin{equation}\label{eq:inj}
  m_X\colon 2 \times \powufs(\names \times X) \times
  \powufs([\names] X)
  \monoto
  2 \times (\powufs X)^\names \times [\names] (\powufs X).
\end{equation}
Thus every RNNA (\autoref{rem:rnna-as-coalgebras}) may be regarded as
a coalgebra for $G\powufs$.

A description of the terminal coalgebra for $G$ has previously been
given by Kozen et al.~\cite[Thm.~4.10]{KozenEA15}. We provide a
different (of course, isomorphic) description as a final ingredient
for our desired result.
\begin{proposition}\label{prop:ter}
  The terminal coalgebra for $G$ is the nominal set $\barlang$ of all
  bar languages with the structure
  \[
    \barlang \xra{\ter} 2 \times (\barlang)^\names \times
    [\names]\barlang,\quad
    S \mapsto (b, a \mapsto S_a, S_{\scriptnew a}),
  \]
  where $b = 1$ if
  $[\eps]_\alpha \in S$ and $0$ else,
  $S_a = \set{[w]_\alpha : [aw]_\alpha \in S}$ and
  $S_{\scriptnew a} = \braket a \set{[w]_\alpha : [\newletter a
    w]_\alpha \in S}$ for any $a$ which is fresh for $S$.%
  \smnote{Is the latter set really correct like this?}
\end{proposition}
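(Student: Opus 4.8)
The plan is to verify the universal property of a terminal $G$-coalgebra directly, in the spirit of the classical argument that $\pow(A^{*})$ is the terminal coalgebra of $X \mapsto 2 \times X^{A}$ (cf.~\autoref{prop:ter-nofa} and Rutten~\cite{rutten00}), but now threading bound names through the abstraction component. So fix an arbitrary $G$-coalgebra $(Q,\gamma)$ with $\gamma = \langle o,\delta,\theta\rangle\colon Q \to 2 \times Q^{\names} \times [\names]Q$; the goal is to produce a \emph{unique} $G$-coalgebra homomorphism $\beta\colon Q \to \barlang$.

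First I would define $\beta(q) \subseteq \barstr$ by recursion on the length of a bar-string representative: (i) $[\eps]_\alpha \in \beta(q)$ iff $o(q) = 1$; (ii) for $a \in \names$, $[a\,w]_\alpha \in \beta(q)$ iff $[w]_\alpha \in \beta(\delta(q)(a))$; and (iii) for $a \in \names$, writing $\theta(q) = \braket{b}q'$, $[\newletter a\,w]_\alpha \in \beta(q)$ iff $(a\,c)\cdot [w]_\alpha \in \beta\big((b\,c)\cdot q'\big)$ for some name $c$ fresh for $\theta(q)$ and for $\braket{a}[w]_\alpha$ (here $(a\,c)\cdot[w]_\alpha$ is computed in the nominal set $\barstr$). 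The two things to settle before going on are that clause~(iii) does not depend on the chosen $c$ and is compatible with $\alpha$-equivalence of $\newletter a\,w$: both reduce to Pitts' description of when $\braket{a}x = \braket{b}y$ holds (\cite[Lem.~4.3]{Pitts13}), applied once in $Q$ and once in $\barstr$, together with the equivariance of $\beta$ on shorter words. By a simultaneous induction on word length, $\beta$ is then equivariant; hence $\supp\beta(q) \subseteq \supp q$, each $\beta(q)$ is a finitely supported set of $\alpha$-classes --- i.e.\ a bar language --- and $\beta\colon Q \to \barlang$ is a morphism of $\Nom$. (Equivalently, $\beta(q)$ is the set of $\alpha$-classes of bar strings \emph{accepted} from $q$ in the deterministic system $(Q,\gamma)$, where a letter $\newletter a$ moves from $q$ to $\theta(q)$ after renaming the bound name apart; the recursion is just the more convenient form.)

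Next I would check that $\beta$ is a $G$-coalgebra homomorphism, i.e.\ $\ter \cdot \beta = G\beta \cdot \gamma$. The $2$-component and the $(-)^{\names}$-component are immediate from~(i) and~(ii), the latter saying exactly $\beta(q)_a = \beta(\delta(q)(a))$ with $\beta(q)_a = \set{[w]_\alpha : [a\,w]_\alpha \in \beta(q)}$. For the $[\names](-)$-component I must show $\braket{c}\set{[w]_\alpha : [\newletter c\,w]_\alpha \in \beta(q)} = \braket{b}\beta(q')$; since the left-hand side is independent of the particular $c$ fresh for $\beta(q)$, I may pick $c$ also fresh for $q$ and distinct from $b$, so that $c$ is fresh for $\theta(q)$ and (using $\supp q'\setminus\{b\} = \supp\theta(q)\subseteq \supp q$) for $\beta(q')$. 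With that choice clause~(iii) gives $[\newletter c\,w]_\alpha \in \beta(q) \iff [w]_\alpha \in \beta((b\,c)\cdot q')$, whence the left-hand side equals $\braket{c}\beta((b\,c)\cdot q') = \braket{c}\big((b\,c)\cdot\beta(q')\big) = \braket{b}\beta(q')$ by equivariance of $\beta$ and \cite[Lem.~4.3]{Pitts13}. Finally, for uniqueness, any $G$-coalgebra homomorphism $\beta'\colon Q\to\barlang$ must, by commutativity of its homomorphism square, satisfy clauses~(i)--(iii) verbatim, so $\beta' = \beta$ by induction on word length.

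I expect the main obstacle to be exactly the $[\names](-)$-component: both the well-definedness of clause~(iii) and the homomorphism check for it. Unlike the exponential factor $(-)^{\names}$, the abstraction functor does not commute with $\powfs$ or $\powufs$ in a way that lets one transport structure mechanically; every manipulation of a bound name must be justified through the choose-a-fresh-name principle and \cite[Lem.~4.3]{Pitts13}, and one has to keep enlarging the set of forbidden names ($\supp q$, $\{b\}$, $\supp\beta(q)$, the free names of $w$) so that all transpositions in sight act the way one wants. The rest is a routine adaptation of the $\pow(A^{*})$ argument.
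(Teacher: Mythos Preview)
Your argument is correct, but it proceeds quite differently from the paper. You verify the universal property of $\barlang$ directly, by defining the unique homomorphism $\beta$ recursively on bar-string length and handling the $[\names]$-component with explicit fresh-name manipulations via \cite[Lem.~4.3]{Pitts13}. The paper instead derives the result categorically: it observes that the contravariant power-object functor $\cpowfs \cong 2^{(-)}\colon \Nom \to \Nom^{\opp}$ satisfies $\cpowfs F \cong G^{\opp}\cpowfs$ naturally (using that $\cpowfs$ commutes with $[\names](-)$), then applies the Hermida--Jacobs adjoint lifting theorem \cite[Thm.~2.14]{HermidaJ98} to lift $\cpowfs$ to a left adjoint $\overline{\cpowfs}\colon \Alg{F} \to (\Coalg{G})^{\opp}$; since left adjoints preserve initial objects, the initial $F$-algebra $\barstr$ is sent to the terminal $G$-coalgebra $\barlang$, and unwinding the isomorphism gives the stated structure $\ter$.

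The paper's route is shorter and reuses the already-computed initial algebra for $F$ (\autoref{prop:ini}) together with a general categorical principle, so the fresh-name bookkeeping you identify as the main obstacle is absorbed into the single verification that $\phi_X\colon \cpowfs([\names]X)\cong [\names](\cpowfs X)$ is natural. Your route, by contrast, is self-contained---it does not appeal to $\mu F$ or to any adjoint lifting machinery---and makes the operational content of $\ter$ explicit, at the cost of the simultaneous induction on length and the careful enlargement of freshness conditions that you anticipate.
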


We may thus define the coalgebraic language semantics for RNNAs as in
\autoref{D:coalg-lan-sem}.
\begin{rem}\label{R:epsX}
  We take $FX = 1 + \names \times X + [\names] X$ as in
  \autoref{T:RNNA-Kl} and obtain $\mu F = \barstr$
  (\autoref{prop:ini}) and $\nu G = \barlang$
  (\autoref{prop:ter}). We also define a natural
  transformation $\eps\colon \powufs F \to G \powufs$ by composing the
  canonical isomorphism $\powufs(1 +  \names \times X + [\names] X)
  \cong 2 \times \powufs(\names \times X) \times \powufs([\names] X)$
  with the injection $m_X$ from~\eqref{eq:inj}. 
  For every uniformly finitely supported subset $S \subseteq 1 +
  \names \times X + [\names] X$ we have
    \(
    \eps_X(S) = (b, a \mapsto S_a, S_{\scriptnew a}),
    \)
  where $b = 1$ iff the element $*$ of $1$ lies in $S$, $S_a = \set{s :
    (a,s) \in S}$ and $S_{\scriptnew a} = \braket a \set{s : \braket a
    s \in S}$, where $a$ is fresh for (all elements $\braket b s$ in)
  $S$. 
\end{rem}
\begin{lemma}\label{L:eps-ext}
  The natural transformation $\eps\colon \powufs F \to G \powufs$ is
  an extension.
\end{lemma}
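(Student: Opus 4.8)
The plan is to verify the two compatibility squares of~\eqref{diag:eps} with $T = \powufs$, $F = 1 + \names\times(-) + [\names](-)$ and $G = 2 \times (-)^\names \times [\names](-)$, exploiting that all of $\eps$ (\autoref{R:epsX}), the distributive law $\lambda\colon F\powufs \to \powufs F$ defining $\barF$ (\autoref{cor:extension-pufs}), and the distributive law $\rho\colon \powufs G \to G\powufs$ defining $\hatG$ (\autoref{C:lift-nu}) are built in the canonical, pointwise way from the (co)product structure. Write $F = F_1 + F_2 + F_3$ with $F_1$ constant on $1$, $F_2 = \names\times(-)$ and $F_3 = [\names](-)$, and $G = G_1 \times G_2 \times G_3$ with $G_1$ constant on $2$, $G_2 = (-)^\names$ and $G_3 = [\names](-)$. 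Then $\lambda$ is the coproduct (in the sense of \autoref{E:ext}\ref{E:ext:2}) of laws $\lambda^i$ --- with $\lambda^1$ trivial, $\lambda^2$ the strength-induced law of~\eqref{eq:pstr}, and $\lambda^3$ the isomorphism $\braket a S \mapsto \set{\braket a s : s\in S}$ from the proof of \autoref{P:abs-dist} --- and $\rho$ is the product (\autoref{E:lift}) of laws $\rho^i$ --- with $\rho^1$ trivial, $\rho^2$ the currying of the strength, and $\rho^3$ the adjoint-transpose law of \autoref{P:lift-abst}, which equals $\psi \defeq (\lambda^3)^{-1}$ (using that $\lambda^3$ is a monad-distributive-law isomorphism). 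Under the canonical isomorphism $\powufs FX \cong 2 \times \powufs(\names\times X) \times \powufs([\names]X)$, the map $\eps_X$ is $\id_2 \times i_X \times \psi_X$, with $i_X\colon \powufs(\names\times X) \monoto (\powufs X)^\names$ the restricted isomorphism of \autoref{R:exp}. Since $\powufs$ sends finite coproducts of nominal sets to products of free Eilenberg-Moore algebras (compatibly with union), the right square of~\eqref{diag:eps} for $\eps$ amounts, by \autoref{R:eps}\ref{R:eps:1}, to each $\eps^i_X$ being an Eilenberg-Moore homomorphism, and the left square (which does not involve $\rho$) decomposes analogously over the coproduct $F = F_1 + F_2 + F_3$. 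So the lemma reduces to checking the two squares separately for each datum $(F_i, G_i, \eps^i, \lambda^i, \rho^i)$.

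Then I would dispatch the three components. For $i = 1$ everything commutes on the nose, being the trivial extension of a constant functor against a constant Eilenberg-Moore algebra (note $\powufs 1 \cong 2$). For $i = 2$ this is precisely the computation done for ordinary non-deterministic automata in~\cite[Sec.~7.1]{JacobsEA15}: after unfolding $\mu$ as union, the strength~\eqref{eq:pstr} and the currying isomorphism $\powufs(\names\times X)\cong(\powufs X)^\names$, both squares collapse to the elementary identity $\set{x : (a,x)\in\bigcup_j S_j} = \bigcup_j\set{x : (a,x)\in S_j}$; the only extra point is that the maps involved restrict to uniformly finitely supported subsets, which is exactly the first item of the remark preceding \autoref{prop:ter}. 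For $i = 3$, since $\eps^3_X = \psi_X = (\lambda^3_X)^{-1}$ and $F_3 = G_3 = [\names](-)$, inverting $\lambda^3_X$ turns the left square of~\eqref{diag:eps} into the multiplication axiom of $\lambda^3$ as a distributive law of $[\names](-)$ over $\powufs$ --- which is the content of \autoref{P:abs-dist} --- and turns the right square into the statement that $\psi$ is a distributive law of $\powufs$ over $[\names](-)$. The latter holds because a natural isomorphism satisfying the distributive-law axioms of $F$ over $T$ has inverse satisfying those of $T$ over $F$, and because this inverse coincides with the distributive law underlying the lifting of \autoref{P:lift-abst} --- the identification recorded (via Pitts~\cite[Prop.~4.14]{Pitts13}) in the second item of the remark preceding \autoref{prop:ter}.

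The main obstacle is this last identification: that the adjoint-transpose distributive law $\rho^3$ of \autoref{P:lift-abst}, built from the fresh-product adjunction $\names*(-) \dashv [\names](-)$ and the restricted strength, really is the same map $\psi_X$, i.e. the inverse of the pointwise \emph{image} law $\lambda^3_X(\braket a S) = \set{\braket a s : s\in S}$ of \autoref{P:abs-dist}. Establishing this requires tracking the counit of the fresh-product adjunction through \autoref{P:adj-trans} and matching it against the explicit formula $\psi_X(S) = \braket a\set{x : \braket a x \in S}$ (with $a$ fresh for $S$); and it is precisely here that passing from $\powfs$ to $\powufs$ is indispensable, since --- as recorded in the remark following \autoref{P:abs-dist} and in the remark following \autoref{C:lift-nu} --- neither $\lambda^3_X$ nor the restricted strength is well-defined for arbitrary finitely supported subsets. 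Once the three components are settled, reassembling them into the two squares for $\eps$ is the routine diagram chase indicated in the first paragraph.
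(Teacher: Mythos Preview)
Your approach is correct and takes a more structural route than the paper's. The paper verifies both squares of~\eqref{diag:eps} by direct element-chasing: it takes $\S\in\powufs F\powufs X$ (resp.\ $\powufs\powufs FX$), writes out the explicit formulas for $\lambda$, $\rho$, $\mu$ and $\eps$, and matches the two resulting triples in $G\powufs X$. You instead factor the problem along the Eilenberg--Moore isomorphism $\powufs(\sum_i F_i X)\cong\prod_i\powufs F_i X$, reducing both squares to three independent component instances. The payoff is in the abstraction component: rather than chasing $\braket a$'s and freshness side-conditions, you observe that once $\eps^3=\psi=(\lambda^3)^{-1}=\rho^3$, the left square for $i=3$ is literally the multiplication axiom for $\lambda^3$ (\autoref{P:abs-dist}) rewritten via inverses, and the right square is the multiplication axiom for $\psi$, which follows formally by inverting that for $\lambda^3$. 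This is cleaner and explains \emph{why} the squares commute. The paper's direct calculation also relies on the explicit description of $\rho^3$ (it states the formula and says it ``can be extracted from \autoref{E:lift} and the proof of \autoref{P:lift-abst}''), so both proofs ultimately rest on the identification $\rho^3=\psi$ that you correctly flag as the crux; the paper establishes this separately in its appendix remark on \autoref{P:lift-abst}.
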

\begin{lemma}\label{L:incl}
  The canonical morphism $e\colon \powufs(\mu F) \to \nu G$ from
  \autoref{R:eps}\ref{R:eps:3} is the inclusion map
  $\powufs(\barstr) \subto \barlang$.
\end{lemma}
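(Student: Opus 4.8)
The plan is as follows. Recall that $\mu F = \barstr$ (\autoref{prop:ini}) and $\nu G = \barlang = \powfs(\barstr)$ (\autoref{prop:ter}), so the domain $\powufs(\mu F)$ of $e$ is $\powufs(\barstr)$ and we must show that $e$ coincides with the inclusion $\iota\colon\powufs(\barstr)\subto\barlang$. By \autoref{R:eps}\ref{R:eps:3}, $e$ is the unique $\hatG$-coalgebra homomorphism from the determinization $(\powufs(\barstr),\, c^\sharp)$ of $c\defeq\eps_{\mu F}\cdot J\ini^{-1}\colon\mu F\to G\powufs(\mu F)$ to the terminal $\hatG$-coalgebra. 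Since the latter lies over the terminal $G$-coalgebra $(\barlang,\ter)$ (\autoref{R:lift}\ref{R:lift:3}), $e$ is in particular the \emph{unique $G$-coalgebra homomorphism} $(\powufs(\barstr), c^\sharp)\to(\barlang,\ter)$. Hence it suffices to prove that $\iota$ is itself such a homomorphism, i.e.\ $\ter\cdot\iota = G\iota\cdot c^\sharp$; uniqueness then forces $e = \iota$.

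To this end I would compute $c^\sharp$ explicitly. Unwinding \autoref{C:gen-det} and the standard formula for homomorphic extensions (\autoref{R:hom-ext}), one has $c^\sharp = a\cdot\powufs(c)$, where $a\colon\powufs\big(G\powufs(\mu F)\big)\to G\powufs(\mu F)$ is the $\powufs$-algebra structure of $\hatG(\powufs(\mu F),\mu_{\mu F})$. Since $\hatG$ is assembled componentwise from the liftings of the constant functor $2$, of $(-)^\names$ and of $[\names](-)$, the map $a$ is the pairing of the join map $\powufs 2\to 2$, of the pointwise union on $(\powufs(\mu F))^\names$, and of $[\names]\mu_{\mu F}\cdot\psi_{\powufs(\mu F)}$ on the abstraction summand, where $\psi$ is the distributive law of $\powufs$ over $[\names](-)$ underlying \autoref{P:lift-abst} (the inverse of $\rho$ from \autoref{P:abs-dist}, cf.\ the remark after \autoref{prop:ter}). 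Substituting the formula for $\eps$ from \autoref{R:epsX} and the description of $\ini^{-1}$ read off from~\eqref{eq:iniF}, a direct calculation shows that for every uniformly finitely supported $U\subseteq\barstr$,
\[
  c^\sharp(U) = \big(b,\ d\mapsto U_d,\ U_{\scriptnew a}\big),
\]
with $b$, $U_d$ and $U_{\scriptnew a}$ exactly the data appearing in the formula for $\ter$ in \autoref{prop:ter}; along the way one checks that $U_d$ and the subset occurring inside $U_{\scriptnew a}$ are uniformly finitely supported (their supports lie in $\supp U\cup\{a\}$), so that the right-hand side indeed belongs to $G\powufs(\barstr)$. Comparing with $\ter$ and using that $\iota$ and $[\names]\iota$ act as inclusions on elements then yields $\ter\cdot\iota = G\iota\cdot c^\sharp$, completing the argument.

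The only non-routine point is the abstraction component: one must verify that $[\names]\mu_{\mu F}\cdot\psi\big(\set{(c(u))_3 : u\in U}\big)$ equals $U_{\scriptnew a} = \braket a\set{[w]_\alpha : [\newletter a w]_\alpha\in U}$. I would handle this by choosing the binder $a$ fresh for $U$ (possible because $U$ is uniformly finitely supported and $\eps$ is equivariant), noting that each bound transition $[\newletter b w]_\alpha\in U$ satisfies $(c([\newletter b w]_\alpha))_3 = \braket b\set{[w]_\alpha}$ in $[\names]\powufs(\barstr)$, rewriting this as $\braket a\set{(a\,b)\cdot[w]_\alpha}$ --- which is legitimate since $a\fresh[w]_\alpha$ by the choice of $a$ ---, forming the union under the common binder $a$, and finally invoking \autoref{rem:alpheq-technical} together with the definition of $=_\alpha$ on bar strings to see that $\set{(a\,b)\cdot[w]_\alpha : [\newletter b w]_\alpha\in U} = \set{[w]_\alpha : [\newletter a w]_\alpha\in U}$. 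This $\alpha$-equivalence bookkeeping, rather than anything conceptual, is the main obstacle.
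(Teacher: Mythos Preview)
Your proposal is correct and follows essentially the same strategy as the paper: both arguments reduce to showing that the determinized coalgebra structure $c^\sharp$ on $\powufs(\barstr)$ agrees (via the inclusion) with the terminal structure $\ter$ from \autoref{prop:ter}, after which uniqueness of the coalgebra homomorphism into $\nu G$ forces $e=\iota$. The paper computes $\eps_{\mu F}\cdot J\ini^{-1}$ elementwise and then applies the homomorphic extension in one step, remarking that ``this clearly yields the desired result''; you instead use the formula $c^\sharp = a\cdot\powufs(c)$ and unfold the Eilenberg--Moore algebra structure $a$ of $\hatG(\powufs(\mu F),\mu_{\mu F})$ componentwise. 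Your treatment of the abstraction component, in particular the $\alpha$-equivalence bookkeeping showing that the union under a common fresh binder $a$ gives exactly $U_{\scriptnew a}$, makes explicit precisely the step the paper glosses over.
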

\begin{corollary}\label{C:RNNA-EM}
  The coalgebraic language semantics assigns to each state of an RNNA
  the bar language it accepts.
\end{corollary}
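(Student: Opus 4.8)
The plan is to derive \autoref{C:RNNA-EM} from the trace-semantics result \autoref{T:RNNA-Kl} via the comparison \autoref{P:relation}, exactly as \autoref{C:NOFA-EM} is derived from \autoref{T:NOFA-Kl}. Recall (\autoref{rem:rnna-as-coalgebras}) that an RNNA is a coalgebra $c\colon X \to \powufs FX$ for the binding polynomial functor $FX = 1 + \names\times X + [\names]X$, and that via the extension natural transformation $\eps$ of \autoref{R:epsX} it is regarded as a $G\powufs$-coalgebra $\eps_X\cdot c\colon X \to G\powufs X$ for $GX = 2\times X^\names\times[\names]X$. The object of interest is therefore the coalgebraic language morphism $\lan(\eps_X\cdot c)\colon X \to \nu G = \barlang$ from \autoref{D:coalg-lan-sem}, and the goal is to show that for each state $x\in X$ it returns precisely the bar language accepted by $x$.

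First I would check that the hypotheses collected in the Assumption preceding \autoref{P:relation} are in force for $T = \powufs$ and $F$, $G$ as above: $F$ is a binding polynomial functor and hence carries a locally monotone extension $\barF$ on $\Kl\powufs$ (\autoref{cor:extension-pufs}); $G$ is one of the functors of the grammar~\eqref{eq:grammar-2}---with the two-element nominal set $2$ equipped with its evident $\powufs$-algebra structure---and so has a lifting $\hatG$ on $\EM\powufs$; and $\eps$ is an extension natural transformation compatible with both distributive laws by \autoref{L:eps-ext}. Granting this, \autoref{P:relation} yields the factorization
\[
  \lan(\eps_X\cdot c) \;=\; \big(X \xra{\tr_c} \powufs(\mu F) \xra{e} \nu G\big)
  \;=\; \big(X \xra{\tr_c} \powufs(\barstr) \xra{e} \barlang\big),
\]
using $\mu F = \barstr$ (\autoref{prop:ini}) and $\nu G = \barlang$ (\autoref{prop:ter}).

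It remains to identify this composite. By \autoref{L:incl} the canonical morphism $e$ is simply the inclusion $\powufs(\barstr)\subto\barlang$, so composing with it does not change the underlying set of $\alpha$-classes of bar strings; and by \autoref{T:RNNA-Kl} the trace map $\tr_c\colon X \to \powufs(\barstr)$ sends each state to its accepted bar language. Hence $\lan(\eps_X\cdot c) = e\circ\tr_c$ sends each state to its accepted bar language, now viewed as an element of $\barlang$, which is the assertion. I do not expect a genuine obstacle inside this corollary itself: the content sits entirely in the prerequisite lemmas \autoref{L:eps-ext} and \autoref{L:incl}. The only point worth a moment's care is confirming that $e$ really is a bare inclusion rather than a nontrivial comparison map, since that is precisely what lets the combinatorial identification of the accepted bar language in \autoref{T:RNNA-Kl} transfer verbatim to the language semantics---and that is exactly what \autoref{L:incl} provides.
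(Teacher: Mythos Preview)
Your proposal is correct and follows exactly the paper's own argument: apply \autoref{P:relation} (whose hypotheses are secured by \autoref{cor:extension-pufs}, the lifting of $G$ via grammar~\eqref{eq:grammar-2}, and \autoref{L:eps-ext}) to factor $\lan(\eps_X\cdot c) = e\cdot \tr_c$, then invoke \autoref{L:incl} to identify $e$ as the inclusion and \autoref{T:RNNA-Kl} to identify $\tr_c$ as the accepted-bar-language map. The paper's proof is the same, only stated in a single sentence.
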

\noindent
Indeed, this follows from \autoref{T:RNNA-Kl} and \autoref{P:relation}
using that in the latter result
$e\colon\powufs(\barstr) \subto \barlang$ is the inclusion map by \autoref{L:incl}.


\section{Conclusions and Future Work}

We have worked out coalgebraic semantics for two species of
non-deterministic automata for data languages:
NOFAs~\cite{BojanczykEA14} and RNNAs~\cite{SchroderEA17}. We have seen
that their semantics arises both as an instance of the Kleisli style
coalgebraic trace semantics and from the Eilenberg-Moore style
coalgebraic language semantics, which is based on generalized
determinization. To see that both semantics coincide we have employed
the results by Jacobs et al.~\cite{JacobsEA15}.

We have also revisited coalgebraic trace semantics in general and
given a new compact proof of the main extension result
for initial algebras in that theory. Our proof avoids assumptions on
the convergence of the initial algebra chain; mere existence of an
initial algebra suffices.

Having provided coalgebraic semantics for non-deterministic nominal
systems makes the powerful toolbox of coalgebraic methods fully
available to those systems. For example, generic constructions like
coalgebraic $\eps$-elimination~\cite{SilvaW13,bmsz15} can be
instantiated to them. Or coalgebraic up-to techniques
starting with the work by Rot et al.~\cite{RotEA13} might lead to new
proof principles and algorithms, cf.~\cite{BonchiPous13}.

Our general extension and lifting results for nominal systems may be
applied to related kinds of systems, e.g.~nominal transition
systems and the coalgebraic study of equivalences for them. Going a
step beyond the standard coalgebraic trace and language
semantics, graded semantics~\cite{DorschEA19} should lead to a nominal spectrum
of equivalences generalizing van Glabbeek's famous linear time -- branching time
spectrum~\cite{Glabbeek01}. 

\smnote[inline]{Is there other future work we can think of? Didn't we
  discuss something last week?}

%
%
\bibliographystyle{splncs04}
\bibliography{refs}

\begin{thebibliography}{10}
\providecommand{\url}[1]{\texttt{#1}}
\providecommand{\urlprefix}{URL }
\providecommand{\doi}[1]{https://doi.org/#1}

\bibitem{amm21}
Ad\'amek, J., Milius, S., Moss, L.S.: Initial algebras without iteration. In:
  Gaducci, F., Silva, A. (eds.) 9th Conference on Algebra and Coalgebra in
  Computer Science (CALCO). LIPIcs, vol.~211, pp. 5:1--5:20. Schloss Dagstuhl
  (2021)

\bibitem{Applegate65}
Applegate, H.: Acyclic models and resolvent functors. Ph.D. thesis, Columbia
  University (1965)

\bibitem{Bartels04}
Bartels, F.: On generalized coinduction and probabilistic specification
  formats. Ph.D. thesis, Vrije Universiteit Amsterdam (2004)

\bibitem{BojanczykEA14}
Boja{\'n}czyk, M., Klin, B., Lasota, S.: Automata theory in nominal sets. Log.\
  Methods Comput.\ Sci.  \textbf{10}(3) (2014)

\bibitem{bmsz15}
Bonchi, F., Milius, S., Silva, A., Zanasi, F.: Killing epsilons with a dagger:
  A coalgebraic study of systems with algebraic label structure.
  Theoret.~Comput.~Sci.  \textbf{604},  102--126 (2015)

\bibitem{BonchiPous13}
Bonchi, F., Pous, D.: Checking {NFA} equivalence with bisimulations up to
  congruence. In: Giacobazzi, R., Cousot, R. (eds.) Proc.~40th ACM
  SIGPLAN-SIGACT Symp.~Principles of Programming Languages (POPL'13). pp.
  457--468. ACM (2013)

\bibitem{bms13}
Bonsangue, M.M., Milius, S., Silva, A.: Sound and complete axiomatizations of
  coalgebraic language equivalence. ACM Trans.\ Comput.\ Logic  \textbf{14}(1),
   7:1--7:52 (Feb 2013)

\bibitem{CianciaSammartino14}
Ciancia, V., Sammartino, M.: A class of automata for the verification of
  infinite, resource-allocating behaviours. In: Trustworthy Global Computing,
  {TGC} 2014. Lecture Notes Comput.~Sci., vol.~8902, pp. 97--111. Springer
  (2014)

\bibitem{DorschEA19}
Dorsch, U., Milius, S., Schr{\"{o}}der, L.: Graded monads and graded logics for
  the linear time - branching time spectrum. In: Fokkink, W.J., van Glabbeek,
  R. (eds.) Proc.~30th International Conference on Concurrency Theory (CONCUR).
  LIPIcs, vol.~140, pp. 36:1--36:16. Schloss Dagstuhl (2019)

\bibitem{Escardo03}
Escard\'o, M.: Joins in the complete {H}eyting algebra of nuclei.
  Appl.~Categ.~Structures  \textbf{11},  117--124 (2003)

\bibitem{FioreEA99}
Fiore, M., Plotkin, G.D., Turi, D.: Abstract syntax and variable binding. In:
  Proc.~Logic in Computer Science (LICS). pp. 193--202. IEEE Computer Society
  (1999)

\bibitem{Freyd92}
Freyd, P.: Remarks on algebraically compact categories. In: Fourman, M.P.,
  Johnstone, P.T., Pitts, A.M. (eds.) Applications of category theory in
  computer science: Proceedings of the London Mathematical Society Symposium,
  Durham 1991. London Mathematical Society Lecture Note Series, vol.~177, pp.
  95--106. Cambridge University Press (1992)

\bibitem{gabbay2011}
Gabbay, M.J.: Foundations of nominal techniques: logic and semantics of
  variables in abstract syntax. Bull.\ Symb.\ Log.  \textbf{17}(2),  161--229
  (2011)

\bibitem{GabbayPitts99}
Gabbay, M.J., Pitts, A.M.: A new approach to abstract syntax involving binders.
  In: Logic in Computer Science, LICS 1999. pp. 214--224. {IEEE} Computer
  Society (1999)

\bibitem{Glabbeek01}
van Glabbeek, R.: The linear time -- branching time spectrum i; the semantics
  of concrete, sequential processes. In: Bergstra, J., Ponse, A., Smolka, S.
  (eds.) Handbook of Process Algebra, pp. 3--99. Elsevier (2001)

\bibitem{gms20}
Goncharov, S., Milius, S., Silva, A.: Towards a uniform theory of effectful
  state machines. ACM Trans.~Comput.~Log.  \textbf{21}(3) (2020), article 23,
  63 pp.

\bibitem{GrigoreEA13}
Grigore, R., Distefano, D., Petersen, R., Tzevelekos, N.: Runtime verification
  based on register automata. In: Tools and Algorithms for the Construction and
  Analysis of Systems, {TACAS} 2013. Lecture Notes Comput.~Sci., vol.~7795, pp.
  260--276. Springer (2013)

\bibitem{Hartogs15}
Hartogs, F.: \"{U}ber das {P}roblem der {W}ohlordnung. Math.~Ann.
  \textbf{76}(4),  438--443 (1915)

\bibitem{Hasuo08}
Hasuo, I.: Tracing Anonymity with Coalgebras. Ph.D. thesis, Radboud University
  Nijmegen (2008)

\bibitem{HasuoEA07}
Hasuo, I., Jacobs, B., Sokolova, A.: Generic trace semantics via coinduction.
  Log.~Methods Comput.~Sci.  \textbf{3}(4:11),  1--36 (2007)

\bibitem{HermidaJ98}
Hermida, C., Jacobs, B.: Structural induction and coinduction in a fibrational
  setting. Inform.~Comput.  \textbf{145},  107--152 (1998)

\bibitem{Jacobs16}
Jacobs, B.: Introduction to Coalgebra. Towards Mathematics of States and
  Observation. Cambridge University Press (2016)

\bibitem{JacobsEA15}
Jacobs, B., Silva, A., Sokolova, A.: Trace semantics via determinization.
  J.~Comput.~System Sci.  \textbf{81},  859--879 (2015)

\bibitem{Johnstone75}
Johnstone, P.T.: Adjoint lifting theorems for categories of algebras.
  Bull.~London Math.~Soc.  \textbf{7},  294--297 (1975)

\bibitem{Joyal81}
Joyal, A.: Une th\'eorie combinatoire des s\'eries formelles. Adv. Math.
  \textbf{42},  1--82 (1981)

\bibitem{Joyal86}
Joyal, A.: Foncteurs analytiques et esp\`eces de structures. Lecture Notes in
  Math.  \textbf{1234},  126--159 (1986)

\bibitem{KaminskiFrancez94}
Kaminski, M., Francez, N.: Finite-memory automata. Theor.\ Comput.\ Sci.
  \textbf{134}(2),  329--363 (1994)

\bibitem{Kock70}
Kock, A.: Monads on symmetric monoidal closed categories. Arch. Math. (Basel)
  \textbf{21},  1--10 (1970)

\bibitem{KozenEA15}
Kozen, D., Mamouras, K., Petrisan, D., Silva, A.: Nominal {K}leene coalgebra.
  In: Automata, Languages, and Programming, {ICALP} 2015. Lecture Notes
  Comput.~Sci., vol.~9135, pp. 286--298. Springer (2015)

\bibitem{KurtzEA07}
K{\"{u}}rtz, K., K{\"{u}}sters, R., Wilke, T.: Selecting theories and nonce
  generation for recursive protocols. In: Formal methods in security
  engineering, {FMSE} 2007. pp. 61--70. {ACM} (2007)

\bibitem{Lambek68}
Lambek, J.: A fixpoint theorem for complete categories. Math.~Z.  \textbf{103},
   151--161 (1968)

\bibitem{Markowsky76}
Markowsky, G.: Chain-complete posets and directed sets with applications.
  Algebra Universalis  \textbf{6}(1),  53--68 (1976)

\bibitem{Martin13}
Martin, K.: Nothing can be fixed. In: Computation, logic, games, and quantum
  foundations, Lecture Notes in Comput. Sci., vol.~7860, pp. 195--196.
  Springer, Heidelberg (2013)

\bibitem{mps09}
Milius, S., Palm, T., Schwencke, D.: Complete iterativity for algebras with
  effects. In: Kurz, A., Lenisa, M., Tarlecki, A. (eds.) Proc.~Coalgebraic and
  Algebraic Methods in Computer Science (CALCO'09). Lecture Notes Comput.~Sci.,
  vol.~5728, pp. 34--48. Springer (2009)

\bibitem{mpw20}
Milius, S., Pattinson, D., Wi\ss\/mann, T.: A new foundation for finitary
  corecursion and iterative algebras. Inform.~and Comput.  \textbf{271} (2020),
  article 104456

\bibitem{Moggi91}
Moggi, E.: Notions of computations and monads. Inform.~Comput.  \textbf{93}(1),
   55--92 (1991)

\bibitem{Mulry94}
Mulry, P.S.: Lifting theorems for {K}leisli categories. In: Brookes, S., Main,
  M., Melton, A., Mislove, M., Schmidt, D. (eds.) Proc.~Mathematical
  Foundations of Programming Semantics (MFPS'93). Lecture Notes Comput.~Sci.,
  vol.~802, pp. 304--319. Springer (1994)

\bibitem{Pataraia97}
Pataraia, D.: A constructive proof of {T}arski's fixed-point theorem for dcpo's
  (November 1997), presented at the 65th Peripatetic Seminar on Sheaves and
  Logic, Aarhus

\bibitem{Pitts13}
Pitts, A.M.: Nominal Sets: Names and Symmetry in Computer Science. Cambridge
  University Press (2013)

\bibitem{TuriP97}
Plotkin, G.D., Turi, D.: Towards a mathematical operational semantics. In:
  Proc.~Logic in Computer Science (LICS) (1997)

\bibitem{RotEA13}
Rot, J., Bonsangue, M., Rutten, J.: Coalgebraic bisimulation-up-to. In: van
  Emde~Boas, P., Groen, F.C.A., Italiano, G.F., Nawrocki, J.R., Sack, H. (eds.)
  {SOFSEM} 2013: Theory and Practice of Computer Science, 39th International
  Conference on Current Trends in Theory and Practice of Computer Science,
  {\v{S}}pindler{\r{u}}v Ml{\'{y}}n, Czech Republic, January 26-31, 2013.
  Proceedings. Lecture Notes Comput.~Sci., vol.~7741, pp. 369--381. Springer
  (2013)

\bibitem{Rutten98}
Rutten, J.: Automata and coinduction (an exercise in coalgebra). In: Sangiorgi,
  D., de~Simone, R. (eds.) {CONCUR} '98: Concurrency Theory, 9th International
  Conference, Nice, France, September 8-11, 1998, Proceedings. Lecture Notes
  Comput.~Sci., vol.~1466, pp. 194--218. Springer (1998)

\bibitem{rutten00}
Rutten, J.: Universal coalgebra: a theory of systems. Theoretical Computer
  Science  \textbf{249}(1),  3 -- 80 (2000)

\bibitem{SchroderEA17}
Schr{\"{o}}der, L., Kozen, D., Milius, S., Wi{\ss}mann, T.: Nominal automata
  with name binding. In: Foundations of Software Science and Computation
  Structures, {FOSSACS} 2017. Lecture Notes Comput.~Sci., vol. 10203, pp.
  124--142 (2017)

\bibitem{SilvaEA13}
Silva, A., Bonchi, F., Bonsangue, M.M., Rutten, J.J.M.M.: Generalizing
  determinization from automata to coalgebras. Log.~Methods Comput.~Sci.
  \textbf{9}(1:9) (2013)

\bibitem{SilvaW13}
Silva, A., Westerbaan, B.: A coalgebraic view of {\(\epsilon\)}-transitions.
  In: Heckel, H., Milius, S. (eds.) Algebra and Coalgebra in Computer Science -
  5th International Conference, {CALCO} 2013, Warsaw, Poland, September 3-6,
  2013. Proceedings. Lecture Notes Comput.~Sci., vol.~8089, pp. 267--281.
  Springer (2013)

\bibitem{SmythPlotkin82}
Smyth, M.B., Plotkin, G.D.: The category-theoretic solution of recursive domain
  equations. SIAM J. Comput.  \textbf{11}(4),  761--783 (1982)

\bibitem{Taylor21}
Taylor, P.: Well founded coalgebras and recursion (April 2021), available at
  \url{https://www.paultaylor.eu/ordinals/welfcr.pdf}

\bibitem{WinterEA13}
Winter, J., Bonsangue, M., Rutten, J.: Coalgebraic characterizations of
  context-free languages. Log.~Methods Comput.~Sci.  \textbf{9}(3:14),  39 pp.
  (2013)

\bibitem{WinterEA15}
Winter, J., Bonsangue, M., Rutten, J.: Context-free coalgebras.
  J.~Comput.~System Sci.  \textbf{81},  911--939 (2015)

\bibitem{Wissmann20}
Wißmann, T.: Coalgebraic Semantics and Minimization in Sets and Beyond. Phd
  thesis, Friedrich-Alexander-Universit{\"a}t Erlangen-N{\"u}rnberg (FAU)
  (2020),
  \url{https://opus4.kobv.de/opus4-fau/frontdoor/index/index/docId/14222}

\bibitem{Zermelo04}
Zermelo, E.: {B}eweis, daß jede {M}enge wohlgeordnet werden kann. Math.~Ann.
  \textbf{59},  514--516 (1904)

\end{thebibliography}
    
%
%
%
\clearpage
\appendix

\section*{Appendix}

This appendix contains proof details omitted due to space restrictions.

\section{Details for \autoref{S:DCPOb}}

\subsection*{Proof of \autoref{T:dcpo}}

Even though the proof appeared recently we provide full details for
the convenience of the reader and to strengthen our point that the
whole proof of \autoref{T:Kl} is simpler and shorter than the previous ones. 

First, the proof is based on the following fixed point theorem for
directed-complete partial orders.
\begin{theorem}[Pataraia's Theorem]\label{T:Pata}
  Let $P$ be a dcpo with bottom. Then every monotone map $f\colon P
  \to P$ has a least fixed point $\mu f$.
\end{theorem}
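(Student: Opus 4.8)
The plan is to give the classical argument due to Pataraia, which produces the \emph{least} fixed point with no transfinite recursion. It proceeds in three stages: a reduction to an inflationary map, the construction of a suitable directed dcpo of self-maps of $P$, and the extraction of the fixed point from its greatest element.

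\emph{Reduction to the inflationary case.} First I would pass to the subset $P_0 = \{x \in P : x \le f(x)\}$. Monotonicity of $f$ shows $P_0$ is closed under directed joins formed in $P$ (if $D \subseteq P_0$ is directed then $d \le f(d) \le f(\bigvee D)$ for all $d \in D$, hence $\bigvee D \le f(\bigvee D)$), so $P_0$ is a dcpo with least element $\bot$, and $f$ restricts to a monotone map $f_0 \colon P_0 \to P_0$ with $\id_{P_0} \le f_0$. Since every fixed point of $f$ already lies in $P_0$, the fixed points of $f$ in $P$ and of $f_0$ in $P_0$ coincide, so it suffices to find the least fixed point of $f_0$. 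Thus from now on I assume $f$ itself is inflationary, $\id_P \le f$.

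\emph{A directed dcpo of functions.} Consider the poset $[P,P]$ of monotone self-maps of $P$ ordered pointwise; it is a dcpo in which directed joins are computed pointwise, and it is closed under composition. Let $\mathcal M \subseteq [P,P]$ be the \emph{least} subset that contains $f$ and is closed both under composition and under directed joins; this exists as the intersection of all such subsets, the family being nonempty since $[P,P]$ itself qualifies. Two routine closure arguments then apply. First, the set of $g$ with $\id_P \le g$ contains $f$ and is closed under composition and directed joins, so every $g \in \mathcal M$ is inflationary; consequently, for $g_1, g_2 \in \mathcal M$ the composite $g_1 \circ g_2 \in \mathcal M$ satisfies $g_1 \circ g_2 \ge g_1$ and $g_1 \circ g_2 \ge g_2$, so $\mathcal M$ is a directed poset. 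Being directed and closed under directed joins, $\mathcal M$ has a greatest element $t = \bigvee \mathcal M \in \mathcal M$.

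\emph{Extracting the least fixed point.} Since $f \circ t \in \mathcal M$ we get $f \circ t \le t$, while $f \circ t \ge t$ because $f$ is inflationary; hence $f \circ t = t$, and evaluating at $\bot$ shows $t(\bot)$ is a fixed point of $f$. For leastness, given any fixed point $q$ I would look at $S_q = \{g \in [P,P] : g(q) \le q\}$: it contains $f$ (as $f(q)=q$), is closed under composition (if $g_1(q), g_2(q) \le q$ then $g_1(g_2(q)) \le g_1(q) \le q$) and under directed joins, so $\mathcal M \subseteq S_q$ and in particular $t(q) \le q$; monotonicity of $t$ and $\bot \le q$ then give $t(\bot) \le t(q) \le q$. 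Hence $\mu f := t(\bot)$ lies below every fixed point and is itself one. The one genuinely subtle point is the choice of $\mathcal M$: it must be \emph{generated} from $f$ under composition and directed joins rather than taken to be the set of \emph{all} inflationary monotone maps — the latter is still a directed dcpo, but its greatest element applied to $\bot$ need only be \emph{some} fixed point, not the least one (e.g.\ on a finite chain it can be the top). Everything else is a straightforward verification that the relevant subsets of $[P,P]$ are closed under the stated operations.
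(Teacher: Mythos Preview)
Your proof is correct and is in fact Pataraia's original constructive argument. The paper takes a genuinely different route: it gives the short non-constructive Zermelo-style proof, iterating $f^i(\bot)$ transfinitely and invoking Hartogs' lemma to force a repetition, then showing the first repetition is the least fixed point. The paper explicitly mentions that Pataraia's constructive proof exists but was never published by him, and opts for the ordinal argument instead.

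The trade-offs are as follows. Your argument avoids ordinals, transfinite recursion, and any appeal to choice (including Markowsky's theorem, which the paper invokes to pass between chain-complete posets and dcpos with bottom); it works directly and constructively in the dcpo setting. The paper's argument is quicker to write down and yields immediately the induction principle they need afterwards (their corollary that $\mu f$ lies in every subset $S$ containing $\bot$ and closed under $f$ and directed joins), since $\mu f = f^j(\bot)$ and a one-line transfinite induction puts every $f^i(\bot)$ into $S$. That same induction principle is recoverable from your proof too, by observing that $\{g \in [P,P] : g[S] \subseteq S\}$ contains $f$ and is closed under composition and directed joins, hence contains $\mathcal M$ and in particular $t$, so $t(\bot) \in S$; but this is an extra step rather than an immediate by-product. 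Your closing remark about why $\mathcal M$ must be generated from $f$ rather than taken as all inflationary monotone maps is exactly the point needed for leastness and is well observed.
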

\noindent
This result is attributed to Pataraia since he gave the first
constructive proof~\cite{Pataraia97}. Sadly, he never published the
proof in written form. But proofs subsequently appeared in several
sources, e.g.~Ad\'amek et al.~\cite[Thm.~2.4]{amm21} present a proof based to
Martin's presentation~\cite{Martin13}.
%

A shorter but non-constructive argument appears as early as in
Zermelo's 1904 paper~\cite{Zermelo04} proving the well-ordering
theorem. His argument works for a \emph{chain-complete} poset $P$,
which means that for every ordinal $i$ each $i$-chain has a join in
$P$, where an \emph{$i$-chain} is a sequence $(x_j)_{j <i}$ of elements of
$P$ such that $x_j \leq x_k$ for all $j \leq k < i$. By Markowsky's
theorem~\cite{Markowsky76} we know that a poset is chain-complete iff
it is a dcpo with bottom.

\begin{proof}[\autoref{T:Pata}]
  Given a monotone function $f$ on a chain-complete poset $P$, one
  defines an ordinal-indexed sequence $f^i(\bot)$ by the following
  transfinite recursion:
  \[
    f^0(\bot) = \bot,\  f^{j+1}(\bot) = f(f^j(\bot)),
    \ \text{and}\
    f^j(\bot) = \bigvee_{i < j} f^i(\bot)
    \ \mbox{for limit ordinals $j$}.
  \]
  It is easy to verify that this is a chain in $P$. By Hartogs'
  Lemma~\cite{Hartogs15}, there exists an ordinal $i$ such that there
  is no injection from $i$ to the set $P$ (obviously, such an $i$ is
  larger than the cardinality of $P$). Then there must be some
  ordinals $j < k < i$ such that $f^j(\bot) = f^k(\bot)$, which
  implies that $f^{j+1}(\bot) = f^j(\bot)$. So~$f^j(\bot)$ is a fixed
  point of $f$. Now let $j$ be the least ordinal such that $f^j(\bot)$
  is a fixed point, and let $f(x) = x$. An easy transfinite induction
  shows that $f^i(\bot) \leq x$ for all ordinals $i$. Hence, $f^j(\bot)$
  is the least fixed point of~$f$.  \qed
\end{proof}

From the proof we immediately extract an induction principle related
to Scott induction; it appears e.g.~in work by Escard\'o~\cite[Thm.~2.2]{Escardo03} and
Taylor~\cite{Taylor21}.
\begin{corollary}\label{C:Pata}
  Let $P$ be a dcpo with bottom. If $f$ is a monotone function on $P$,
  then~$\mu f$ belongs to every subset $S\subseteq P$ which contains
  $\bot$ and is closed under~$f$ and under directed joins.
\end{corollary}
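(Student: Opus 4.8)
The plan is to read the conclusion directly off the transfinite construction used in the proof of \autoref{T:Pata}. Recall that there one builds the chain $f^i(\bot)$ by $f^0(\bot)=\bot$, $f^{j+1}(\bot)=f(f^j(\bot))$, and $f^j(\bot)=\bigvee_{i<j}f^i(\bot)$ for limit ordinals $j$, and one shows that $\mu f = f^j(\bot)$ for a suitable ordinal $j$ (the least one at which the chain becomes stationary). Hence it suffices to prove, by transfinite induction on $i$, that $f^i(\bot)\in S$ for every ordinal $i$; applying this to $i=j$ then yields $\mu f\in S$.

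For the base case, $f^0(\bot)=\bot\in S$ by hypothesis. For the successor step, if $f^i(\bot)\in S$, then $f^{i+1}(\bot)=f(f^i(\bot))\in S$ because $S$ is closed under $f$. For a limit ordinal $j$, the induction hypothesis gives $f^i(\bot)\in S$ for all $i<j$; the set $\{f^i(\bot):i<j\}$ is a nonempty chain in $P$ (nonempty because a limit ordinal satisfies $j\geq\omega$, and a chain by the same easy verification used in the proof of \autoref{T:Pata}), hence a directed subset of $S$, so its join $f^j(\bot)=\bigvee_{i<j}f^i(\bot)$ lies in $S$ since $S$ is closed under directed joins. This completes the transfinite induction.

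I do not expect a real obstacle here; the argument is essentially routine. The only point that needs a moment's care is that the join defining $f^j(\bot)$ at a limit ordinal is genuinely a \emph{directed} join — which it is, being the join of a nonempty chain — so that the closure hypothesis on $S$ is applicable; with that observed, the statement is just Scott-style induction along the iteration chain that reaches $\mu f$.
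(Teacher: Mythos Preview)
Your proof is correct and is essentially the same as what the paper intends: the paper states that the corollary is ``immediately extracted'' from the transfinite-chain proof of \autoref{T:Pata}, and your argument spells out exactly that extraction via transfinite induction along the chain $f^i(\bot)$.
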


For the proof of \autoref{T:dcpo} we still need a well-known lemma
establishing \emph{uniformity} of least fixed points. It is readily
proved using the above induction principle. A monotone function $f$ on
a dcpo $D$ with bottom is \emph{continuous} if it preserves directed
joins, and \emph{strict} if $f(\bot) = \bot$.
\begin{lemma}\label{L:mu-pres}
  Let $P, Q$ be dcpos with bottom and let $f\colon P \to P$ and
  $g\colon Q \to Q$ be monotone. For every strict continuous map
  $h\colon P \to Q$ such that $g \cdot h = h \cdot f$ we have $h(\mu
  f) = \mu g$.%
  \smnote{No `and' between $P, Q$ in the first line, please. I'd like
    to keep the mathdisplays within one line.}
\end{lemma}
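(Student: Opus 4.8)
The plan is to apply the Scott-style induction principle of \autoref{C:Pata}, together with the fact that $\mu f$ and $\mu g$ are \emph{least} fixed points of $f$ and $g$ respectively, as provided by Pataraia's Theorem (\autoref{T:Pata}). I would prove the two inequalities $h(\mu f) \le \mu g$ and $\mu g \le h(\mu f)$ separately.

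For the inequality $h(\mu f) \le \mu g$, consider the subset
\[
  S \defeq \{ x \in P : h(x) \le \mu g \} \subseteq P.
\]
I claim $S$ satisfies the hypotheses of \autoref{C:Pata}. It contains $\bot$: since $h$ is strict, $h(\bot) = \bot \le \mu g$. It is closed under $f$: if $x \in S$, then using $g \cdot h = h \cdot f$, monotonicity of $g$, and that $\mu g$ is a fixed point of $g$ (\autoref{T:Pata}), we get $h(f(x)) = g(h(x)) \le g(\mu g) = \mu g$, so $f(x) \in S$. It is closed under directed joins: for a directed family $(x_i)_{i \in D}$ in $S$, continuity of $h$ gives $h\big(\bigvee_{i \in D} x_i\big) = \bigvee_{i \in D} h(x_i) \le \mu g$. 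By \autoref{C:Pata} we conclude $\mu f \in S$, that is, $h(\mu f) \le \mu g$.

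For the reverse inequality, I would observe that $h(\mu f)$ is itself a fixed point of $g$: indeed $g(h(\mu f)) = h(f(\mu f)) = h(\mu f)$, where the first equality is $g \cdot h = h \cdot f$ and the second uses that $\mu f$ is a fixed point of $f$. Since $\mu g$ is the \emph{least} fixed point of $g$, this yields $\mu g \le h(\mu f)$. Combining the two inequalities gives $h(\mu f) = \mu g$, as claimed.

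I do not expect any real obstacle here; the argument is essentially mechanical. The only points that require a moment of care are the closure of $S$ under directed joins, which is exactly where continuity (rather than mere monotonicity) of $h$ is used, and the base case $\bot \in S$, which is where strictness of $h$ enters.
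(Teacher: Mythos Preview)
Your proof is correct and essentially identical to the paper's own argument: the paper also shows that $h(\mu f)$ is a fixed point of $g$ to get $\mu g \le h(\mu f)$, and then uses the same set $S = \{x \in P : h(x) \le \mu g\}$ with the same verifications (strictness for $\bot \in S$, commutation plus monotonicity of $g$ for closure under $f$, continuity of $h$ for closure under directed joins) together with \autoref{C:Pata} to obtain $h(\mu f) \le \mu g$. The only difference is the order in which the two inequalities are presented.
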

\begin{proof}
  First, $h(\mu f)$ is a fixed point of $g$: we have
  \(
  g (h(\mu f))
  =
  h(f(\mu f))
  =
  h(\mu f).
  \)
  Therefore $\mu g \leq h(\mu f)$. For the reverse relation, let $S =
  \set{x \in P : h(x) \leq \mu g}$. Since $h$ is strict, we see that
  $\bot \in S$. Moreover, $S$ is closed under $f$, for if $x \in S$ we
  obtain
  \(
  h(f(x)) = g(h(x)) \leq g(\mu g) = \mu g
  \)
  using monotonicity of $g$ in the second step. Finally, $S$ is closed
  under directed joins: if $D \subseteq S$ is a directed set we obtain
  \(
  h(\bigvee D) = \bigvee_{x \in D} h(x) \leq \bigvee_{x\in D} \mu g =
  \mu g,
  \)
  whence $\bigvee D$ lies in $S$. Thus, by \autoref{C:Pata}, $\mu
  f\in S$, which means that $h(\mu f) \leq \mu g$.
  \qed
\end{proof}
\begin{proof}[\autoref{T:dcpo}]
  Let $\ini\colon F I \to I$ be an initial algebra. For every
  coalgebra $\gamma\colon C \to FC$, we prove that a unique
  homomorphism into $(I, \ini^{-1})$ exists.
  \begin{enumerate}
  \item Existence. The endomap $g$ on $\C(C,I)$ given by
    $h \mapsto \ini \cdot Fh \cdot \gamma$ is monotone since
    composition is continuous, whence monotone, and $F$ is locally
    monotone. Hence, it has a least fixed point $h\colon C \to I$ with
    $\ini^{-1}\cdot h = Fh \cdot \gamma$ by \autoref{T:Pata}. This is
    a coalgebra homomorphism.

  \item Uniqueness. First notice that for $\C(I,I)$ we have an the
    analogous endomap $f$ given by $k \mapsto \ini \cdot Fk \cdot
    \ini^{-1}$. Since $I$ is initial, the only fixed point of $f$ is
    $k = \id_I$. Thus $\id_I = \mu f$. Now
    suppose that $h'\colon (C, \gamma) \to (I,\ini^{-1})$ is any
    coalgebra homomorphism. We know that
    $\C(h',I)\colon \C(I,I) \to \C(C,I)$ defined by $k \mapsto k \cdot
    h'$ is a strict continuous map;
    strictness follows from left-strict\-ness of composition:
    $\bot_{I,I} \cdot h' = \bot_{C,I}$. We now show that
    $g \cdot \C(h',I) = \C(h',I) \cdot f$. Indeed, unfolding
    the definitions, we have for every $k\colon I \to I$:
    \begin{align*}
      g \cdot \C(h',I)(k)
      & =
      g(k \cdot h')
      =
      \ini \cdot F(k\cdot h') \cdot \gamma
      =
      \ini \cdot Fk \cdot Fh' \cdot \gamma
      =
      \ini \cdot Fk \cdot \ini^{-1} \cdot h' \\
      & =
      f(k) \cdot h' =  \C(h',I) (f(k)).
    \end{align*}    
    By \autoref{L:mu-pres}, $\C(h',I)(\mu f) = \mu
    g$, which means that $h' = \id_I \cdot h' = h$.\qed%
  \end{enumerate}
\end{proof}

\section{Details for \autoref{S:trace}}

\subsection*{Details for \autoref{R:Kleisli}\ref{R:Kleisli:3}}

The naturality of $\rho$ as well as the two laws of a distributive law
all follow from the corresponding properties of $\lambda$ using that
the components of $q\colon F\epito G$ are epimorphic.

For the naturality of $\rho$ we consider the following diagram for every morphism
$f\colon X \to Y$ of $\C$. 
\[
  \begin{tikzcd}
    FTX
    \ar{rrr}{\lambda_X}
    \ar{ddd}[swap]{FTf}
    \ar[->>]{rd}{q_{TX}}
    &&&
    TFX
    \ar{ddd}{TFf}
    \ar{ld}[swap]{Tq_X}
    \\
    &
    GTX
    \ar{r}{\rho_X}
    \ar{d}[swap]{GTf}
    &
    TGX
    \ar{d}{TGf}
    \\
    &
    GTY
    \ar{r}{\rho_Y}
    &
    TGY
    \\
    TFY
    \ar{rrr}{\lambda_Y}
    \ar[->>]{ru}[swap]{q_{TY}}
    &&&
    FTY
    \ar{lu}{Tq_Y}
  \end{tikzcd}
\]
The outside commutes by the naturality of $\lambda$, and the left- and
right-hand parts by the naturality of $q$. The upper and lower parts
commute by assumption. Thus, the desired inner square commutes when
precomposed by the epimorphism $q_{TX}$, which implies that it
commutes.

For the unit law consider the diagram below:
\[
  \begin{tikzcd}
    F
    \ar{rd}{F\eta}
    \ar[->>]{rrr}{q}
    \ar{ddd}[swap]{\id}
    &&&
    G
    \ar{ld}[swap]{G\eta}
    \ar{ddd}{\id}
    \\
    &
    FT
    \ar{d}[swap]{\lambda}
    \ar[->>]{r}{qT}
    &
    GT
    \ar{d}{\rho}
    \\
    &
    TF
    \ar{r}{Tq}
    &
    TG
    \\
    F
    \ar{ru}{\eta F}
    \ar[->>]{rrr}{q}
    &&&
    G\ar{lu}[swap]{\eta G}
  \end{tikzcd}
\]
The inner square commutes by assumption and the outside trivially
does. The left-hand part commutes by the unit law for $\lambda$, the
lower part by the naturality of $\eta$, and the upper part by the
naturality of $q$. Thus, the desired right-hand part commutes when
precomposed by the epimorphism $q$ at the top, whence it commutes.

Finally, for the multiplication law of $\rho$ we consider the
following diagram
\[
  \begin{tikzcd}
    FTT
    \ar[->>]{rd}{qTT}
    \ar{rr}{\lambda T}
    \ar{ddd}[swap]{F\mu}
    &&
    TFT
    \ar{rr}{T\lambda}
    \ar{d}{TqT}
    &&
    TTF
    \ar{ddd}{\mu F}
    \ar{ld}[swap]{TTq}
    \\
    &
    GTT
    \ar{r}{\rho T}
    \ar{d}[swap]{G\mu}
    &
    TGT
    \ar{r}{T\rho}
    &
    TTG
    \ar{d}{\mu G}
    \\
    &
    GT
    \ar{rr}{\rho}
    &&
    TG
    \\
    FT
    \ar{rrrr}{\lambda}
    \ar[->>]{ru}{qT}
    &&&&
    TF
    \ar{lu}[swap]{Tq}
  \end{tikzcd}
\]
The outside commutes due to the multiplication law for $\lambda$. The
two upper inner parts and the lower one commute by assumption, the
left-hand part commutes by the naturality of $q$, and the right-hand
part commutes by the naturality of $\mu$. Thus, the desired inner
rectangle commutes when precomposed by the epimorphism $qTT$, thus it
commutes.

\subsection*{Proof of \autoref{P:dcpo}}
\begin{proof}
  We first consider $\powufs$.
  Given a family $f_i\colon X \to \powufs Y$ ($i\in I$) of equivariant
  functions we first show that for every $x \in X$ the union $\bigcup_{i\in
    I} f_i(x)$ is uniformly finitely supported by $\supp (x)$. Indeed, given
  $y$ in that union, there is some $i \in I$ such that $y \in
  f_i(x)$. Then we have $\supp (y) \subseteq \supp (f_i(x)) \subseteq
  \supp(x)$, where the first inclusion uses that $f_i(x)$ is uniformly
  finitely supported and the second one uses that $f_i$ is
  equivariant. Thus, we have a function $f\colon X \to \powufs Y$
  given by $f(x) = \bigcup_{i\in I}f_i(x)$. The equivariance of $f$
  easily follows from the equivariance of the $f_i$ and that of
  unions: for every $\pi \in \Perm(\names)$ we have
  \[\textstyle
    \pi \cdot f (x)
    =
    \pi \cdot \bigcup_{i\in I} f_i(x)
    =
    \bigcup_{i\in I} \pi \cdot f_i(x)
    =
    \bigcup_{i\in I} f_i(\pi \cdot x)
    =
    f(\pi \cdot x).
  \]
  Finally, it is clear that $f$ is the join of the $f_i$ in $\Kl\powufs(X,Y)$.

 The proof for $\powfs$ is analogous and only differs
  in one aspect: given $f_i\colon X \to \powfs Y$ ($i\in I$), the
  union $\bigcup_{i\in I} f_i(x)$ is finitely supported since the set
  $\set{f_i(x) : i \in I}$ is supported by $\supp (x)$ due to
  $\supp (f_i(x)) \subseteq \supp (x)$.
  \qed
\end{proof}

\subsection*{Proof of \autoref{P:comm}}

\begin{proof}
  Given nominal sets $X$ and $Y$, an element $x \in X$, and a
  (uniformly) finitely supported subset $S$ of $Y$, the set
  $\set{(x,s) : s \in S}$ is clearly (uniformly) finitely supported by
  $\supp(x) \cup \supp(S)$. This shows that the strength maps
  in~\eqref{eq:pstr} restrict to $\powfs$ and $\powufs$, and these
  maps are also easily seen to be equivariant. Now the validity of all
  the required equational axioms for commutativity are inherited from
  those for the (co-)strength of the power-set functor.
  \qed
\end{proof}

\subsection*{Proof of \autoref{P:abs-dist}}

\begin{rem}\label{R:cong}
  The proof makes use of some points that we mention upfront.
  \begin{enumerate}
  \item\label{R:cong:1} The finitely supported power-set functor
    distributes over the abstraction functor, that is
    $\powfs([\names]X) \cong [\names]\powfs X$ via a natural
    isomorphism $\rho_X$. This follows from Pitts'
    result~\cite[Prop.~4.14]{Pitts13} which shows that the abstraction
    functor preserves exponentials. The proof exhibits a family of
    equivariant isomorphisms
    $\psi_X\colon \powfs([\names]X) \to [\names]\powfs X$ defined by
    $\psi_X(S) = \braket a \set{x : \braket a x \in S}$, where $a$ is
    fresh for $S$.

    We prove that $\psi_X\colon \powfs[\names] X \to [\names]\powfs X$
    is natural in $X$ (the only point not proved by
    Pitts~\cite[Prop.~4.14]{Pitts13}). The naturality of $\rho$ then
    ensues. For any $S$ in $\powufs [\names] X$ we write $S_a$ for the
    set $\set{y : \braket a y \in S}$. Given an equivariant map
    $f\colon X \to Y$ we need to prove that the following square
    commutes
    \[
      \begin{tikzcd}
        \powfs [\names]X
        \ar{r}{\psi_X}
        \ar{d}[swap]{\powfs[\names] f}
        &
        {[\names]} \powfs X
        \ar{d}{[\names]\powfs f}
        \\
        \powfs[\names]Y
        \ar{r}{\psi_Y}
        &
        {[\names]}\powfs Y
      \end{tikzcd}
    \]
    Applying both composites above to a finitely supported subset $S$
    in $\powfs [\names] X$ and unfolding definitions, this boils down
    to showing that
    \[
      \braket a \big([\names] f [S]\big)_a = \braket a f[S_a].
    \]
    Using Pitts~\cite[Lem.~4.3]{Pitts13}, this holds if and only if the
    sets $\big([\names] f[S]\big)_a$ and $f[S_a]$ are equal. This is
    seen from the following chain of equivalences
    \begin{align*}
      y \in f[S_a] 
      &\iff \exists x \in S_a.\, f(x) = y \\
      &\iff \exists x \in X.\, \text{$f(x) = y$ and $\braket a x \in S$}\\
      &\overset{(*)}{\iff} 
      \exists \braket b x \in S.\, \braket b f(x) = \braket a y\\
      &\iff \braket a y\in [\names] f[S]\\
      &\iff y\in \big([\names] f[S]\big)_a
    \end{align*}
    In the equivalence labelled by ($*$) the implication from left to
    right is clear (take $b = a$). For the reverse implication take
    $b \fresh S$ and let $\pi = (a\, b)$. Since $a, b$ are fresh for $S$
    (the former by definition of $\psi_X$) and $\braket b x \in S$, we
    have $\braket a (\pi \cdot x) = \pi \cdot (\braket b x) \in S$. In
    addition, we have that $f(\pi \cdot x) = y$ since
    \[
      \braket a f(\pi \cdot x) = \pi \cdot (\braket b f(x)) = \pi \cdot
      (\braket a y) = \braket a y, 
    \]
    where the last equation holds because $a$ and $b$ are fresh for
    $\braket a y$. 

  \item Even though not needed for the proof below, let us also
    mention that, specializing the description from the proof of
    \cite[Prop.~4.14]{Pitts13} we see that the components of the
    inverse of $\psi$ are given by
    $\rho_X\colon [\names]\powfs X \to \powfs([\names]X)$ defined by
    \[
      \rho_X(\braket a S) = \set{z : z \con b \in \big((\braket a S) \con b\big)},
    \]
    where $b \neq a$ is fresh for $S$. Here, $\con$ is the concretion operator given by $(\braket a x) \con b=x$ if $b = a$ and $(\braket a x) \con b=(a\, b)
    \cdot x$ if $a \neq b$ is fresh for $x$, and undefined otherwise.
    
  \item The abstraction functor is a quotient of the functor
    $F_2 X = \names \times X$ via the natural transformation given by
    the canonical quotient maps
    $q_X\colon \names \times X \to [\names] X$ defined by
    $(a,x) \mapsto \braket a x$ for every $a \in \names$ and
    $x \in X$. Moreover, we know that we have a distributive law of
    $F_2$ over the monad $\powufs$ given by the strength, that is
    $\lambda_X\colon \names \times \powufs X \to \powufs(\names \times
    X)$ is defined by $(a,S) \mapsto \set {(a,s) : s \in S}$.
  \end{enumerate}
\end{rem}
\begin{proof}[\autoref{P:abs-dist}]
  According to \autoref{R:Kleisli}\ref{R:Kleisli:3} and the third
  point above it suffices to exhibit a family of equivariant maps 
  $\rho_X\colon [\names] \powufs X \to \powufs [\names] X$ such that
  the following diagram commutes
  \begin{equation}\label{eq:square}
    \begin{tikzcd}
      \names\times \powufs X
      \ar{r}{\lambda_X}
      \ar[->>]{d}[swap]{q_{\powufs X}}
      &
      \powufs(\names \times X)
      \ar{d}{\powufs q_X}
      \\
      {[\names]\powufs X}
      \ar{r}{\rho_X}
      &
      {\powufs[\names] X}
    \end{tikzcd}
  \end{equation}
  \begin{enumerate}
  \item First, we note that the maps $\psi_X$ in
    \autoref{R:cong}\ref{R:cong:1} restrict to $\powufs$. Indeed,
    given a uniformly finitely supported subset
    $S \subseteq [\names] X$ we see that for every $\braket a x \in S$
    we have $\supp(x) \subseteq \set a \cup \supp(S)$. Thus,
    $\set{x : \braket a x \in S}$ is uniformly finitely supported, and
    therefore so is $\psi_X(S)$.

  \item We prove that the inverse of $\psi_X\colon
    \powufs([\names] X) \to [\names]\powufs X$ is
    \begin{equation}\label{eq:rho-simp}
      \rho_X (\braket a S) = \set{\braket a s : s \in S}.
    \end{equation}
    We clearly have for every $\braket a S$ in $[\names]\powufs X$ that
    \[
      \psi_X \cdot \rho_X (\braket{a} S) = \psi_X\big(\set{\braket a s :
        s \in S}) = \braket a S.
    \]
    Furthermore, given $S$ in $\powufs([\names] X)$, we choose $a$ fresh
    for $S$ and compute
    \[
      \rho_X \cdot \psi_X(S)
      =
      \rho_X\big(\braket a \set{x : \braket a x \in S} \big)
      =
      \set{\braket a x : \braket a x \in S}
      =
      S.
    \]


  \item For the commutativity of the square~\eqref{eq:square} we compute as
    follows for every $(a,S) \in \names \times \powufs X$:
    \begin{align*}
      \powufs q_X \big(\lambda_X(a,S)\big)
      &= \powufs q_X \set{(a,s) : s \in S}
      =
      \set{\braket a s : s \in S}
      =
      \rho_X (\braket a S)
      \\
      &=
      \rho_X\big(q_{\powufs X}(a,S)\big).
    \end{align*}
  \item Finally, we prove that the extension $\barAbs(-)$ ensuing from
    the distributive law~$\rho$ is locally monotone. It maps a nominal
    set $X$ to $[\names]X$ and an equivariant map
    $f\colon X \to \powufs Y$ to
    $\barAbs f = \rho_Y \cdot [\names]f\colon [\names]X \to \powufs
    {[\names]Y}$. Let $f, g\colon X \to \powufs Y$ satisfy
    $f \leqslant g$, which means that $f(x) \subseteq g(x)$ for all
    $x \in X$. Then we clearly have for every
    $\braket a x \in [\names]X$ that
    \[
      \barAbs f(\braket a x)
      =
      \set{\braket a y : y \in f(x)}
      \subseteq
      \set{\braket a y : y \in g(x)}
      =
      \barAbs g(\braket a x).
      \tag*{\qed}
    \]
  \end{enumerate}
\end{proof}

\subsection*{Proof of \autoref{cor:extension-pufs}}

\begin{proof}
  This is shown by structural induction following the grammar in
  \eqref{eq:grammar}. The base case follows from
  \autoref{P:abs-dist} since constant functors and the identity
  clearly canonically extend to locally monotone functors on
  $\Kl\powufs$. For the induction step use
  \autoref{E:ext}\ref{E:ext:2} and~\ref{E:ext:3} and the
  easily established fact that the canonical extension of a coproduct
  or a finite product of functors with a locally monotone extension is
  locally monotone, too. \qed
\end{proof}

\subsection*{Proof of \autoref{T:NOFA-Kl}}

\begin{proof}
  Since the transitions and final states of a NOFA are equivariant,
  the data language $L_c(x)$ accepted by any state $x\in X$ is finitely
  supported by $\supp(x)$, and $L_c(\pi\cdot x)=\pi\cdot L_c(x)$ for all
  $\pi\in \Perm(\names)$. Thus $L_c\colon X\to \powfs(\names^*)$ is a
  well-defined equivariant map. It suffices to show that the square
  below in $\Kl\powfs$ commutes; then $\tr_c=L_c$ by uniqueness of
  $\tr_c$.
  \[
    \begin{tikzcd}[column sep = 30, row sep = 15]
      X \ar{d}[swap]{c} \ar{r}{L_c}
      &
      \mu F
      \ar{d}{J\ini^{-1}}
      \\
      \barF X
      \ar{r}{\bar F L_c}
      &
      \barF(\mu F)
    \end{tikzcd}
    \qquad = \qquad
    \begin{tikzcd}[column sep = 35, row sep = 18]
      X \ar{d}[swap]{c} \ar{r}{L_c}
      &
      \names^*
      \ar{d}{J\ini^{-1}}
      \\
      1+\names\times X
      \ar{r}{\bar F L_c}
      &
      1+\names\times \names^*
    \end{tikzcd}
\]
Here the map $\bar F L_c\colon 1+\names\times X\to
\powfs(1+\names\times \names^*)$ is given by
\[
  \bar F L_c(\ast)=\{\ast\}
  \qquad\text{and}\qquad
  \bar F L_c(a,x) = \{ (a,w) : w\in L_c(x) \},
\]
where $1=\{\ast\}$. Denoting Kleisli composition by $\bullet$, for every $x\in X$ we have
\begin{align*}
\ast \in \barF L_c\bullet c(x) & \iff \ast \in c(x) \\
&\iff \text{$x$ is a final state} \\
&\iff \epsilon \in L_c(x) \\
&\iff \ast\in J\iota^{-1}\bullet L_c(x).
\end{align*}
Moreover, for $a\in \names$ and $w\in \names^*$ we compute
\begin{align*} (a,w)\in  \barF L_c\bullet c(x) &\iff \exists y\in X: (a,y)\in c(x) \wedge w\in L_c(y) \\
&\iff \exists y\in X: \text{$x\xto{a} y$ and $y$ accepts $w$} \\
&\iff aw\in L_c(x) \\
&\iff (a,w)\in J\iota^{-1}\bullet L_c(x).
\end{align*}
Thus $\barF L_c\bullet c=J\iota^{-1}\bullet L_c$ as claimed.
\qed
\end{proof}

\subsection*{Proof of \autoref{T:RNNA-Kl}}

\begin{proof}
  Note first that the bar language $L_c(x)$ accepted by any state
  $x\in X$ is uniformly finitely supported by $\supp(x)$, see
  \cite[Cor.~5.5]{SchroderEA17}. Moreover, by equivariance of
  transitions and final states, we have $L_c(\pi\cdot x)=\pi\cdot L_c(x)$
  for all $\pi\in \Perm(\names)$. Thus
  $L_c\colon X\to \powufs(\names^*)$ is a well-defined equivariant
  map. It suffices to show that the square below in $\Kl\powufs$
  commutes; then $\tr_c=L_c$ by uniqueness of $\tr_c$.
  \[
    \begin{tikzcd}[column sep = 30, row sep = 18]
      X \ar{d}[swap]{c} \ar{r}{L_c}
      &
      \mu F
      \ar{d}{J\ini^{-1}}
      \\
      \barF X
      \ar{r}{\bar F L_c}
      &
      \barF(\mu F)
    \end{tikzcd}
    \quad = \hspace{2pt} 
    \begin{tikzcd}[column sep = 30, row sep = 18]
      X \ar{d}[swap]{c} \ar{r}{L_c}
      &
      \barstr
      \ar{d}{J\ini^{-1}}
      \\
      \mathllap{1+\names}\times X+[\names]X
      \ar{r}{\barF L_c}
      &
      1+\names\times \barstr + [\names]\barstr
    \end{tikzcd}
  \]
  Thus let $x\in X$. Denoting Kleisli composition by $\bullet$, the
  equivalences
  \begin{align*}
    \ast \in \barF L_c\bullet c(x)
    & \iff \ast\in J\iota^{-1}\bullet L_c(x)
    \\
    (a,[w]_\alpha)\in  \barF L_c\bullet c(x)
    &\iff (a,[w]_\alpha)\in J\iota^{-1}\bullet L_c(x).
  \end{align*}
  for $a\in \names$ and $w\in \barA^*$ are established as in the proof
  of \autoref{T:NOFA-Kl}. Moreover,
  \begin{align*} 
    & \braket{a}[w]_\alpha \in \bar F L_c\bullet c(x)
    \\
    \iff &
    \exists b\in\names, v\in \names^*, y\in X:  \braket{b}y\in c(x),\, [v]_\alpha\in L_c(y),\, \braket{a}[w]_\alpha = \braket{b}[v]_\alpha \\
\iff &  \exists b\in\names, v\in \names^*, y\in X:  x\xto{\scriptnew b} y,\, [v]_\alpha\in L_c(y),\, \braket{a}[w]_\alpha = \braket{b}[v]_\alpha \\
\iff & \exists b\in\names, v\in \names^*: [\newletter bv]_\alpha\in L_c(y),\, \braket{a}[w]_\alpha = \braket{b}[v]_\alpha \\
\iff & \exists b\in\names, v\in \names^*: [\newletter bv]_\alpha\in L_c(y),\, [\newletter a w]_\alpha=[\newletter b v]_\alpha \\
\iff & [\newletter aw]_\alpha\in L_c(y) \\
\iff & \braket{a}[w]_\alpha \in J\iota^{-1}\bullet L_c(x).
\end{align*}
This proves $\barF L_c\bullet c=J\iota^{-1}\bullet L_c$ as claimed.
\qed
\end{proof}

\section{Details for \autoref{S:lang}}

\subsection*{Proof of \autoref{P:adj-trans}}

\begin{proof}
  Recall that the adjoint transpose of a morphism $f\colon X \to RY$,
  that is its image under the inverse of the natural isomorphism
  $\C(LX, Y) \cong \C(X, RY)$, is given by
  \[
    LX \xra{Lf} LRY \xra{\eps_Y} Y.
  \]
  This implies that $\rho\colon TR \to RT$ is uniquely determined by
  $\eps \cdot L\rho$. Hence, according to the statement of our
  proposition $\rho$ is determined by the commutativity of the
  square below:
  \[
    \begin{tikzcd}
      LTR
      \ar{r}{L\rho}
      \ar{d}[swap]{\lambda R}
      &
      LRT
      \ar{d}{\eps T}
      \\
      TLR
      \ar{r}{T\eps}
      &
      T
    \end{tikzcd}
  \]
  
  We verify that the two properties of a distributive law for $\rho$
  follow from those of $\lambda$ by taking adjoint transposes. That
  is, it suffices to show that the desired diagram commutes when we
  apply $L$ to them and postcompose with $\eps T$. For the unit law
  $\rho \cdot \eta R = R\eta$ we obtain the commutative diagram below:
  \[
    \begin{tikzcd}
      LR
      \ar{r}{L\eta R}
      \ar{dd}[swap]{LR\eta}
      \ar{rd}[inner sep=0]{\eta LR}
      \ar{rdd}[swap]{\eps}
      &
      LTR
      \ar{r}{L\rho}
      \ar{d}{\lambda R}
      &
      LRT
      \ar{dd}{\eps T}
      \\
      &
      TLR
      \ar{rd}{T\eps}
      \\
      LRT
      \ar[shiftarr = {yshift=-20}]{rr}{\eps T}
      &
      \Id
      \ar{r}{\eta}
      &
      T
    \end{tikzcd}
  \]
  The right-hand part commutes by the definition of $\rho$, the
  left-hand triangle by the unit law for $\lambda$, the middle
  triangle by the naturality of $\eta$ and the remaining lower part by
  the naturality of $\eps$.

  For the multiplication law $\rho \cdot \mu R = R\mu
  \cdot \rho T \cdot T\rho$ we obtain the diagram below:
  \[
    \begin{tikzcd}
      TLTR
      \ar{rr}{T\lambda R}
      \ar{dd}[swap]{T\lambda R}
      \ar{rrd}{TL\rho}
      &&
      TTLR
      \ar{rr}{TT\eps}
      &&
      TT
      \ar{dddd}{\mu}
      \\
      &&
      TLRT
      \ar{rru}{T\eps T}
      \\
      TTLR
      \ar{dd}[swap]{\mu LR}
      &
      LTTR
      \ar{r}{LT\rho}
      \ar{d}[swap]{L\mu R}
      \ar{luu}[swap]{\lambda TR}
      &
      LTRT
      \ar{r}{L\rho T}
      \ar{u}{\lambda RT}
      &
      LRTT
      \ar{d}{LR\mu}
      \ar{ruu}{\eps TT}
      \\
      &
      LTR
      \ar{rr}{L\rho}
      \ar{ld}{\lambda R}
      &&
      LRT
      \ar{rd}{\eps T}
      \\
      TLR
      \ar{rrrr}{T\eps}
      &&&&
      T
    \end{tikzcd}
  \]
  First, to see that the outside commutes remove $T\lambda R$ at the
  beginning of both paths and use the naturality of $\mu$. The inner
  part commutes as follows: the upper triangle commutes by the
  definition of $\rho$, the left-hand part below it by the naturality
  of $\lambda$, the part to its right commutes by the definition of
  $\rho$, the left-hand part commutes by the multiplication law for
  $\lambda$, the right-hand part commutes by the naturality of $\eps$,
  and the lower part by the definition of $\rho$. Thus, the middle
  rectangle commutes when postcomposed by $\eps T$, which is what we
  need to prove, and so we are done.
  \qed
\end{proof}

\subsection*{Details for \autoref{R:eps}\ref{R:eps:2}}

We will prove that $h^\sharp\colon E(X,c) \to E(Y,d)$ is a
homomorphism of $\hatG$-coalgebras and show that $E$ is a
functor. First, we collect some properties of Kleisli extensions and
homomorphic extensions (\autoref{R:hom-ext}):
\begin{rem}
  \begin{enumerate}
  \item The action of $\barF\colon \Kl T \to \Kl T$ on a morphism
    $f\colon X \to TY$ is given using the corresponding distributive
    law $\lambda\colon FT \to TF$ as follows:
    \begin{equation}\label{eq:ext}
      \barF (f) = \big(FX \xra{Ff} FTY \xra{\lambda_Y} TFY\big).
    \end{equation}
  \item Given a morphism $f\colon X \to
    TY$ in $\C$ the homomorphic extension fulfils
    \begin{equation}\label{eq:sharp}
      f^\sharp = \big(TX \xra{Tf} TTY \xra{\mu_Y} TY\big).
    \end{equation}
    
  \item Given an algebra $(A,a)$ for $T$, a morphism $f\colon X \to TY$ and a morphism
    $h\colon (A,a) \to (B,b)$ in $\EM T$ we have
    \begin{equation}\label{diag:sharp}
      \begin{tikzcd}
        TX
        \ar{r}{f^\sharp}
        \ar{rd}[swap,inner sep=1]{(h \cdot f)^\sharp}
        &
        A
        \ar{d}{h}\\
        &
        B
      \end{tikzcd}
    \end{equation}
    Indeed, both paths are morphisms in $\EM T$ which agree when
    precomposed with $\eta_X\colon X \to TX$. The universal property
    of the free algebra $(TX, \mu_X)$ thus yields the commutativity of
    the above triangle.

  \item Restricting homomorphic extensions to free algebras yields the
    extension operation from the presentation of $T$ as a Kleisli
    triple. That means that for every morphisms $f\colon X \to TY$ and
    $g\colon Y \to TZ$ we have
    \begin{equation}\label{eq:kl-laws}
      f^\sharp \cdot \eta_X = f,
      \qquad
      \eta_X^\sharp = \id_{TX},
      \qquad
      g^\sharp \cdot f^\sharp = (g^\sharp \cdot f)^\sharp.
    \end{equation}
  \end{enumerate}
\end{rem}

We prove the desired result that the following square commutes in~$\EM T$:
\[
  \begin{tikzcd}[column sep = 30]
    TX
    \ar{d}[swap]{(\eps_X \cdot c)^\sharp}
    \ar{r}{h^\sharp}
    &
    TY
    \ar{d}{(\eps_Y \cdot d)^\sharp}
    \\
    \hatG TX
    \ar{r}{\hatG h^\sharp}
    &
    \hatG TY
  \end{tikzcd}
\]
It suffices to show that the diagram commutes (in $\C$) when
precomposed by the universal morphism $\eta_X\colon X \to TX$ of the
free Eilenberg-Moore algebra $(TX, \mu_X)$. Using $h^\sharp \cdot
\eta_X = h$ and similarly for $\eps_X\cdot c$ it is thus our task to
show that the outside of the following diagram commutes:
\[
  \begin{tikzcd}[column sep = 30]
    X
    \ar{d}[swap]{c}
    \ar{rrr}{h}
    &&&
    TY
    \ar{ld}{Td}
    \ar{dd}{d^\sharp}
    \ar[shiftarr = {xshift=25}]{ddd}{(\eps_Y \cdot d)^\sharp}
    \\
    TFX
    \ar{rd}[swap, inner sep=1, near end]{TFh}
    \ar{dd}[swap]{\eps_X}
    \ar[bend left = 10]{rrd}{T\barF h}
    &&
    TTFY
    \ar{rd}{\mu_{FY}}
    \\
    &
    TFTY
    \ar{r}{T\lambda_Y}
    \ar{d}{\eps_{TY}}
    &
    TTFY
    \ar{r}{\mu_{FY}}
    &
    TFY
    \ar{d}{\eps_Y}
    \\
    GTX
    \ar{r}{GTh}
    \ar[shiftarr = {yshift=-15}]{rrr}{Gh^\sharp}
    &
    GTTY
    \ar{rr}{G\mu_Y}
    &&
    GTY
  \end{tikzcd}
\]
Indeed, the upper inner part commutes since $h$ is a homomorphism of
coalgebras for $\barF$ (unfolding the definition of the composition in
$\Kl T$; see \autoref{R:Kleisli}\ref{R:Kleisli:1}). The triangle below
it commutes due to~\eqref{eq:ext}, and the triangle on the right as
well as the lowest part by~\eqref{eq:sharp}. The right-hand part
commutes by~\eqref{diag:sharp} using that $\eps_Y$ is a morphism in
$\EM T$ (\autoref{R:eps}\ref{R:eps:1}). Finally, the
lower left-hand part commutes by the naturality of $\eps$, and the
lower right-hand one due to the left-hand law in~\eqref{diag:eps}.

Establishing functoriality of $E$ directly is straightforward using
the Kleisli laws~\eqref{eq:kl-laws}. More conceptually, $E$ is a
clearly defined as a lifting of the canonical comparison functor
$K\colon \Kl T \to \EM T$, so functoriality of $E$ is obvious.  \qed

\subsection*{Proof of \autoref{P:relation}}

\begin{proof}
  By \autoref{R:eps}\ref{R:eps:2}, we know that
  $E\tr_c = \tr_c^\sharp\colon TX \to T(\mu F)$ is a homomorphism of
  coalgebras for $\hatG$ from the $\hatG$-coalgebra $E(X,c)$ viz.~the
  determinization of the coalgebra
  \[
    X \xra{c} TFX \xra {\eps_X} GTX
  \]
  to $E(\mu F, J\ini^{-1})$. Since $\nu G$ carries the terminal
  $\hatG$ coalgebra, we have $e \cdot \tr_c^\sharp = h$ where
  \[
    E(X,c) =  (TX, (\eps_X \cdot c)^\sharp) \xra{h} \nu G
  \]
  is the unique $\hatG$-coalgebra homomorphism. Precomposing by
  $\eta_X$ we obtain
  \[
    \lan(\eps_X \cdot c) = h \cdot \eta_X = e \cdot \tr_c^\sharp
    \cdot \eta_X = e \cdot \tr_c. \tag*{\qed}
  \]
\end{proof}

\subsection*{Remark on \autoref{P:lift-abst}}

\medskip\noindent 

 \noindent It is not hard to see that the components of the distributive law arising from the
  proof of \autoref{P:lift-abst} are the natural isomorphisms
  $\psi_X \colon \powufs([\names] X) \to [\names]\powufs X$ given by
  $\psi_X = \braket{a}\{ x : \braket{a}x\in S\}$, where~$a$ is fresh
  for~$S$ (see \autoref{R:cong}). 

  \smnote{The following can go to the appendix.}%
  To see this recall from Pitts~\cite[Thm.~4.12]{Pitts13} that the
  counit of the adjunction $\names * (-) \dashv [\names](-)$ is given
  by
  \[
    \eps_X\colon \names * [\names] X \to X,
    \qquad
    (a, \braket b x) \mapsto (a\, b)\cdot x. 
  \]
  Moreover, the adjoint transpose of an equivariant map $f\colon
  \names * X \to Y$ is 
  \[
    \hat f\colon X \to [\names] Y,
    \qquad
    x \mapsto \braket a f(a,x),
  \]
  where $a$ is fresh for $x$.%
  \smnote{Please check this carefully; I am always confused by Pitts' \texttt{fresh}-notation.}
  From the proof of \autoref{P:adj-trans} we see that the distributive
  law of $\powufs$ over $[\names](-)$ is given by the adjoint
  transposes of
  \begin{equation}\label{eq:eps-s}
    \begin{tikzcd}[row sep = -1,column sep = 15]
      \names * \powufs([\names] X)
      \ar{r}{s_{\names,X}}
      &
      \powufs(\names * [\names]X)
      \ar{r}{\powufs \eps_X}
      &
      \powufs X
      \\      
      (a,S)
      \ar[|->]{r}
      &
      \set{(a,\braket b x) : \braket b x \in S}
      \ar[|->]{r}
      &
      \{(a\, b)\cdot x : \mathrlap{\braket b x \in S\}.}
    \end{tikzcd}
    \qquad
  \end{equation}
  The nominal set on the right above is easily seen to be equal to
  $\set{x : \braket a x \in S}$. Indeed, for ``$\subseteq$'' note that
  $\braket b x = \braket a ((a\,b)\cdot x)$, and ``$\supseteq$'' is
  obvious since for $a=b$ we have $(a\, b) \cdot x = x$.
  Thus, the adjoint transpose of the equivariant map
  in~\eqref{eq:eps-s} is~$\psi_X$ as desired. 

\subsection*{Proof of \autoref{prop:ter}}

\begin{proof}
\begin{enumerate}
\item\label{prop:ter:1} Let $\cpowfs\colon \Nom \to \Nom^\opp$ denote the contravariant finitely supported power-set functor defined by
$\cpowfs X = \powfs X$ and $\cpowfs f\colon A \mapsto f^{-1}[A]$. Note that $\cpowfs$ is naturally isomorphic to the exponentiation functor $X\mapsto 2^X$, so it has the right adjoint $\cpowfs^\opp\colon \Nom^\opp\to \Nom$. Moreover, $\cpowfs$ commutes with the abstraction functor: for each nominal set $X$ we have the bijection
\[ \phi_X\colon \cpowfs([\names]X)\xto{\cong} [\names](\cpowfs X) \]
defined by $\phi_X(S)=\braket{a}\{x : \braket{a}x\in S\}$ where $a$ is fresh for $S$, cf. \autoref{R:cong}\ref{R:cong:1}. It is natural in $X$, that is, the following square commutes for every $f\colon Y\to X$:
\[  
  \begin{tikzcd}[column sep = 40]
    \cpowfs({[\names]X})  \ar{d}[swap]{\phi_X}  \ar{r}{\cpowfs([\names]
      f)}
    &
    \cpowfs({[\names]Y})
    \ar{d}{\phi_Y}
    \\
    {[\names]}\cpowfs X
    \ar{r}{[\names]\cpowfs f}
    &
    {[\names]}\cpowfs Y
\end{tikzcd}
\]
To see this, let $S\in \cpowfs([\names]X)$ and pick $a\in \names$ fresh for $S$. Then $a$ is also fresh for $\cpowfs([\names]f)(S)=([\names]f)^{-1} [S]$, and so 
\begin{align*}
\phi_Y(\cpowfs([\names]f)(S)) &= \phi_Y(([\names]f)^{-1}[S]) \\
&= \braket{a} \{ y\in Y: \braket{a}y\in ([\names]f)^{-1}[S]  \} \\
&= \braket{a} \{ y\in Y: \braket{a}f(y) \in S \} \\
&= \braket{a} f^{-1}( \{ x\in X: \braket{a}x\in S \} ) \\
&= \braket{a} \cpowfs f( \{ x\in X: \braket{a}x\in S  \}) \\
&= [\names](\cpowfs f) ( \braket{a} \{ x\in X: \braket{a}x\in S  \} ) \\
&= [\names](\cpowfs f) ( \phi_X(S)  ).
\end{align*}

\item\label{prop:ter:2} Let $FX=1+\names\times X+ [\names]X$ and $GX=2\times X^\names \times [\names]X$.
Then the square
\begin{equation}\label{eq:qfs-square}
\begin{tikzcd}
\Nom  \ar{r}{\cpowfs} \ar{d}[swap]{F} & \Nom^\opp \ar{d}{G^\opp} \\
\Nom \ar{r}{\cpowfs} & \Nom^\opp
\end{tikzcd}
\end{equation}
commutes up to natural isomorphism: for every $X\in\Nom$ we have
\begin{align*} 
\cpowfs FX &= \cpowfs(1+\names\times X + [\names]X) \\
& \cong 2\times \cpowfs(\names\times X) \times \cpowfs([\names]X) \\
& \cong 2\times (\cpowfs X)^\names \times [\names](\cpowfs X) \\
& = G^\opp \cpowfs X 
\end{align*}
where the penultimate step uses the laws of exponentiation and the
natural isomorphism $\phi_X$ from \autoref{prop:ter:1} of the proof.
\item By \cite[Thm.~2.14]{HermidaJ98}, commutativity of the square \eqref{eq:qfs-square} up to natural isomorphism implies the left adjoint $\cpowfs$ lifts to a left adjoint 
\[ \overline{\cpowfs}\colon \Alg{F}\to \Alg{G^\opp}\cong (\Coalg{G})^\opp \]
mapping an $F$-algebra $FA\xto{\alpha} A$ to the $G$-coalgebra
\[ \powfs A = \cpowfs A \xto{\cpowfs \alpha} \cpowfs FA \cong G^\opp \cpowfs A = G\powfs A. \]
In particular, since the left adjoint $\overline{\cpowfs}$ preserves colimits (whence initial objects), we see that the terminal coalgebra for $G$ is given by 
\begin{align*}
  \powfs (\barstr) = \cpowfs (\barstr) \xto{\cpowfs \iota} \cpowfs
  F(\barstr) &\cong G^\opp \cpowfs (\barstr) \\
  &= G\powfs (\barstr),
\end{align*}
where $F(\barstr)\xto{\iota} \barstr$ is the initial algebra of $F$
(see \autoref{prop:ini}). By definition of $\iota$ and of the
isomorphism $\cpowfs F\cong G^\opp \cpowfs$ in \autoref{prop:ter:2}
above, this yields precisely the desired coalgebra structure $\ter$ on $\powfs(\barstr)$.
\qed
\end{enumerate}
\end{proof}

\subsection*{Proof of \autoref{L:eps-ext}}

\begin{proof}
  We verify that the two laws in~\eqref{diag:eps} hold.
  \begin{enumerate}
  \item For verification of the left-hand law first note that the distributive law
    \[
      \lambda_X\colon
      1 + \names \times \powufs X + [\names](\powufs X)
      \to
      \powufs (1 + \names \times X + [\names] X)
    \]
    corresponding to the canonical lifting $\barF$ is given on the three
    coproduct components of its domain by 
    \[
      \lambda_X(*) = \set{*}, \quad
      \lambda(a, S) = \set{(a,x) : x \in S}, \quad
      \lambda(\braket a S) = \set{\braket a s : s \in S};
    \]
    indeed, this can be gleaned from \autoref{E:ext} and the proof of
    \autoref{P:abs-dist}.
    
    Now given
    $\S \in TFTX = \powufs(1 + \names \times \powufs X + [\names]
    (\powufs X))$, we apply the definitions of $\lambda_X$ and $\mu_X$
    we see that the set $\mu_{FX} \cdot \powufs\lambda_X(\S)$ is the
    union
    \begin{equation}\label{eq:aux}
      \set{* : * \in \S} \cup \set{(a,x) : (a,S) \in \S, x \in S } \cup \set{\braket a x :
        \braket a S \in \S, x \in S};
    \end{equation}
    the first set means that $*$ is contained in the union iff
    $* \in \S$. This is mapped by $\eps_X$ to the triple
    \begin{equation}\label{eq:aux-1.5}
      (b, a\mapsto S_a, S_{\scriptnew a}) \in G\powufs X,
    \end{equation}
    where $b = 1$ iff $* \in \S$, $S_a = \set{x : x \in S, (a,S) \in
      \S}$ and $\braket a S_{\scriptnew a} = \set {x : x \in S, \braket a
      S \in \S}$ with $a$ fresh for the right-hand set in~\eqref{eq:aux}
    above. Following the lower path in the left-hand diagram
    from~\eqref{diag:eps} we have by the definition of $\eps$ that
    \begin{equation}\label{eq:aux-2}
      \eps_{\powufs X}(\S) = (b, a \mapsto \S_a, \S_{\scriptnew a}),
    \end{equation}
    where $b = 1$ iff $* \in \S$, $\S_a = \set{S : (a,S) \in \S}$ and
    $\S_{\scriptnew a} = \braket a \set{S: \braket a S \in \S}$. It is
    clear that $G\mu_X$ maps this to the triple in~\eqref{eq:aux-1.5}.%
    \smnote{@all: Mir scheint, das hier noch was mit freshness von $a$
      argumentiert werden muss; bitte prüft das nochmal genau nach.}
    
  \item We verify the right-hand law in~\eqref{diag:eps}. Consider
    the product projections
    \[
      \begin{tikzcd}
        &
        2 \times X^\names \times [\names]X
        \ar{ld}[swap]{p_0}
        \ar{d}{p_1}
        \ar{rd}{p_2}
        \\
        2 & X^\names & {[\names] X}
      \end{tikzcd}
    \]
    The distributive law
    \[
      \rho_X\colon \powufs (2 \times X^\names \times [\names] X)
      \to 2 \times (\powufs X)^\names \times [\names] (\powufs X)
    \]
    corresponding to the canonical lifting $\hatG$ is given by 
    \[
      \rho_X(S) = (b, a \mapsto S_a, S_{\scriptnew a}),
    \]
    where $b = 1$ iff $1\in p_0[S]$, $S_a = \set{f(a) : f \in p_1[S]}$
    and $S_{\scriptnew a} = \braket a \set{s : \braket a s \in p_2[S]}$.
    This can be extracted from \autoref{E:lift} and the proof of
    \autoref{P:lift-abst}.

    Now given $\S \in TTFX = \powufs\powufs(1 + \names \times X +
    [\names] X)$, then following the upper path of the desired diagram
    we obtain
    \begin{equation}\label{eq:aux-3}
      \eps_X \cdot \mu_{FX} (\S) = \eps_X(\textstyle\bigcup \S) = (b, a \mapsto
      \S_a, \S_{\scriptnew a}),
    \end{equation}
    where $b = 1$ iff $* \in S$ for some $S \in \S$, $\S_a = \set{x :
      (a,x) \in S \in \S}$ and $\S_{\scriptnew a} = \braket a \set{x : \braket a
      x \in S\in\S}$ for an $a$ which is fresh for $\bigcup \S$. 
    
    Now let us consider the lower path in the desired diagram: we have
    \[
      \powufs\eps_X (\S) = \set{(b_S, a \mapsto S_a, S_{\scriptnew a})
        : S \in \S} =: U
    \]
    with $b_S$, $S_a$ and $S_{\scriptnew a}$ as
    in~\autoref{R:epsX}. Applying $\rho_{\powufs X}$ to this set $U$
    we obtain
    \[
      \rho_{\powufs X}\cdot \powufs\eps_X (\S) = (b, a\mapsto U_a,
      U_{\scriptnew a}),
    \]
    where $b = 1$ iff $* \in p_0[U]$, which holds iff  $* \in \bigcup
    \S$, and we have
    \begin{align*}
      U_a &= \set{f(a) : f \in p_1[U]} = \set{S_a : S \in \S}, \\
      U_{\scriptnew a} &= \braket a \set{S :
        \braket a S \in p_2[U]} = \braket a \big\{\set{s : \braket a s
          \in S} : S \in \S\big\}.
    \end{align*}
    It is not difficult to see that $G\mu_X$ maps this triple to the
    one in~\eqref{eq:aux-3}.%
    \smnote{Auch hier versteckt sich noch ein fehlendes freshness
      Argument, wenn man die $S_{\scriptnew s}$ expandiert; bitte nachprüfen!}
    This completes the proof. 
    \qed
  \end{enumerate}
\end{proof}

\subsection*{Proof of \autoref{L:incl}}

\begin{rem}\label{R:free-ext}
  Note that for every equivariant map $f\colon X \to \powufs Y$ the
  free extension $f^\sharp\colon \powufs X \to \powufs Y$ is given by 
  $f^\sharp(S) = \bigcup_{s \in S} f(s)$. 
\end{rem}
\begin{proof}
  To see this it suffices to prove
  that for the functor $E\colon \Coalg \barF \to \Coalg \hatG$ from
  \autoref{R:eps}\ref{R:eps:2} the structure of the coalgebra
  \[
    E\big(\mu F \xra{J\ini^{-1}} \powufs(F(\barstr))\big)
    =
    \big(\powufs(\barstr)
    \xra{(\eps_{\bar\names^*/{=_\alpha}} \cdot J\ini^{-1})^\sharp}
    G(\powufs(\barstr))\big)
  \]
  acts like like the coalgebra structure on $\nu G = \barlang$ in~\autoref{prop:ter}.
  From~\eqref{eq:iniF}, \autoref{C:ini}, and the definition of
  $J\colon \Nom \to \Kl\powufs$ we see that
  \[
    J\ini^{-1}([w]_\alpha) =
    \begin{cases}
      \set{*} & \text{if $w = \eps$}, \\
      \set{(a, [v]_\alpha)} & \text{if $w = av$}, \\
      \set{\braket a [v]_\alpha} & \text{if $w = \newletter a v$}.
    \end{cases}
  \]
  We now compose this map with the component of
  $\eps\colon \powufs F \to G \powufs$ in~\autoref{R:epsX} for
  $X = \barstr$, and then freely extend from $\mu F = \barstr$ to
  $\powufs(\mu F) = \barlang$ using $(-)^\sharp$
  (\autoref{R:free-ext}). This clearly yields the desired
  result.
  \qed
\end{proof}

\end{document}
